\newtheorem{Theorem}{Theorem}[part]
\newtheorem{Definition}{Definition}[part]
\newtheorem{Proposition}{Proposition}[part]
\newtheorem{Assumption}{Assumption}[part]
\newtheorem{Lemma}{Lemma}[part]
\newtheorem{Remark}{Remark}[part]
\newcommand{\nc}{\newcommand}
\nc{\esssup}{\mathop{\mathrm{ess\,sup}}}
\nc{\essinf}{\mathop{\mathrm{ess\,inf}}}
\nc{\argmax}{\mathop{\mathrm{arg\,max}}}
\def \P{\mathbb{P}}
\def \R{\mathbb{R}}
\def \E{\mathbb{E}}
\def \F{\mathbb{F}}
\def \G{\mathbb{G}}
\def \Q{\mathbb{Q}}
\def \1{\mathds{1}}
\def \Ac{{\cal A}}
\def \Fc{{\cal F}}
\def \Gc{{\cal G}}
\def \Hc{{\cal H}}
\def \Mc{{\cal M}}
\def \l({{\left (}}
\def \r){{\right )}}
\def \l[{{\left [}}
\def \r]({{\right ]}}
\newcommand{\MBFigure}[6]{
$\left. \right.$ \\
\refstepcounter{figure}
\addcontentsline{lof}{figure}{\numberline{\thefigure}{\ignorespaces #5}}
\begin{center}
\begin{minipage}{#1cm}
\centerline{\includegraphics[width=#2cm,angle=#3]{#4}}
\begin{center}
\upshape{F\textsc{ig} \normal
\end{center}
size{\thefigure}. $-$} #5
\end{center}
\label{#6}
\end{minipage}
\end{center}
$\left. \right.$ \\}
\begin{document}

\title{Optimization problem and mean variance hedging on defaultable claims.}

\author{St\'ephane GOUTTE \thanks{Laboratoire de Probabilit\'es et Mod\`eles Al\'eatoires,
CNRS, UMR 7599, Universit{\'e}s Paris 7 Diderot.}   \thanks{Supported by the FUI project $R=MC^2$.
Mail: goutte@math.univ-paris-diderot.fr}
 {\sc,}\ Armand NGOUPEYOU $^*$\thanks{Supported by ALMA Research. Mail: armand.ngoupeyou@univ-paris-diderot.fr}
}

\maketitle
\begin{abstract}
We study the pricing and the hedging of claim $\psi$ which depends on the default times of two firms A and B. In fact, we assume that, in the market, we can not buy or sell any defaultable bond of the firm B but we can only trade defaultable bond of the firm A. 
 Our aim is then to find the best price and hedging of $\psi$ using only bond of the firm A. Hence, we solve this problem in two cases: firstly in a Markov framework using indifference price and solving a system of Hamilton-Jacobi-Bellman equations, secondly, in a more general framework, using the mean variance hedging approach and solving backward stochastic differential equations (BSDE).  
\end{abstract}

\vspace{0.5cm}
\textbf{Keywords} Quadratic backward stochastic differential equations; Hamilton-Jacobi-Bellman; Mean variance hedging; Dynamic programming principle; Default and Credit risk.\\

\textbf{MSC Classification (2010):} 60G48 60H10 91G40 49L20 \\
\section*{Introduction}
Models for pricing and hedging defaultable claim have generated a large debates between academics and practitioners during the last subprime crisis.  The challenge is to model the expected losses of derivatives portfolio by taking into account the counterparties 
 defaults. Indeed, they have been affected by the crisis and their agreement on the derivatives contracts can potentially vanish. In the literature, models for pricing defaultable securities have been initiated by Merton \cite{Mer74}. His approach consists of explicitly linking the default risk of a firm to its value. This model is a good issue to understand the default risk. However, it is less useful in practical applications since it is too difficult to capture the dynamics of the firm's value which depends on many macroeconomics factors. In response of these difficulties, Duffie and Singleton \cite{ DuSin03} introduced the reduced form modeling which has been followed by Madan and Unal \cite{MaUn98}, Jeanblanc and Rutkowski \cite{ JeRu02} and others. In this approach, the main tool is the "default intensity process", which describes in short terms the instantaneous probability of default. This process combined with the recovery rate of the firm represents the main tools necessary to manage the default risk. However, we should manage the default risk considering the financial market as a network where every default can affect another one and the propagation spreads as far as the connections exist. In the literature, to deal with this correlation risk, the most popular approach is the copula. This approach consists to define the joint distribution of the firms on the financial network with respect to the marginal distribution of each firm. In static framework \footnote{The framework where we don't consider the evolution of the survey probability given a filtration.}, Li \cite{Li00} was the first to develop this approach to model the joint distribution of the default times. But, all computations are done without considering the evolution of the survey probability given the available information. Thus, we can not describe the dynamics of the derivatives portfolio in this framework. In response of these limits on the static copula approach, El Karoui, Jeanblanc and Ying developped a conditional density approach \cite{ElkJeJia09}. An important point, in this framework, is that given this density, we can compute explicitly the default intensity processes of firms in the financial market considered. We will follow this approach and work without losing any generality in the explicit case where the financial network is defined only with two firms denoted by A and B.
The intensity process jumps when any default occurs. This jump impacts the default of the firm and makes some correlation between them. We assume that we can not buy or sell any defaultable bond of the firm B but we can trade a defaultable bond of the firm A. We will consider two different cases for pricing and hedging a general defaultable claims $\psi$: the indifference pricing in Markov framework and the Mean-Variance hedging approach for the general case. 

In the first case, we will work in a Markov framework. Our aim will be to find, using the correlation between the two firms, the indifference price of any contingent claim given the risk aversion. This risk aversion will be defined by an exponential utility function. We will express the indifference pricing as an optimization problem (see El Karoui and Rouge \cite{ElkRou00}) and we will use the Kramkov and Schachermayer \cite{KS} dual approach. Then solving this dual problem, we will find the solution of the indifference price. Moreover, the characterization of the optimal probability for the dual optimization problem will be solved by Hamilton-Jacobi-Bellman (HJB) equations since the defaultable bond price will be assumed to be a Markov process in this framework. We will also find an explicit formula for the optimal strategy. 
\par In a second case, we will be interested in a hedging problem using the Mean-Variance approach. We will assume that we work in a general setting (not necessarily Markov), then we will not be able to others use the HJB equations to characterize the corresponding value function. Hence, we will adopt the Mean Variance approach which has been introduced by Schweizer in \cite{Sch92} and generalized by many (\cite{Sch96}, \cite{Pham1}, \cite{Pham2}, \cite{DMSSS97}, \cite{Arai05}, \cite{LimA04}, \cite{Goutte}). Most of theses papers use martingales techniques and an important quantity, in this context, is the Variance Optimal Martingale Measure (VOM). The VOM, $\mathbb{\bar P}$, is the solution of the dual problem of minimizing the $L^2$-norm of the density $d\mathbb{Q}/d\mathbb{P}$, over all (signed) local martingale measures $\mathbb{Q}$ for the defaultable bond price of the firm A. If we consider the case of no jump dues to default, then the bond price process of the firm A is continuous. In this case, Delbaen and Schachermayer in \cite{DS96} proved the existence of an equivalent VOM $\mathbb{\bar P}$ with respect to $\mathbb{P}$. Moreover the price of any contingent claim $\psi$ is given by $\mathbb{E}^{\mathbb{\bar P}}(\psi)$. In Laurent and Pham \cite{Pham2}, they found an explicit characterization of the variance optimal martingale measure in terms of the value function of a suitable stochastic control problem. In the discontinuous case, when the so-called Mean-Variance Trade-off process (MVT) is deterministic, Arai \cite{Arai05} proved the same results. Since we will work in discontinuous case and since in our case the Mean variance trade-off process is not deterministic (due to the stochastic default intensity process), we will not be able to apply the standards results. Hence our work will be first to characterize the value process of the Mean-Variance problem and then to make some links with the existence and the characterization of the VOM. However, we will not really need to prove and assume this existence to solve our problem. Indeed, we will solve a system of quadratic Backward Stochastic Differential Equations (BSDE) and we will characterize the solution of the problem using BSDE's solutions. The main contribution in this part will be the explicit characterization of the BSDE's solutions without using the existence of the VOM. We will obtain an explicit representation of each coefficients of quadratic backward stochastic differential equations with respect to the parameters asset of our model. In particular, the main BSDE coefficient will follow a quadratic growth and its solution will be found in a constrained space. In a particular discontinuous filtration framework (where the asset parameters do not depend on the filtration generated by the jump), Lim \cite{LimA09} have reduced this constrained quadratic BSDE with jumps to a constrained quadratic BSDE without jump and solved the corresponding BSDE. In the discontinuous filtration due to defaults events, we will can not do the same assumption since the intensity processes depend on the jumps (the default events). Hence, using Kharroubi and Lim \cite{KhaLim11} technics, we will split the BSDE's with jumps into many continuous BSDEs with quadratic growth and we will conclude the existence of the solution using the standard results of Kobylanski \cite{Kob00}. 
\par\medskip
Hence, the paper is structured as follow, in a first section, we will give some notations and present our model with some results relative to credit risk modeling. Then, in a second part, we will study the case of pricing and hedging defaultable contingent claim in a Markov framework using indifference pricing. Then in the last section, we will study the pricing and hedging problems in a more general framework (not Markov) using mean variance hedging approach and solving a system of quadratic BSDEs.

\section{The defaultable model} 
In the sequel, we will work in the same model construction as in Bielecki and al. in \cite{BookMonique} chapter 4.
Let $T>0$ be a fixed maturity time and denote by $(\Omega,\F:=(\Fc_t)_{[0,T]},\P)$ an underlying probability space. The filtration $\F$ is generated by a one dimensional Brownian motion $\widetilde W$. Let $\tau^A$ and $\tau^B$ be the two default times of two firms A and B. Let define, for all $t\in [0,T]$:
\begin{equation}
H^A_t=1_{\{\tau^{A}\leq t\}} \quad \textrm{and }\quad H^B_t=1_{\{\tau^{B}\leq t\}}.
\end{equation}
  We define now some useful filtrations and definitions:
 \begin{equation*}
 {\cal G}^A_t={\cal F}_t\vee {\cal H}^B_t,\quad  \quad {\cal G}^B_t={\cal F}_t\vee {\cal H}^A_t \quad and \quad \Gc_t=\Fc_t \vee \Hc^A_t \vee \Hc^B_t,
\end{equation*}
 where $\Hc^A$ (resp. $\Hc^B$) is the natural filtration generated by $H^A$ (resp. $H^B$). We will denote by $\G:=\left(\Gc_t\right)_{t\in [0,T]}$, $\G^A:=\left(\Gc^A_t\right)_{t\in [0,T]}$ and $\G^B:=\left(\Gc^B_t\right)_{t\in [0,T]}$.
 \begin{Definition}[Initial time] Let  $\displaystyle \eta $ be a positive finite measure on $\displaystyle \R^2$. The  random times $\tau^A$ and $\tau^B$  are called initial times if,  for each $ t \in [0,T]$, their joint conditional law given $\displaystyle {\cal F}_t $ is absolutely continuous with respect to $\displaystyle \eta$.  Therefore,  there exists a  positive  family $\left(g_t (y)  \right)_{t \in [0, T]}$ of $\mathbb{F}$-martingales such that
 \begin{equation}\label{conditionnnel}
G_t(\theta^A,\theta^B)=\P(\tau^A>\theta^A,\tau^B>\theta^B\vert {\cal F}_t)=\int_{\theta^A}^{+\infty}\int_{\theta^B}^{+\infty} g_t(y_1,y_2)\eta(dy_1,dy_2),
\end{equation}
for each $ \theta^A,\theta^B \in \mathbb{R}^+ $ and $ t \in [0,T]$.
\end{Definition}
Regarding this definition, we make the following assumptions:
\begin{Assumption}\label{Hyptimes}( Properties of the default times)
\begin{itemize}
\item Processes $H^A$ and $H^B$ have no common jump: $\P\left(\tau_A=\tau_B\right)=0$. 
\item The default times $\tau_A$ and $\tau_b$ are initial times.
\end{itemize} 
\end{Assumption}
\noindent Hence, point 2. of the previous Assumption implies that the default times of firm $A$ and $B$ are correlated regarding our joint probability density $g_t$ (appearing in \eqref{conditionnnel}).
We now give a representation Theorem of our defaultable model.
\begin{Theorem} \label{ThmReprez}(Representation Theorem) Under Assumption \ref{Hyptimes}, for $i\in \{A,B\}$, there exists a positive $\mathbb{G}$-adpated process $\lambda^i$, called the $\mathbb{P}$-intensity of $H^i$, such that the process $M^i$ defined by
$$M^i_t=H^i_t-\int_0^t \lambda^i_s ds,$$
is a $\mathbb{G}$-martingale. Moreover, any local martingale $\zeta={(\zeta_t)}_{t\ge 0}$ admits the following decomposition: $\P$-a.s,
\begin{equation}\label{representation}
 \zeta_t=\zeta_0+\int_0^t  Z_s dW_s+
\int_0^t U^A_s dM^A_s+\int_0^t U^B_s dM^B_s,  \quad \forall \,  t\ge 0,
\end{equation}
where $Z, U^A$ and $U^B$ are $\G$-predictable processes and $W$ is the martingale part of the $\mathbb{G}$-semimartingale $\widetilde W$ in the enlarged filtration (see \cite{cam} for more details about the progressive enlargement of filtration and the characterization of the decomposition of any $\mathbb{F}$-semimartingale in the enlarged filtration $\mathbb{G}$). 
\end{Theorem}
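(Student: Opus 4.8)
The plan is to treat the two assertions in turn, reading off the intensities from the density hypothesis and then erecting the predictable representation on top of the Brownian representation available in $\mathbb{F}$.

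For the existence of the intensities I would use the initial-time assumption directly. Since, by Assumption \ref{Hyptimes}, the joint conditional law of $(\tau^A,\tau^B)$ given $\mathcal{F}_t$ is absolutely continuous with respect to $\eta$ with density $g_t$, the relevant $\mathbb{G}$-conditional survival probabilities are obtained by integrating $g_t$ over the appropriate quadrants as in \eqref{conditionnnel}, and absolute continuity in the time variable forces the Doob--Meyer compensator of each submartingale $H^i$ to be absolutely continuous with respect to $ds$. Applying the classical key lemma (the projection formula expressing $\mathbb{G}$-compensators through $\mathbb{F}$-conditional densities, as in the density approach of \cite{ElkJeJia09} and Chapter~4 of \cite{BookMonique}) then yields an explicit positive $\mathbb{G}$-adapted process $\lambda^i$ for which $M^i=H^i-\int_0^\cdot \lambda^i_s\,ds$ is a $\mathbb{G}$-martingale. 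The no-common-jump part of Assumption \ref{Hyptimes} guarantees that $M^A$ and $M^B$ never jump together, so $[M^A,M^B]=0$.

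For the representation formula the first step is to identify $W$. Under the density hypothesis, Jacod's criterion ensures that every $\mathbb{F}$-martingale, and in particular $\widetilde W$, remains a $\mathbb{G}$-semimartingale; its continuous $\mathbb{G}$-martingale part $W$ satisfies $\langle W\rangle_t=t$, hence is a $\mathbb{G}$-Brownian motion, with a drift correction expressed through the logarithmic derivative of $g$ (precisely the decomposition recalled in \cite{cam}). Being continuous, $W$ is strongly orthogonal to the pure-jump martingales $M^A,M^B$. The core of the theorem, and the step I expect to be the main obstacle, is the predictable representation property itself, namely that the stable subspace generated by $(W,M^A,M^B)$ exhausts all $\mathbb{G}$-local martingales. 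The strategy is to reduce to the $\mathbb{F}$-PRT: I would first establish a representation on the initially enlarged filtration $\mathbb{F}\vee\sigma(\tau^A,\tau^B)$, where conditioning on the pair of default times turns the Brownian $\mathbb{F}$-PRT into a representation with respect to $W$ alone, and then project back onto the progressive enlargement $\mathbb{G}$, the passage from initial to progressive information producing exactly the two jump terms $\int U^A\,dM^A$ and $\int U^B\,dM^B$. Equivalently, one may argue by orthogonality, checking that any $\mathbb{G}$-local martingale strongly orthogonal to $W$, $M^A$ and $M^B$ is constant: orthogonality to $W$ kills the continuous part via the lifted $\mathbb{F}$-PRT, while orthogonality to $M^A,M^B$ kills the jumps, whose only totally inaccessible occurrences are at $\tau^A,\tau^B$. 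The delicate point is that the intensities $\lambda^A,\lambda^B$ depend on both defaults through $g_t$, so the four regimes delimited by the default indicators are genuinely coupled; it is the no-common-jump assumption that keeps the jump contributions separable and lets the two default martingales carry distinct predictable integrands.

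Finally, once completeness is in hand, the integrands are pinned down in the usual way: $Z$ from the continuous bracket via $d\langle \zeta,W\rangle=Z\,dt$, and each $U^i$ from the jump of $\zeta$ at the corresponding default, $U^i_{\tau^i}=\Delta\zeta_{\tau^i}$, the relation $[M^A,M^B]=0$ decoupling the two jump terms. Predictability of $Z,U^A,U^B$ follows from taking predictable projections, which yields the decomposition \eqref{representation}.
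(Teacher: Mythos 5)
Your proposal is correct and follows essentially the same route as the paper, which in fact gives no details at all: it simply observes that $\lambda^A,\lambda^B$ are explicit from the conditional density $G$ of the initial times and then cites Proposition 1.29 of Zargari's thesis \cite{Ben11}, where Kusuoka's representation argument is adapted to initial times. Your sketch (intensities via the key lemma under the density hypothesis, identification of the $\mathbb{G}$-Brownian part of $\widetilde W$ by Jacod's criterion, and reduction of the $\mathbb{G}$-predictable representation to the $\mathbb{F}$-PRT through enlargement and projection) is precisely the content of that cited proof, just spelled out rather than delegated to the reference.
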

\begin{proof}
The processes $\lambda^A$ and $\lambda^B$ are given explicitly since we assume that $\tau^A$ and $\tau^B$ are initial times and given our conditional law $G$. Moreover in Proposition 1.29, p54 in \cite{Ben11}, the author follows the proof of the representation Theorem of Kusuoka (representation theorem when the default times are independent of the filtration $\mathbb{F}$) to construct the proof when default times are initial. 
\end{proof}
\subsection{Dynamic of the Bond}
In our model, the traded asset will be the defaultable bond $D^A$ of the firm A. Using the decomposition \eqref{representation}, we represent the dynamics of this defaultable bond in the enlarged filtration $\G$ as in Corollary 5.3.2 of \cite{BookMonique}:
\begin{equation}\label{bondA}
{dD^A_t\over D^A_{t^-}} =\mu_tdt+\sigma^A_t dM^A_t+\sigma^B_t dM^B_t+\sigma_t dW_t,
\end{equation}
where $\mu, \sigma^A, \sigma^B$ and $\sigma$  are $\G$-predictable bounded processes. 
Therefore, given an initial wealth $x\geq 0$, if we assume that investors follow an admissible strategies $\pi$, which is represented by a set ${\cal A}$ of predictable processes $\pi$ such that 
\begin{equation}\label{1.1}
\E\left[\int_0^T \pi^2_s ds\right] <+\infty.
\end{equation}
Then we can define the dynamics of the wealth process, started with an initial wealth $x$ at time $t=0$ and following a strategy $\pi$, $X^{x,\pi}$ based on the trading asset $D^A$ by 
\begin{equation}\label{EqDynWea}
d X^{x,\pi}_t=\pi_t{dD^A_t\over D^A_{t^-}}=\pi_t\left[\mu_tdt+\sigma^A_t dM^A_t+ \sigma ^B_t dM^B_t+\sigma_t dW_t\right].
\end{equation}
Note that since all the coefficients in the dynamics of the wealth process are bounded, then for any $\pi \in {\mathcal A}$ we have that \eqref{1.1} implies:
$$\mathbb{E}\left[\sup_{0\le t\le T}{|X^{x,\pi}_t|}^2\right]<+\infty.$$
\subsection{The Defaultable claim}
We now introduce the concept of defaultable claim and give some explicit examples.
\begin{Definition}\label{DefClaim}
A generic defaultable claim $\psi$ with maturity $T>0$ on two firms A and B is defined as a vector 
$$\psi:=(X^A,X^B,Z^A,Z^B,\tau^A,\tau^B)$$
 with maturity T such that:
\begin{itemize}
\item $\tau^i$, $i\in \{A,B\}$ is the default time specifying the random time of default of the firm $i$ and thus also the default events $\{\tau^i \leq t\}$ for every $t\in [0,T]$. It is always assumed that $\tau^i$ is strictly positive with probability 1.
\item $X^A$ is the promised payoff which represents the random payoff received by the owner of the claim $\psi$ at time T, if there was no default of firm A prior to or at time T.
\item $X^B$ is the promised payoff which represents the random payoff received by the owner of the claim $\psi$ at time T, if there was no default of firm B prior to or at time T.
\item $Z^i$, $i\in \{A,B\}$is the recovery process which specifies the recovery payoff $Z_{\tau^i}$ received by the owner of a claim at time of default of the firm i, provided that the default occurs prior to or at maturity date T.
\end{itemize}
We can introduce now the payoff at time $T$ of this defaultable claim, which represents all cash flows associated with $(X^A,X^B,Z^A,Z^B,\tau^A,\tau^B)$. We will use also the notation $\psi$ for this payoff. Formally, the payoff process $\psi$ is defined through the formula by
\begin{eqnarray}\label{EqDiv}
\psi &=&X^A1_{\{\tau^A> T\}}+X^B1_{\{\tau^B> T\}}+\int_0^TZ^A_sdH^A_s+\int_0^TZ^B_sdH^B_s.
\end{eqnarray}
\end{Definition}
As an example, we can have a defaultable claim which only gives a terminal payoff of $H^1$ if no default occurs before time T. Hence, we will not receive money if one of the firms makes default. So our defaultable claim is given by
$$
\psi=H^1_{\{\tau^A \vee \tau^B >T\}}.
$$
We can also have a defaultable claim which gives an amount of money with respect to the default time of the firm B and gives a recovery amount $H^3$ if the firm A makes default
$$
\psi=H^1_{\{\tau^B>T\}}+H^2_{\{\tau^B\leq T\}}+\int_0^TH^3dH^A_s.
$$
\section{Hedging defaultable claim in Markov framework}\label{SectionHedging}
Let consider $\psi\in \Gc_T$ a bounded defaultable claim as defined in Definition \ref{DefClaim}, which depends on the default times $\tau^A$ of the firm A and $\tau^B$ of the firm B. Our aim is to find the best hedging and pricing of $\psi$ with respect to these defaults times.
\begin{Assumption}\label{HypUtili}
We assume that  $\mu, \sigma^A, \sigma^B,\sigma$  and the intensity processes $\lambda^A,\lambda^B$ are deterministic bounded functions of time, $H^A$ and $H^B$.
\end{Assumption}
\begin{Remark}\label{HypMarkov}
Under Assumption \ref{HypUtili}, we have that $(D^A, H^A ,H^B)$ is a Markov process.
\end{Remark}
\noindent We assume that the risk aversion of the investors is given by an exponential utility function $U$ with parameter $\delta$, given by
$$
U(x)=-\exp(-\delta x).
$$
Therefore, to define the indifference price or the hedging of $\psi$, we should solve the following equation:
$$
u^\psi(x+p)=u^0(x),
$$
where functions $u^\psi$ and $u^0$ are defined by:
\begin{equation}\label{optipb}
u^\psi(x)=\sup_{\pi \in {\cal A}}\mathbb{E}\left[-\exp(-\delta(X^{x,\pi}_T-\psi))\right] \quad \textrm{and} \quad u^0(x)=\sup_{\pi \in {\cal A}}\mathbb{E}\left[-\exp(-\delta X^{x,\pi}_T)\right].
\end{equation}
\subsection{ The dual optimization formulation}
To deal with the problems $\eqref{optipb}$, we use the duality theory developed by Kramkov and Schachermayer in \cite{KS}. In fact this theory allows us to find the optimal wealth at the horizon time T and the optimal risk-neutral probability $\Q^*$. 
Let recall now some results about the dual theory.
\begin{Theorem}\label{DeSch}[Kramkov and Schachermayer, Theorem 2.1 of \cite{KS}]\\
 Let $U$ be an utility function which satisfies the standards assumptions and consider the optimization problem: $ u(x)=\sup_{\pi \in {\cal A}} \mathbb{E}\left[U(X^{x,\pi}_T)\right]$,
then the dual function of $u$ defined by:
\begin{equation*}
v(y)=\sup_{x>0}\{u(x)-xy\},\quad u(x)=\inf_{y>0}\{v(y)+yx\}
\end{equation*}
is given by 
\begin{equation}\label{EqDelta}
v(y)=\inf_{\mathbb{Q} \in \mathcal{M}^e }\mathbb{E} \left(V\left[y {d\Q\over d\mathbb{P}}\right]\right),
\end{equation}
 \noindent where $V$ represents the dual function of $U$ and $\mathcal{M}^e$ represents the set of all risk-neutral probability measures.\\
Moreover, there exists an optimal martingale measure $\mathbb{Q}^*$ which solves the dual problem and  we have that the optimal wealth at time $T$ is given by:
$$ X^{x,\pi^*}_T=I\left[\nu Z^{\mathbb{Q}^*}_T\right], \hbox{ where } \nu \hbox{ is defined s.t.} \quad \mathbb{E}^{\mathbb{Q}^*}\left[ X^{x,\pi^*}_T\right]=x.$$
\noindent The function $I$ represents the inverse function of $U'$ and $Z^{\mathbb{Q}^*}_T$ represents the Radon Nikodym density of $\mathbb{Q}^*$ with respect to $\P$ on $\mathcal {G}_T$.
\end{Theorem}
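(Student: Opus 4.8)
The plan is to deduce this statement from the abstract convex-duality theory of Kramkov and Schachermayer, since $U$ is assumed to satisfy the standard assumptions (strictly concave, strictly increasing, the Inada conditions at $0$ and $+\infty$, and reasonable asymptotic elasticity). First I would introduce the convex conjugate of the utility, $V(y)=\sup_{x>0}\{U(x)-xy\}$, which is the Legendre--Fenchel transform of $U$; under the stated hypotheses $V$ is strictly convex, decreasing and differentiable, and the inverse marginal utility $I:=(U')^{-1}$ satisfies $I=-V'$. This fixes the functional form that will appear in the optimal terminal wealth.

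Second, I would establish the conjugacy between the primal value $u$ and the dual value $v$. The idea is to describe the cone of attainable terminal wealths and to pass to its polar set, which is parametrized by the Radon--Nikodym densities $Z^{\Q}_T=d\Q/d\P$ of the risk-neutral measures $\Q\in\Mc^e$. One then shows the pair of relations $v(y)=\sup_{x>0}\{u(x)-xy\}$ and $u(x)=\inf_{y>0}\{v(y)+xy\}$, together with the representation $v(y)=\inf_{\Q\in\Mc^e}\E\!\left[V\!\left(y\,d\Q/d\P\right)\right]$. The technical heart here is to enlarge the dual domain to the closed solid convex hull of the density set (via a bipolar-type argument) so that the infimum defining $v$ is actually attained, and then to verify that the minimizer does not charge the singular boundary.

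Third, I would invoke the reasonable asymptotic elasticity of $U$ to guarantee that the dual optimizer $\Q^*$ is an \emph{equivalent} (not merely absolutely continuous or signed) martingale measure and that the primal supremum is attained. The first-order optimality condition for the dual minimization, combined with the envelope identity $I=-V'$, then yields $X^{x,\pi^*}_T=I\!\left(\nu Z^{\Q^*}_T\right)$, where the Lagrange multiplier $\nu$ is pinned down by enforcing the budget constraint $\E^{\Q^*}\!\left[X^{x,\pi^*}_T\right]=x$.

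The main obstacle is precisely the attainment of the dual infimum together with the equivalence of $\Q^*$: without the asymptotic-elasticity condition a minimizing sequence of densities may lose mass to a singular part, and the candidate optimal wealth may fail to be replicable by an admissible strategy in $\Ac$. In our concrete defaultable setting this causes no difficulty, since $U(x)=-\exp(-\delta x)$ is exponential and its conjugate $V(y)=\tfrac{y}{\delta}\bigl(\log\tfrac{y}{\delta}-1\bigr)$ has all the required regularity. Thus, when applying the theorem I would only need to check that the bond market $D^A$ admits at least one equivalent martingale measure, so that $\Mc^e\neq\emptyset$; this nondegeneracy is guaranteed by the martingale representation of Theorem \ref{ThmReprez}.
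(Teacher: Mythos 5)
The paper does not prove this statement at all: it is quoted verbatim as Theorem 2.1 of Kramkov and Schachermayer \cite{KS} and used as a black box, so there is no internal proof to compare yours against. At the level of a sketch, your outline is a faithful summary of the original argument --- Legendre--Fenchel conjugation giving $I=-V'$, the bipolar-type enlargement of the dual domain so that the infimum defining $v$ is attained, and reasonable asymptotic elasticity to prevent loss of mass to a singular part and to derive $X^{x,\pi^*}_T=I(\nu Z^{\Q^*}_T)$ from the first-order conditions --- and nothing in it is wrong as a description of how \cite{KS} proceeds.

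The caveat concerns your closing paragraph, where a genuine issue is glossed over. KS Theorem 2.1 is formulated for utilities $U:(0,\infty)\to\R$ satisfying the Inada conditions, in particular $U'(0+)=+\infty$ and finiteness only on the positive half-line; the exponential utility $U(x)=-e^{-\delta x}$ is finite on all of $\R$ and violates these hypotheses, so the theorem as stated does not literally cover the case the paper needs it for. Saying that the exponential case ``causes no difficulty'' therefore skips over exactly the assumption that fails; the correct dual theory here is the entropic duality of Delbaen et al.\ \cite{DGRSSS}, which is in fact what the paper invokes in the very next proposition (after the change of measure $d\P^\psi/d\P\propto e^{\delta\psi}$, the dual problem becomes minimizing $H(\Q\vert\P)-\delta\,\E^{\Q}(\psi)$). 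A smaller point: nonemptiness of $\Mc^e$ does not follow from the representation theorem alone; it follows from the solvability of the drift condition \eqref{constraint} together with the boundedness of the coefficients, which allows one to construct an equivalent martingale density by a Girsanov/Dol\'eans-Dade argument.
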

\noindent We can apply this result to solve our optimization problem \eqref{optipb}. We will resolve only the case $\psi\not=0$. Indeed the particular case $\psi=0$ could be obtained as a particular case of these results. We obtain an analogous result of Delbaen and al. Theorem 2 in \cite{DGRSSS}, given by the following proposition:
\begin{Proposition} Let $\mathbb{Q}^*$ be the optimal risk-neutral probability which solves the dual problem 
\begin{equation}\label{dualopti}
\inf_{\mathbb{Q}\in {\mathcal M}^e}\left[ H(\Q \vert \mathbb{P})-\delta \mathbb{E}^{\mathbb{Q}} (\psi)\right]
\end{equation}
 then the optimal strategy $\pi^*\in {\mathcal A}$ solution of the optimization problem $\eqref{optipb}$ satisfies:
\begin{equation}\label{EqSolD1}
-{1\over \delta} \ln\left(Z^{\mathbb{Q}^*}_T\right)+\psi=x+{1\over \delta}\ln\left({y\over \delta}\right)+\int_0^T \pi^*_t dD^A_t,
\end{equation}
\noindent where $H(\mathbb{Q}\vert \mathbb{ P})$ represents the entropy of $\mathbb{Q}$ with respect to $\mathbb{P}$ $\left(i.e.\quad \E^\Q\left[\log\left(\frac{d\Q}{d\P}\right)\right]\right)$ and $y$ is a non negative constant. 
\end{Proposition}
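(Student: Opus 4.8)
The plan is to specialize the abstract duality of Theorem~\ref{DeSch} to the exponential utility $U(x)=-\exp(-\delta x)$, treating the claim $\psi$ as a bounded random endowment $-\psi$, and then to read off the pointwise first-order condition satisfied by the optimizer. First I would record the two objects attached to $U$. Since $U'(x)=\delta e^{-\delta x}$, its inverse is $I(y)=(U')^{-1}(y)=-\tfrac1\delta\ln(\tfrac{y}{\delta})$, while the convex conjugate, obtained by inserting $x=I(y)$ into $U(x)-xy$, is $V(y)=\tfrac{y}{\delta}\big(\ln(\tfrac{y}{\delta})-1\big)$. These two formulas are the only utility-specific inputs needed below.

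Second, I would show that the dual functional of Theorem~\ref{DeSch}, once the claim is incorporated as the endowment $-\psi$, collapses to the entropic problem \eqref{dualopti}. Writing $Z^{\Q}_T=\tfrac{d\Q}{d\P}$ for $\Q\in\mathcal{M}^e$ and using $\E[Z^{\Q}_T]=1$ together with $\E[Z^{\Q}_T\ln Z^{\Q}_T]=H(\Q\vert\P)$, a direct substitution of $V$ gives $\E[V(yZ^{\Q}_T)]=\tfrac{y}{\delta}\ln(\tfrac{y}{\delta})-\tfrac{y}{\delta}+\tfrac{y}{\delta}H(\Q\vert\P)$; the endowment contributes the extra budget term $-y\,\E^{\Q}[\psi]$, so the $\Q$-dependent part of the dual objective is $\tfrac{y}{\delta}\big(H(\Q\vert\P)-\delta\,\E^{\Q}[\psi]\big)$, whose minimizer over $\Q$ is exactly that of \eqref{dualopti}. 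Hence the measure $\Q^*$ of the statement is precisely the optimal dual measure supplied by Theorem~\ref{DeSch}, and the same theorem yields that the optimal terminal position, after delivering the claim, is $X^{x,\pi^*}_T-\psi=I\big(\nu Z^{\Q^*}_T\big)$, where $\nu>0$ is the Lagrange multiplier fixing the budget constraint $\E^{\Q^*}[X^{x,\pi^*}_T]=x$.

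Finally, I would substitute the explicit inverse $I$ and the self-financing representation $X^{x,\pi^*}_T=x+\int_0^T\pi^*_t\,dD^A_t$. From $X^{x,\pi^*}_T-\psi=-\tfrac1\delta\ln\big(\tfrac{\nu}{\delta}Z^{\Q^*}_T\big)=-\tfrac1\delta\ln Z^{\Q^*}_T-\tfrac1\delta\ln(\tfrac{\nu}{\delta})$, isolating $-\tfrac1\delta\ln Z^{\Q^*}_T+\psi$ on the left and renaming $y:=\nu$ gives precisely \eqref{EqSolD1}. I expect the genuine difficulty to lie not in this algebra but in justifying that Theorem~\ref{DeSch} applies in the present discontinuous, incomplete market: one must check that $\mathcal{M}^e\neq\emptyset$, that the boundedness of $\psi$ and of the coefficients $\mu,\sigma^A,\sigma^B,\sigma$ provides the integrability ensuring the dual value is finite and the infimum in \eqref{dualopti} is attained, and above all that the abstract optimal payoff $I(\nu Z^{\Q^*}_T)$ is genuinely attainable by a strategy $\pi^*\in\mathcal{A}$ trading the single bond $D^A$, rather than merely lying in the closure of the attainable wealths. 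Establishing this attainability, through the martingale representation of Theorem~\ref{ThmReprez} in the enlarged filtration $\G$, is the step on which the identity \eqref{EqSolD1} ultimately rests.
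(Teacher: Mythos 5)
Your argument is correct and arrives at \eqref{EqSolD1} by essentially the same algebra (the formulas for $V$, $I$, the entropy representation of $\E[V(yZ^{\Q}_T)]$, and the final substitution of the self-financing wealth), but it takes a genuinely different route to bring the claim into the duality. You invoke a random-endowment version of the dual problem, writing the optimizer directly as $X^{x,\pi^*}_T-\psi=I(\nu Z^{\Q^*}_T)$ with the endowment contributing the term $-y\,\E^{\Q}[\psi]$; note that Theorem~\ref{DeSch} as stated in the paper covers only $\sup_\pi\E[U(X^{x,\pi}_T)]$ with no claim, so this step implicitly appeals to the Cvitani\'c--Schachermayer--Wang/Owen extension rather than to the cited theorem itself. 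The paper instead exploits the translation property of the exponential utility to absorb $\psi$ into a change of measure, $d\P^\psi/d\P\propto e^{\delta\psi}$, which turns $u^\psi$ into a claim-free problem $\widetilde u^\psi$ under $\P^\psi$ to which Theorem~\ref{DeSch} applies verbatim; the claim then reappears through the ratio $Z^{\Q^*}_T/Z^{\Q^\psi}_T$ and the normalizing constant $c=\E[e^{\delta\psi}]$ cancels in the final identity. The two devices are equivalent here precisely because the utility is exponential, and the discrepancy in the budget normalization ($\E^{\Q^*}[X^{x,\pi^*}_T]=x$ versus $x-\tfrac1\delta\ln c$) is harmless since it is absorbed into the unspecified constant $y$. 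Your closing observation that the real content is the attainability of $I(\nu Z^{\Q^*}_T)$ by some $\pi^*\in\mathcal{A}$ is exactly right and is the point the paper settles by citing Owen \cite{Owen02}; to make your version complete you should either cite that endowment duality explicitly or perform the exponential change of measure so that only the stated theorem is needed.
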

\begin{proof}The proof is based on the Theorem $\ref{DeSch}$. Firstly, to match with assumptions of this theorem in the case $\psi\not=0$, we change the historical probability. Let define 
\begin{equation*}
{d\mathbb{P}^\psi\over d \mathbb{P}}{\Big \vert }_{{\mathcal G}_T}={\exp(\delta \psi)\over \mathbb{E}\left[\exp(\delta \psi)\right]}\quad \hbox{ and }\quad \widetilde u^\psi(x)=\sup_{\pi\in {\mathcal A}} \mathbb{E}^\psi\left[-\exp(-\delta X^{x,\pi}_T)\right], 
\end{equation*}
then setting $c=\mathbb{E}\left[\exp(\delta \psi)\right]$, we get
\begin{eqnarray*}
u^\psi(x)&=&\sup_{\pi \in {\cal A}}\mathbb{E}\left[-\exp(-\delta(X^{x,\pi}_T-\psi))\right]=\sup_{\pi \in {\cal A}}\mathbb{E}^{\P^\psi}\left[-c\exp(-\delta X^{x,\pi}_T)\right]\\
&=&\sup_{\pi \in {\cal A}}\mathbb{E}^{\P^\psi}\left[\exp\left(-\delta\left( -\frac{1}{\delta}\log(c)+X^{x,\pi}_T\right)\right)\right] =\sup_{\pi \in {\cal A}}\mathbb{E}^{\P^\psi}\left[\exp\left(-\delta X^{x-\frac{1}{\delta}\log(c),\pi}_T\right)\right].
\end{eqnarray*}
Hence by the definition of $\widetilde u^\psi(x)$, we obtain that $\widetilde u^\psi\left(x-{1\over \delta}\ln(c)\right)=u^\psi(x)$.
Then using the Theorem $\ref{DeSch}$, the dual function of $\widetilde u^\psi$ is given, for all $y>0$, by :
\begin{equation}\label{EqDuaV1}
\widetilde v^\psi(y)=\inf_{\Q \in \mathcal{M}^e }\mathbb{E} \left[V\left(y {d\Q \over d\mathbb{P}^\psi}\right)\right],
\end{equation}
where
$$
V(y)=\sup_{x>0}\{U(x)-xy\}=\sup_{x>0}\{-\exp(-\delta x)-xy\}={y\over \delta}\left[\ln\left({y\over \delta}\right)-1\right].
$$
Using this expression into \eqref{EqDuaV1} gives, after straightforward calculation, an explicit expression of the dual function which is given by
\begin{equation*}
\widetilde v^\psi(y)=V(y)+{y\over \delta}\ln(c)+{y\over \delta}\inf_{\mathbb{Q}\in {\mathcal M}^e}\left[ H(\mathbb{Q}\vert \mathbb{P})-\delta \mathbb{E}^{\mathbb{Q}} (\psi)\right].
\end{equation*}
Since  $\mathbb{Q}^*$ is the optimal risk-neutral probability which is solution of $\eqref{dualopti}$, 
 we deduce that the optimal wealth at time $T$ of the optimization problem $\eqref{optipb}$ is given by 
$$
X^{x,\pi^*}_T=I\left[y\frac{Z^{\mathbb{Q}^*}_T}{ Z^{\mathbb{Q}^\psi}_T}\right],
$$
 where $y$ is defined such that $\mathbb{E}^{\mathbb{Q}^*}\left[X^{x,\pi^*}_T\right]=x-{1\over \delta}\ln(c)$ and I is equal to $-V^{'}$.

Moreover from Owen in \cite{Owen02}, we can deduce that there exists an optimal strategy $\pi^* \in {\mathcal A}$ such that:
$$ 
X^{x,\pi^*}_T=I\left[y\frac{Z^{\mathbb{Q}^*}_T}{ Z^{\mathbb{Q}^\psi}_T}\right]=x-{1\over \delta}\ln(c)+\int_0^T \pi^*_t dD^A_t.
$$ 
In our case, since we work under the case of an exponential utility function with parameter $\delta$, we have that
$$
I(y):=-{1\over \delta}\ln\left({y\over \delta}\right).
$$
We finally get
$$x-{1\over \delta}\ln(c)+\int_0^T \pi^*_tdD^A_t=-{1\over \delta}\ln\left({y\over \delta}\right)-{1\over \delta}\log\left(Z^{\mathbb{Q}^*}_T\right)+\psi-{1\over \delta}\ln(c),$$
which concludes the proof of this proposition.
\end{proof}
\subsection{ Value function of the dual problem }
We are now interested in solving the dual problem. Firstly, let us consider the same problem with a different set of probability measures like ${\mathcal M}^e ={\mathcal Q}$, where ${\mathcal Q}$ represents the set of all probability measures $\mathbb{Q}\ll\mathbb{P}$. Then the value function is given by the entropy of $\psi$ with a parameter $\delta$.
But since we work in a more restricted set of probability measures  ${\mathcal M}^e$ which represents the set of all risk-neutral probabilities, the value function is then more difficult to precise. Indeed, to characterize the value function, we begin by defining the set ${\mathcal M}^e$.
Hence, let $\mathbb{Q}\in {\mathcal M}^e$ and define  $Z^{\mathbb Q}_T$ to be the Radon Nikodym density of $\mathbb{Q}$ with respect to $\mathbb{P}$. Considering the non negative  martingale process $Z^{\mathbb{Q}}_t=\mathbb{E}\left[Z^{\mathbb{Q}}_T\vert {\mathcal G}_t\right]$ and using representation Theorem \ref{ThmReprez} imply that there exists predictable processes $\rho^A$ and $\rho^B$ which take their values in ${\mathcal C}=(-1,+\infty)$  and a predictable process $\rho$ which takes its values in $\mathbb{R}$ such that for all $t \in [0,T]$,
$$dZ^{\mathbb{Q}}_t=Z^{\mathbb{Q}}_{t^-}\left( \rho^A_tdM^A_t+\rho^B_tdM^B_t+\rho_tdW_t\right).$$
Since $\mathbb{Q}$ is in $\Mc^e$, it is a risk-neutral probability, then  $ZD^A$ is a local martingale. This implies by Ito's calculus the following equation:
\begin{equation}\label{constraint}
\mu_t+\rho^A_t\sigma^A_t\lambda^A_t+\rho^B_t\sigma^B_t\lambda^B_t+\rho_t\sigma_t=0.
\end{equation}  
\begin{Remark}
We notice that the process $\rho$ depends explicitly on the values of $\rho^A$ and $\rho^B$.
\end{Remark}
\noindent Therefore using equation $\eqref{constraint}$, $\eqref{dualopti}$ can be view as find $\rho^A$ and $\rho^B$ which minimize:
\begin{equation}\label{pbdual2}
\inf_{\mathbb{Q}\in {\mathcal M}^e}\mathbb{E}^{\mathbb{Q}}\left[\ln(Z^{\mathbb{Q}}_T)-\delta \psi\right].
\end{equation} 
This is the dual problem we would like to solve. We make now an assumption on the decomposition form of our defaultable claim $\psi$.
\begin{Assumption}\label{HypClaim} The defaultable claim $\psi \in {\mathcal G}_T$ is given by 
$$
\psi=g(D^A_T)\textbf{1}_{\{\tau^B> T\}}+f(D^A_{{\tau^B}^-})\textbf{1}_{\{\tau^B\le T\}},
$$
 where g and f are two bounded continuous functions.
\end{Assumption}
\begin{Remark}
\begin{enumerate}
\item We choose to take a defaultable claim which depends only on the default time of the firm B. However, we could have been take a defaultable claim which depends on the default time of the firm A too. The calculus would have been longer but the results would have been the same.
\item Moreover, taking a defaultable claim depending only on the default time of the firm B has an economic sense. Indeed, our traded asset is the defaultable bond of the firm A, so it is justified to take payoff $g$ and $f$ function of $D^A$. Therefore if we see the firm B as an insurance company which covers the firm A, then the default of B means the counterparty default risk.
\end{enumerate}
\end{Remark}
\begin{Proposition}\label{Value}Under Assumption \ref{HypClaim}, the value function of the dual problem $\eqref{pbdual2}$ is given by:
\begin{equation}\label{Valuefunction}
V(t,D^A_t,H^A_t,H^B_t):=\inf_{\rho^A,\rho^B \in \mathcal C}\mathbb{E}^{\mathbb{Q}}\left[\int_t^Tj(s,\rho^A_s,\rho^B_s,D^A_s)ds-\delta g(D^A_T)\textbf{1}_{\{\tau^B> T\}}\Big\vert D^A_t,H^A_t,H^B_t \right],
\end{equation} 
where the function $j$ is defined by:
\begin{equation}\label{exprj}
\begin{split}
j(s,\rho^A_s,\rho^B_s,D^A_s)&=\sum_{i\in\{A,B\}}\lambda^i_s\left[(1+\rho^i_s)\ln(1+\rho^i_s)-\rho^i_s\right] -\delta(1+\rho^B_s)\lambda^B_sf(D^A_s)+{1\over 2}\rho^2_s.
\end{split}
\end{equation} 
\end{Proposition}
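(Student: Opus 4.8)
The plan is to evaluate the dual objective $\E^{\Q}[\ln(Z^{\Q}_T)-\delta\psi]$ in closed form and then recognise the integrand as the running cost $j$. The natural starting point is the stochastic-exponential representation $dZ^{\Q}_t=Z^{\Q}_{t^-}(\rho^A_t dM^A_t+\rho^B_t dM^B_t+\rho_t dW_t)$ with $Z^{\Q}_0=1$. Writing $Z^{\Q}=\mathcal{E}(N)$ with $dN=\rho^A dM^A+\rho^B dM^B+\rho\,dW$ and applying Ito's formula for the logarithm of a stochastic exponential,
\begin{equation*}
\ln(Z^{\Q}_T)=N_T-\tfrac12\langle N^c\rangle_T+\sum_{s\le T}\big(\ln(1+\Delta N_s)-\Delta N_s\big).
\end{equation*}
Here I would use that, by Assumption \ref{Hyptimes}, $H^A$ and $H^B$ never jump together, so at each jump time only one increment $\Delta N_s=\rho^A_s$ or $\rho^B_s$ is active. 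Combining the finite-variation contribution $\rho^i_s\,dH^i_s$ coming from $N$ with the jump-correction $(\ln(1+\rho^i_s)-\rho^i_s)\,dH^i_s$ collapses the two into $\ln(1+\rho^i_s)\,dH^i_s$, leaving
\begin{equation*}
\ln(Z^{\Q}_T)=\sum_{i\in\{A,B\}}\int_0^T\ln(1+\rho^i_s)\,dH^i_s-\int_0^T(\rho^A_s\lambda^A_s+\rho^B_s\lambda^B_s)\,ds+\int_0^T\rho_s\,dW_s-\tfrac12\int_0^T\rho_s^2\,ds.
\end{equation*}

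The second step is to take $\E^{\Q}$ and re-express each martingale under $\Q$. The key input is the Girsanov transform attached to $Z^{\Q}$: under $\Q$ the process $W^{\Q}_t:=W_t-\int_0^t\rho_s\,ds$ is a Brownian motion, and the $\Q$-intensity of $H^i$ is $(1+\rho^i)\lambda^i$, i.e.\ $H^i_t-\int_0^t(1+\rho^i_s)\lambda^i_s\,ds$ is a $\Q$-martingale. Substituting $dW_s=dW^{\Q}_s+\rho_s\,ds$ and replacing $dH^i_s$ by its $\Q$-compensator plus a $\Q$-martingale, and discarding the true-martingale parts, I obtain
\begin{equation*}
\E^{\Q}[\ln(Z^{\Q}_T)]=\E^{\Q}\Big[\int_0^T\sum_{i\in\{A,B\}}\lambda^i_s\big((1+\rho^i_s)\ln(1+\rho^i_s)-\rho^i_s\big)\,ds+\tfrac12\int_0^T\rho_s^2\,ds\Big],
\end{equation*}
which already reproduces the first and last groups of terms in $j$.

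The third step treats the claim. Under Assumption \ref{HypClaim} the recovery part is $f(D^A_{(\tau^B)^-})\1_{\{\tau^B\le T\}}=\int_0^T f(D^A_{s^-})\,dH^B_s$, so the same change of intensity gives $\E^{\Q}[f(D^A_{(\tau^B)^-})\1_{\{\tau^B\le T\}}]=\E^{\Q}[\int_0^T f(D^A_{s^-})(1+\rho^B_s)\lambda^B_s\,ds]$, while the survival part $g(D^A_T)\1_{\{\tau^B>T\}}$ is kept as a terminal term; the distinction between $D^A_{s^-}$ and $D^A_s$ is irrelevant under the $ds$-integral. Adding $-\delta$ times these two contributions to the previous display assembles exactly $\int_0^T j(s,\rho^A_s,\rho^B_s,D^A_s)\,ds-\delta g(D^A_T)\1_{\{\tau^B>T\}}$, which establishes the identity at $t=0$. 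Finally, since $\Mc^e$ is parametrised by $(\rho^A,\rho^B)\in\Cc^2$ through the constraint \eqref{constraint} (which fixes $\rho$), and since under Assumption \ref{HypUtili} all coefficients are deterministic functions of $(t,H^A,H^B)$ so that $(D^A,H^A,H^B)$ is Markov (Remark \ref{HypMarkov}), conditioning on $\Gc_t$ reduces to the state $(D^A_t,H^A_t,H^B_t)$ and the dynamic programming principle yields the stated value function for arbitrary $t$.

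The main obstacle I anticipate is the measure-change bookkeeping rather than any single estimate: one must correctly identify the $\Q$-intensity $(1+\rho^i)\lambda^i$ of each default and verify that the residual stochastic integrals are genuine $\Q$-martingales, so that their expectations vanish. Because $\rho$ is pinned down by \eqref{constraint} as $\rho_t=-(\mu_t+\rho^A_t\sigma^A_t\lambda^A_t+\rho^B_t\sigma^B_t\lambda^B_t)/\sigma_t$, controlling its integrability together with that of $\ln(1+\rho^i)$ near $\rho^i=-1$ on the admissible set $\Cc=(-1,+\infty)$ is where the genuine care is required; once these points are settled the algebraic collapse into $j$ is routine.
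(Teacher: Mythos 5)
Your proposal is correct and follows essentially the same route as the paper: Itô's formula for $\ln Z^{\mathbb{Q}}$ (you phrase it via $\ln\mathcal{E}(N)$, the paper writes the SDE directly, but the resulting decomposition is identical), the Girsanov change of intensity to $(1+\rho^i)\lambda^i$ and of drift for $W$, the rewriting of the recovery leg as $\int_0^T f(D^A_{s^-})\,dH^B_s$, and finally the Markov property of $(D^A,H^A,H^B)$ to pass from the $t=0$ identity to the dynamic value function. Your closing remarks on the integrability of the residual stochastic integrals and on the behaviour near $\rho^i=-1$ flag exactly the points the paper leaves implicit.
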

\begin{proof}
The proof is based on the It\^o's formula. The dynamics of $\ln(Z^{\mathbb{Q}})$ under $\mathbb{Q}$ is given by 
$$
d\ln(Z^{\mathbb{Q}}_t)=\sum_{i\in\{A,B\}}\rho^i_tdM^i_t+\left[\ln(1+\rho^i_t)-\rho^i_t\right]dH^i_t+\rho_tdW_t-{1\over 2}\rho^2_tdt.
$$
Using Girsanov theorem, the processes defined for all $i\in\{A,B\}$ by 
$$
\widetilde M^i_t=M^i_t-\int_0^t \rho^i_s\lambda^i_sds \quad and \quad  \widetilde W_t=W_t-\int_0^t \rho_s ds
$$
 are $\mathbb{Q}$-martingales. Hence, we obtain that
\begin{eqnarray*}
\hspace{-1cm}\ln(Z^{\mathbb{Q}}_T)-\delta\psi=\hspace{-0.1cm}\int_0^T \hspace{-0.3cm}\sum_{i\in\{A,B\}}\lambda^i_t[(1+\rho^i_t)\ln(1+\rho^i_t)-\rho^i_t]dt-\hspace{-0.1cm}\delta\left[\int_0^T f(D^A_{t^-})dH^B_t+g(D^A_T)(1-H^B_T)\right]+\hspace{-0.2cm}\int_0^T\hspace{-0.2cm}{1\over 2}\rho^2_t dt+\hspace{-0.15cm}M^{\mathbb{Q}}_T
\end{eqnarray*}
\noindent where $M^{\mathbb{Q}}$ is a $\mathbb{Q}$-martingale.  Then, we can rewrite the dual problem using the last expression:
\begin{equation*}
\begin{split}
\inf_{\mathbb{Q}\in {\mathcal M}^e}\mathbb{E}^{\mathbb{Q}}\left[\ln(Z^{\mathbb{Q}}_T)-\delta \psi\right]=\inf_{\rho^A,\rho^B\in \mathcal C}\mathbb{E}^{\mathbb{Q}}\left[\int_0^T j(s,\rho^A_s,\rho^B_s,D^A_s)ds-\delta(1-H^B_T)g(D^A_T)\right],
\end{split}
\end{equation*}
\noindent where $j$ is given in $\eqref{exprj}$. Since by Remark \ref{HypMarkov}, the process $(D^A,H^A,H^B)$ is a Markov process, then using the standards results of \cite{DAVIDS}, the value function of the dual optimization problem is given by:
\begin{equation*}
\hspace{-0.4cm}V(t,D^A_t,H^A_t,H^B_t)=\hspace{-0.4cm}\inf_{\rho^A,\rho^B\in \mathcal C}\hspace{-0.3cm}\mathbb{E}^{\mathbb{Q}}\left[\int_t^Tj(s,\rho^A_s,\rho^B_s,D^A_s)ds-\delta g(D^A_T)\textbf{1}_{\{\tau^B> T\}}\Big\vert D^A_t,H^A_t,H^B_t \right].
\end{equation*} 
\end{proof}
\begin{Proposition}\label{Optistrat} Let $z=(x,h^A,h^B)$ and $h=(h^A,h^B)$, then the value function of the dual optimization problem is solution of the following Hamilton-Jacobi-Bellman equation:
\begin{equation}\label{HJB}
\begin{split}
\hspace{-1cm}{\partial V\over \partial t}(t,z)+{1\over 2}{\partial V\over \partial x^2}(t,z)\sigma^2(t,z)+\hspace{-0.4cm}\inf_{\rho^A,\rho^B \in{\mathcal C}}\hspace{-0.4cm}\left\{{\mathcal L}_{\rho^A,\rho^B}V(t,z)+j(t,\rho^A_t,\rho^B_t)\right\}=0,\quad\quad V(T,z)=g(x)(1-h^B)
\end{split}
\end{equation}
\noindent where 
$${\mathcal L}_{\rho^A,\rho^B}V(t,z)= \sum_{i\in \{A,B\}} \left[ -{\partial V\over \partial z}(t,z)\sigma^i(t,z)+\left(V(t,z^i)-V(t,z)\right)\right](1+\rho^i_t)\lambda^i(t,h),$$
and $z^i=\left(x(1+\sigma^i(t,z)),h^A+\alpha^i,h^B+1-\alpha^i\right)$ where  $\alpha^A=1$ and $\alpha^B=0$. 
\noindent Moreover given the value function, the optimal strategy satisfies:
\begin{equation*}
\begin{split}
\pi^*_t=-\frac{1}{\delta}\left({\partial V\over \partial x}(t,z)+{\bar\rho_t\over D^A_{t^-}\sigma(t,z)}\right)
\end{split}
\end{equation*}
\noindent where the process $\bar \rho$ is explicitly given with the optimal control $\bar \rho^i$, $i\in\{A,B\}$, see the relation $\eqref{constraint}$.  
\end{Proposition}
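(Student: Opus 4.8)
The plan is to treat \eqref{Valuefunction} as the value function of a controlled Markov process and to derive \eqref{HJB} from the dynamic programming principle, then to identify $\pi^*$ by a martingale-representation argument built on \eqref{EqSolD1}.

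First I would establish the HJB equation. Under $\mathbb{Q}$ the triple $(D^A,H^A,H^B)$ is a controlled Markov process with controls $(\rho^A,\rho^B)$: by Girsanov the processes $\widetilde M^i_t=M^i_t-\int_0^t\rho^i_s\lambda^i_s\,ds$ and $\widetilde W$ are $\mathbb{Q}$-martingales, so under $\mathbb{Q}$ the default $H^i$ has intensity $(1+\rho^i)\lambda^i$ while $D^A$ keeps the Brownian coefficient $D^A_{t^-}\sigma$. Applying It\^o's formula to a smooth $V(t,z)$ and using the constraint \eqref{constraint} (which makes $D^A$ a $\mathbb{Q}$-local martingale) I would read off the infinitesimal generator: the continuous part contributes $\partial_t V+\tfrac12 (D^A)^2\sigma^2\,\partial_{xx}V$, while each default direction contributes a jump term $(1+\rho^i)\lambda^i\big(V(t,z^i)-V(t,z)\big)$ together with the compensator drift $-D^A\sigma^i(1+\rho^i)\lambda^i\,\partial_x V$; collecting the latter two reproduces exactly $\mathcal L_{\rho^A,\rho^B}V$. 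Since the second-order term does not depend on the control, the DPP yields $\partial_t V+\tfrac12(D^A)^2\sigma^2\partial_{xx}V+\inf_{\rho^A,\rho^B\in\mathcal C}\{\mathcal L_{\rho^A,\rho^B}V+j\}=0$, and the terminal condition follows from \eqref{Valuefunction} by setting $t=T$ and noting that $\mathbf 1_{\{\tau^B>T\}}=1-h^B$ on $\{H^B_T=h^B\}$.

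For the optimal strategy I would start from \eqref{EqSolD1} evaluated at the optimal measure $\mathbb{Q}^*$, i.e. $\int_0^T\pi^*_t\,dD^A_t=-\tfrac1\delta\ln Z^{\mathbb{Q}^*}_T+\psi-x-\tfrac1\delta\ln(y/\delta)$. Since $\mathbb{Q}^*\in\mathcal M^e$, $D^A$ is a $\mathbb{Q}^*$-martingale and hence so is $\int_0^\cdot\pi^*\,dD^A$; conditioning on $\mathcal G_t$ expresses this process through $L_t:=\mathbb{E}^{\mathbb{Q}^*}[\ln Z^{\mathbb{Q}^*}_T\mid\mathcal G_t]$ and $P_t:=\mathbb{E}^{\mathbb{Q}^*}[\psi\mid\mathcal G_t]$. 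The identity obtained in the proof of Proposition \ref{Value}, namely $\ln Z^{\mathbb{Q}^*}_T-\delta\psi=\int_0^T j\,ds-\delta g(D^A_T)(1-H^B_T)+\widetilde M_T$ with $\widetilde M$ a $\mathbb{Q}^*$-martingale whose $d\widetilde W$-coefficient is the optimal $\bar\rho$, then gives $L_t-\delta P_t=\int_0^t j\,ds+V_t+\widetilde M_t$, so that $\int_0^t\pi^*\,dD^A=-\tfrac1\delta\big(\int_0^t j\,ds+V_t+\widetilde M_t\big)+\text{const}$. Differentiating and using that by the DPP the process $V_t+\int_0^t j\,ds$ is a $\mathbb{Q}^*$-martingale whose $d\widetilde W$-coefficient equals $\partial_x V\,D^A_{t^-}\sigma$ (read off from It\^o's formula applied to $V(t,D^A_t,H^A_t,H^B_t)$), I would match the $d\widetilde W$-coefficients on both sides; by uniqueness in the representation Theorem \ref{ThmReprez} this yields $\pi^*_t D^A_{t^-}\sigma=-\tfrac1\delta(\partial_x V\,D^A_{t^-}\sigma+\bar\rho_t)$, which is the announced formula after dividing by $D^A_{t^-}\sigma$.

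The main obstacle is the regularity/verification step rather than the algebra: to run It\^o's formula and the coefficient matching one needs $V$ to be $C^{1,2}$ in $(t,x)$ and the infimum in \eqref{HJB} to be attained at interior controls $\bar\rho^A,\bar\rho^B\in\mathcal C=(-1,+\infty)$, since these define $\mathbb{Q}^*$ and $\bar\rho$. Convexity of $\rho\mapsto(1+\rho)\ln(1+\rho)-\rho$ and of $\tfrac12\rho^2$ suggests a unique interior minimizer, but one must check that the first-order conditions keep the candidate controls in $\mathcal C$ and admissible. One also has to verify that the relevant local martingales ($\int\pi^*\,dD^A$, $Z^{\mathbb{Q}^*}$, $\widetilde M$, and $V_t+\int_0^t j\,ds$) are true martingales, so that the conditioning steps are legitimate; this is precisely where the boundedness of $\mu,\sigma^A,\sigma^B,\sigma,\lambda^A,\lambda^B$ and of $f,g$ enters.
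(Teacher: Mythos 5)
Your HJB derivation follows the paper's own route: the Markov property of $(D^A,H^A,H^B)$ under the controlled measure $\Q$, the dynamic programming principle, and reading the generator off It\^o's formula with jump intensities $(1+\rho^i_t)\lambda^i$ and the continuous part $\partial_t V+\tfrac12 (D^A)^2\sigma^2\partial_{xx}V$; that part matches the paper. For the optimal strategy the two arguments share the same skeleton (both end by comparing a representation of $\ln Z^{\Q^*}_T-\delta\psi$ with \eqref{EqSolD1}), but they execute the identification differently. The paper works forward: it derives the first-order conditions \eqref{optpar} for $\bar\rho^A,\bar\rho^B$, combines them with the HJB identity \eqref{optpar2}, substitutes into the It\^o decomposition of $\ln Z^{\Q^*}_T$, and expands $V(T,D^A_T,H^A_T,H^B_T)$ so that \emph{all three} martingale components (the $d\bar W$, $dH^A$ and $dH^B$ terms) visibly recombine into the single integral $\int_0^T\left[\bar\rho_t/(D^A_{t^-}\sigma)+\partial_xV\right]dD^A_t$, from which the integrand is read off. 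You instead take the representability $\int_0^T\pi^*_t\,dD^A_t=-\tfrac1\delta\ln Z^{\Q^*}_T+\psi-\mathrm{const}$ as already guaranteed by \eqref{EqSolD1} (via Owen), pass to conditional expectations under $\Q^*$, and pin down $\pi^*$ by matching only the Brownian coefficient. This is legitimate given $\sigma\neq0$ and the prior existence of $\pi^*$, and it is shorter; what it does not exhibit, and what the paper's computation supplies, is the consistency check that the $dM^A$ and $dM^B$ coefficients of $-\tfrac1\delta\ln Z^{\Q^*}_T+\psi$ are proportional to those of $D^A$ with the \emph{same} factor $\pi^*_t$ --- that is precisely where the first-order conditions \eqref{optpar} do their work, and it is the substantive content of the identification rather than a routine verification. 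Your closing list of caveats (interior minimizers in $\mathcal C$, $C^{1,2}$ regularity of $V$, true-martingale properties) is accurate and applies equally to the paper's own argument, which leaves these points implicit.
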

\begin{proof}
From Proposition \ref{Value}, we find that the value function of the dual optimization problem is given by $\eqref{Valuefunction}$. Since $\left(D^A,H^A,H^B\right)$ is Markovian under $\P$ and that the risk neutral probability measure $\Q$ depends on the control $(\rho^A,\rho^B)$, we can apply the same method as in \cite{DAVIDS} section 3.2 and 3.3. So, using now Hamilton-Jacobi-Bellman (HJB) equation we get:
$$V(t,D^A_t,H^A_t,H^B_t)=\inf_{\rho^A,\rho^B \in {\mathcal C}}\mathbb{E}^{\mathbb Q}\left[\int_t^{t+h}j(s,\rho^A_s,\rho^B_s,D^A_s)ds+V(t+h,H^A_{t+h},H^B_{t+h})\big\vert D^A_t,H^A_t,H^B_t\right].$$
\noindent 
Then the value function solve the HJB equation $\eqref{HJB}$.\\
We are now interesting in finding the optimal strategy given the value function. Let recall that from Theorem \ref{DeSch}, the optimal risk-neutral probability and the value function exist. Let define $\bar \rho^A,\bar\rho^B$ and $\bar\rho$ be the optimal density parameters. Since $\bar \rho^A$ and $\bar \rho^B$ are optimal for the HJB equation, assuming $\sigma(t,z)\not=0$ , using first order condition, we find for $i\in \{A,B\}$:
\begin{eqnarray}\label{optpar}
\left[(V(t,z^i)\hspace{-0.1cm}-\hspace{-0.1cm}V(t,z))\hspace{-0.1cm}-\hspace{-0.1cm}x\sigma^i(t,z){\partial V\over \partial x}(t,z)\hspace{-0.1cm}+\hspace{-0.1cm}\ln(1+\bar \rho^i_t)\hspace{-0.1cm}-\hspace{-0.1cm}{\sigma^i(t,z)\over \sigma(t,z)}\bar \rho_t\right]\hspace{-0.1cm}\lambda^i(t,h)\hspace{-0.15
cm}=\hspace{-0.1cm}\delta (1-\alpha^i)f(x)\lambda^i(t,h).\hspace{0.3cm}
\end{eqnarray}
Then using the HJB equation \eqref{HJB} and the relation $\eqref{optpar}$, we obtain the following relation:
\begin{eqnarray}\label{optpar2}
-{1\over 2}\bar\rho^2_t+\sum_{i\in \{A,B\}}\bar\rho^i_t\lambda^i(t,h)=\sum_{i\in \{A,B\}}(1+\bar\rho^i_t){\sigma^i(t,z)\over\sigma(t,z)}\bar \rho_t+{1\over 2}{\partial^2 V\over\partial x^2}(t,z)x^2\sigma^2(t,z)+{\partial V\over \partial t}(t,z).\hspace{0.5cm}
\end{eqnarray}
\noindent Let recall the Ito's decomposition of the process $\ln(Z^{\mathbb{Q}^*})$:
\begin{equation*}
\ln(Z^{\mathbb{Q}^*}_T)=\int_0^T [\bar\rho_td\bar W_t+{1\over 2}\bar \rho^2_tdt]+\int_0^T \sum_{i\in \{A,B\}}\left[\ln(1+\bar\rho^i_t)dH^i_t-\bar\rho^i_t\lambda^i(t,h)\right].
\end{equation*}
\noindent Then using  equations $\eqref{optpar}$ and $\eqref{optpar2}$, we get an useful and more explicit decomposition of the process $\ln(Z^{\mathbb{Q}^*}_T)$:
\begin{equation*}
\begin{split}
\ln(Z^{\mathbb{Q}^*}_T)=&\int_0^T -{1\over 2}{\partial^2 V\over \partial x^2}(t,z_t){(D^A_{t^-})}^2\sigma^2(t,z_t)dt-\int_0^T{\partial V\over \partial t}(t,z_t)dt+\int_0^T \bar\rho_td\bar W_t\\& -\sum_{i\in \{A,B\}}\left[(V(t,z^i_t)-V(t,z_t))-D^A_{t^-}\sigma^i(t,z_t){\partial V\over \partial x}(t,z)\right]dH^i_t\\&
+\int_0^T\sum_{i\in \{A,B\}} {\sigma^i(t,z_t)\over \sigma(t,z_t)}\bar\rho_t[dH^i_t-(1+\bar \rho^i_t)\lambda^i(t,h_t)]+\int_0^T \delta f(D^A_{t^-})dH^B_t,
\end{split}
\end{equation*}                                                                                                                                            
\noindent where $z_t=(D^A_t,H^A_t,H^B_t)$ and $h_t=(H^A_t,H^B_t)$. The It\^o's decomposition of $V(T,D^A_T,H^A_T,H^B_T)$ gives:
\begin{equation*}
\begin{split}
\ln(Z^{\mathbb{Q}^*}_T)&= \int_0^T {\bar \rho_t\over \sigma(t,z_t)}\left[\sigma(t,z_t)d\bar W_t+\sum_{i\in \{A,B\}} \sigma^i(t,z_t)d\bar M^i_t\right]+\delta f(D^A_{{\tau^B}^-})\textbf{1}_{\{\tau^B\le T\}}\\&-V(T,D^A_T,H^A_T,H^B_T)+V(0,D^A_0,H^A_0,H^B_0)+\int_0^T {\partial V\over \partial x}(t,z_t)dD^A_t.
\end{split}
\end{equation*}                                                                                                                    
\\\\\noindent Since 
$$
V\left(T,D^A_T,H^A_T,H^B_T\right)=-\delta g(D^A_T)(1-H^B_T)
$$ 
and 
$$
\psi=f(D^A_{{\tau^B}^-})\textbf{1}_{\{\tau^B\le T\}}+g(D^A_T)(1-H^B_T),
$$ 
we get:  
$$
\ln(Z^{\mathbb{Q}^*}_T)-\delta \psi=V(0,D^A_0,H^A_0,H^B_0)+ \int_0^T \left[{\bar \rho_t\over D^A_{t^-}\sigma(t,z)}+{\partial V\over \partial x}(t,z_t)\right]dD^A_t.$$                                                                                                                                    
Finally, from the definition of the value function,  we have
$$
V(0,D^A_0,H^A_0,H^B_0)=\mathbb{E}^{{\mathbb Q}^*}\left[\ln(Z^{\mathbb Q^*}_T)-\delta \psi\right],
$$
and using the fact that $\mathbb{E}^{{\mathbb Q}^*}\left[X^{x,\pi^*}_T\right]=x-{1\over \delta}\ln(c)$ where $X^{x,\pi^*}_T=-{1\over \delta}\ln\left({1\over\delta} {Z^{{\mathbb Q}^*}_T\over Z^{{\mathbb P}^\psi}}\right)$ (see Theorem \ref{DeSch}), we deduce that
$$
\mathbb{E}^{{\mathbb Q}^*}\left[-{1\over \delta}\ln(Z^{{\mathbb Q}^*}_T)+\psi-{1\over \delta}\ln(c)-{1\over\delta }\ln\left({y\over \delta}\right)\right]=x-{1\over \delta}\ln(c).
$$ 
Hence, we conclude 
$$
V(0,D^A_0,H^A_0,H^B_0)=-\delta x-\ln\left(y\over \delta\right),
$$ 
and we find
\begin{equation*}
-{1\over \delta} \ln(Z^{\mathbb{Q}^*}_T)+\psi=x+{1\over \delta}\ln\left(y\over \delta\right)+ \int_0^T -{1\over \delta}\left[{\bar \rho_t\over D^A_{t^-}\sigma(t,z)}+{\partial V\over \partial x}(t,z_t)\right]dD^A_t.
\end{equation*}                                                                                                                                            
Therefore from equation $\eqref{EqSolD1}$, we obtain the expected result.
\end{proof}
  In conclusion, we found that since we can characterize  the optimal probability for the dual optimization problem using Kramkov and Schachermayer Theorem, we can characterize the HJB equation solution of our dual problem. This allows us to find the optimal strategy for the primal solution for a defaultable contingent claim $\psi$. Therefore we can find for $\psi=0$ and $\psi\not=0$ the optimal strategy and deduce the indifference price $p$ of a defaultable contingent claim solving the equation $u^\psi(x+p)=u^0(x)$.
\section{Generalization of the hedging in a general framework: Mean-Variance approach}
In this part, we assume that we work in a more general setting (not necessarily Markov), then we cannot use the HJB equation to characterize the corresponding value function. To solve our problem, we will use the Mean Variance approach. It is a well-known methodology, introduced by Schweizer in \cite{Sch92}, to manage hedging in general case. 
An important quantity in this context is the Variance Optimal Martingale Measure (VOM). The VOM, $\mathbb{\bar P}$, is the solution of the dual problem of minimizing the $L^2$-norm of the density $\frac{d\mathbb{Q}}{d\mathbb{P}}$, over all (signed) local martingale measure $\mathbb{Q}$ for $D^A$. Let recall now the Mean-Variance problem:
\begin{equation}\label{MeanVariance}
V(x)=\min_{\pi\in {\mathcal A}}\mathbb{E}\left[{(X^{x,\pi}_T-\psi)}^2\right].
\end{equation}
If we assume that $\mathbb{G}=\mathbb{F}$ (in this case we do not consider jump of default), then the process $D^A$ is continuous. In this case Delbean and Schachermayer in \cite{DS96} proved the existence of an equivalent VOM $\mathbb{\bar P}$ with respect to $\mathbb{P}$ and the fact that the price of $\psi$ is given by $\mathbb{E}^{\mathbb{\bar P}}(\psi)$. In the discontinuous case, when the so-called Mean-Variance Trade-off process (MVT) (see \cite{Sch92} for definition) is deterministic, Arai \cite{Arai05} proved the same results. Since we work in discontinuous case and since the Mean Variance Trade-off process is not more deterministic (due to the stochastic default intensity process), we cannot apply the standard results.
\begin{Remark}\label{RemHYPMOD}
Indeed, in this part we do not assume anymore that intensity processes $\lambda^A$ and $\lambda^B$ to be deterministic. We take general stochastic default intensity processes. But we assume that default times $\tau^A$ and $\tau^B$ are ordered, $\tau^A<\tau^B$ and that the ($\mathcal{H}$)-hypothesis 
 holds. A financial interpretation of this assumption could be the counterparty risk. Indeed, the firm A could be a bank (counterparty) and the firm B its company insurance which covers its default. 
\end{Remark}
So our work is firstly to characterize the value process of the Mean-Variance problem using system of BSDE's. Secondly, to make some links with the existence and the characterization of the VOM in some particular cases. Thirdly, to prove the existence of the solution of each BSDE and to give a verification Theorem. 

We begin by recalling some usual spaces:

\medskip
$\bullet $ For $s\le T$, $\mathcal{S}^\infty[s,T] $  is the Banach space of $\R$-valued cadlag
processes $X$ such that there exists a  constant $C$ satisfying
$$ \|X\|_{\mathcal{S}^\infty[s,T]}:=   \sup_{t \in[s, T]}|X_t|   \, \leq C \, <  + \infty.$$
$\bullet$  For $s\le T$, $\mathcal{H}^{2}[s,T]$   is the Hilbert space of $\mathbb{R}$-valued
predictable processes $Z$ such that
$$ \|Z\|_{\mathcal{H}^{2}[s,T]}:= \left(\mathbb{E} \, \Big[\int_s^T |Z_t|^2 \, dt\Big] \right)^{\frac{1}{2}} \, < \,   +\infty .
$$
$\bullet$  $\rm{BMO}$ is the space of $\mathbb{G}$-adapted matingale such that for any stopping times $0\le \sigma\le \tau\le T$, there exists a non negative constant $c>0$ such that:
$$\mathbb{E}\left[ [M]_{\tau}- [M]_{\sigma^-}\vert {\mathcal G}_{\sigma}\right]\le c,$$
then $M=Z.W \in {\rm BMO}$, to simplify notation we write $Z\in {\rm BMO}$.
\begin{Definition}[$R_2(\mathbb{P})$ condition]  Let $Z$ be a uniformtly integrable martingale with $Z_0=1$ and $Z_T>0$, we say that $Z$ satisfies the reverse H\"older condition $R_2(\mathbb{P})$ under $\mathbb{P}$ if there exists a constant $c>0$ such that for every stopping times $\sigma$, we have:
$$\mathbb{E}\left[{\left({Z^2_T\over Z^2_\sigma}\right)}^2\vert {\mathcal G}_\sigma\right]\le c.$$ 
\end{Definition}
\subsection{Characterization of the optimal cost via BSDE}
On our problem of \textit{mean-variance hedging (MVH)} \eqref{MeanVariance}, the performance of an admissible trading strategy $\pi \in \Ac$  is measured over the finite horizon T for an initial capital $x>0$ by
\begin{equation} \label{EqJ0}   
J^\psi (T,\pi) = \E[(X^{x,\pi}_T-\psi)^2].
\end{equation}
We use the dynamic programming principle to solve our mean variance hedging problem. Let first denote by $\Ac(t,\nu)$ the set of controls coinciding with $\nu$ until time $t\in[0,T]$
\begin{equation}\label{EqAnu}
\Ac(t,\nu) = \{ \pi \in \Ac : \pi_{.\wedge t} = \nu_{.\wedge t}\}.
\end{equation}
We can now define, for all $t\in [0,T]$, the dynamic version of \eqref{EqJ0} which is given by
\begin{eqnarray}\label{VProblem}
J^\psi(t,\pi)=\essinf_{\pi \in \Ac(t, \nu)} \E \left[ \left(X_T^{\psi,\pi}-\psi\right)^2 \vert \Gc_t\right].
\end{eqnarray}
Let recall now the dynamic programming principle given by El Karoui in \cite{ElK81}. 
\begin{Theorem}\label{dynamic} Let ${\cal S}$ be the set of  $\G$-stopping times.
\begin{enumerate}
\item The family $\{ J^\psi(\tau,\nu),\tau \in{\cal S}, \nu\in {\cal A}\}$ is a submartingale system, this implies that for any $\nu \in {\cal A}$, we have for any $\sigma\le \tau$, the following submartingale property:
\begin{equation}\label{eqsub}
\E\left[J^\psi(\tau,\nu^0)\vert {\cal G}_\sigma\right]\ge J^\psi(\sigma,\nu),\quad \P-a.s.
\end{equation} 
\item $\nu^* \in {\cal A}$ is optimal if and only if $\{ J^\psi(\tau,\nu^*),\tau \in{\cal S}\}$ is a martingale system, this means that instead of \eqref{eqsub}, we have for any stopping times $\sigma\le \tau$ that:
\begin{equation*}
\E\left[J^\psi(\tau,\nu^*)\vert {\cal G}_\sigma\right]= J^\psi(\sigma,\nu^*),\quad \P-a.s.
\end{equation*} 
\item For any $\nu \in{\cal A}$, there exists an adapted RCLL process $J^{\psi}(\nu)={(J^{\psi}(\nu)_t)}_{0\le t\le T}$ which is right closed submartingale such that:
$$J^\psi_\tau (\nu)= J^\psi(\tau,\nu), \P-a.s, \hbox{ for any stopping time } \tau.$$ 
\end{enumerate}
\end{Theorem}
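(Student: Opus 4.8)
The plan is to follow the classical dynamic programming machinery for submartingale systems, as in El Karoui \cite{ElK81}, and the whole argument rests on one structural fact that I would establish first: for a fixed stopping time $\tau$ and a fixed control $\nu$, the family of conditional costs $\{\E[(X_T^{x,\pi}-\psi)^2\vert\Gc_\tau]:\pi\in\Ac(\tau,\nu)\}$ is downward directed, i.e. stable under pairwise minimum. Indeed, given $\pi^1,\pi^2\in\Ac(\tau,\nu)$, the event $A=\{\E[(X_T^{x,\pi^1}-\psi)^2\vert\Gc_\tau]\le\E[(X_T^{x,\pi^2}-\psi)^2\vert\Gc_\tau]\}$ lies in $\Gc_\tau$, so the pasted control $\pi=\pi^1\mathbf{1}_A+\pi^2\mathbf{1}_{A^c}$ is again $\G$-predictable, coincides with $\nu$ on $[0,\tau]$, and satisfies the square-integrability condition \eqref{1.1} (since $\E[\int_0^T\pi_s^2\,ds]\le\E[\int_0^T((\pi^1_s)^2+(\pi^2_s)^2)\,ds]<+\infty$); hence $\pi\in\Ac(\tau,\nu)$ and it realizes the pointwise minimum of the two conditional costs. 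From this lattice property I would extract an approximating sequence $\pi^n\in\Ac(\tau,\nu)$ with $\E[(X_T^{x,\pi^n}-\psi)^2\vert\Gc_\tau]\downarrow J^\psi(\tau,\nu)$.

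For the submartingale property of point 1, I fix $\sigma\le\tau$ and $\nu\in\Ac$, apply the tower property and conditional monotone convergence to the approximating sequence above to get $\E[J^\psi(\tau,\nu)\vert\Gc_\sigma]=\lim_n\E[(X_T^{x,\pi^n}-\psi)^2\vert\Gc_\sigma]$. Since each $\pi^n\in\Ac(\tau,\nu)\subseteq\Ac(\sigma,\nu)$, every term on the right dominates $J^\psi(\sigma,\nu)$ by definition of the essential infimum, whence $\E[J^\psi(\tau,\nu)\vert\Gc_\sigma]\ge J^\psi(\sigma,\nu)$, which is precisely \eqref{eqsub}.

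For point 2, the sufficiency direction is immediate by evaluating the martingale-system identity at $\sigma=0$ and $\tau=T$: since $J^\psi(T,\nu^*)=(X_T^{x,\nu^*}-\psi)^2$, the identity reads $\E[(X_T^{x,\nu^*}-\psi)^2]=J^\psi(0,\nu^*)=V(x)$, so $\nu^*$ is optimal. Conversely, if $\nu^*$ is optimal then $J^\psi(0,\nu^*)=V(x)=\E[(X_T^{x,\nu^*}-\psi)^2]=\E[J^\psi(T,\nu^*)]$; combining this with the submartingale chain $J^\psi(0,\nu^*)\le\E[J^\psi(\tau,\nu^*)]\le\E[J^\psi(T,\nu^*)]$ from point 1 forces all inequalities to be equalities. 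Thus the nondecreasing map $t\mapsto\E[J^\psi(t,\nu^*)]$ is constant, and a submartingale with constant expectation is a martingale; localizing this at arbitrary stopping times then upgrades it to the full martingale-system identity $\E[J^\psi(\tau,\nu^*)\vert\Gc_\sigma]=J^\psi(\sigma,\nu^*)$ for every $\sigma\le\tau$.

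Finally, point 3 is the regularization step: the submartingale system of point 1, together with the right-continuity of the filtration $\G$ and the uniform integrability furnished by $\E[\sup_{0\le t\le T}|X_t^{x,\pi}|^2]<+\infty$, lets me invoke the standard aggregation theorem for submartingale systems to produce an RCLL, right-closed submartingale $J^\psi(\nu)$ with $J^\psi_\tau(\nu)=J^\psi(\tau,\nu)$ a.s. for every stopping time $\tau$. The one delicate point is the right-continuity of $t\mapsto\E[J^\psi(t,\nu)]$, which is what guarantees the existence of the c\`adl\`ag modification. I expect this aggregation, rather than the submartingale inequality itself, to be the main obstacle, since it is where the measurable pasting and the Dellacherie--Meyer-type regularization theory are genuinely needed; the first two points are essentially formal consequences once the lattice property and the integrability bound are in hand.
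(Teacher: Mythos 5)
The paper does not actually prove this theorem: it is stated as a recollection of the dynamic programming principle from El Karoui's Saint-Flour notes \cite{ElK81}, so there is no in-paper argument to compare against. Your reconstruction is precisely the standard proof from that source, and it is correct in its essential steps: the downward-directedness via $\Gc_\tau$-measurable pasting is legitimate here because the wealth dynamics \eqref{EqDynWea} is linear in $\pi$ and both controls agree with $\nu$ on $[0,\tau]$, so that $X^{x,\pi}_T=\mathbf{1}_A X^{x,\pi^1}_T+\mathbf{1}_{A^c}X^{x,\pi^2}_T$ and the pasted conditional cost is the pointwise minimum; the submartingale inequality and the optimality characterization in point 2 then follow exactly as you write them (the ``submartingale with constant expectation is a martingale'' step applies directly at the level of stopping times, no genuine localization is needed).

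The only place your argument is incomplete rather than merely deferred is point 3: you correctly identify right-continuity of $t\mapsto\E\left[J^\psi(t,\nu)\right]$ as the hypothesis needed for the Dellacherie--Meyer aggregation, but you do not verify it. It does hold here and is worth recording: by directedness, $\E\left[J^\psi(t,\nu)\right]=\inf_{\pi\in\Ac(t,\nu)}\E\left[(X^{x,\pi}_T-\psi)^2\right]$, which is nondecreasing in $t$ since $\Ac(s,\nu)\subseteq\Ac(t,\nu)$ for $s\ge t$; conversely, for fixed $\pi\in\Ac(t,\nu)$ the controls $\pi^s:=\nu\mathbf{1}_{[0,s]}+\pi\mathbf{1}_{(s,T]}$ belong to $\Ac(s,\nu)$ and converge to $\pi$ in $\mathcal{H}^2[0,T]$ as $s\downarrow t$, so that $X^{x,\pi^s}_T\to X^{x,\pi}_T$ in $L^2$ (the coefficients in \eqref{EqDynWea} being bounded) and hence $\limsup_{s\downarrow t}\E\left[J^\psi(s,\nu)\right]\le\E\left[(X^{x,\pi}_T-\psi)^2\right]$ for every such $\pi$. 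Taking the infimum over $\pi$ closes the gap. With that supplement your proof is complete and matches the cited reference.
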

\noindent We search as in Lim \cite{LimA04} a quadratic decomposition form for $J^\psi_t$ as
\begin{equation}\label{MeanVariance2}
J^\psi_t (\pi)={\Theta}_t{\left(X^{x,\pi}_t-Y_t\right)}^2+\xi_t
\end{equation}
such that $\Theta$ is a non-negative $\mathbb{G}$-adapted process and $Y,\xi$ are two $\mathbb{G}$-adapted processes. So, we assume the quadratic form \eqref{MeanVariance2} of the cost conditional $J^\psi$ with respect to the wealth process and we will use the Theorem \ref{dynamic} to characterize the triple $(\Theta,Y,\xi)$ as solution of three BSDEs. We will verify in the section \ref{Verification} that the assumption of the quadratic decomposition form, the optimality and admissibility of the founded optimal strategy are satisfied.

So, let $\pi \in \Ac$ be an admissible strategy, by representation Theorem \ref{ThmReprez}, we have that the triplet $(\Theta,Y,\xi)$ need to satisfies the following BSDEs: 
\begin{equation}\label{TripleBSDEs}
\begin{split}
&{d\Theta_t\over \Theta_{t^-}}=-g^1_t(\Theta_t,\theta^A_t,\theta^B_t,\beta_t)dt+\theta_t^AdM^A_t+\theta_t^BdM^B_t+\beta_tdW_t, \quad \hspace{1.4cm} \Theta_T=1\\
&dY_t=-g^2_t(Y_t,U^A_t,U^B_t,Z_t)dt+U^A_tdM^A_t+U^B_tdM^B_t\hspace{0.2cm}+Z_tdW_t,\quad \hspace{1cm} Y_T=\psi\\
&d\xi_t=-g^3_t(\xi_t,\epsilon^A_t,\epsilon^B_t,R_t)dt+\epsilon^A_tdM^A_t+\epsilon^B_tdM^B_t+R_tdW_t,\quad \hspace{1.8cm}\xi_T=0.
\end{split}
\end{equation}
with the constraint that $\Theta_t\ge \delta>0$, for some non negative constant $\delta$, for all $t\in [0,T].$ The processes $\theta^A,\theta^B,U^A,U^B,\epsilon^A$ and $\epsilon^B$ are $\mathbb{G}$-predictable. Hence, we can use It\^o's formula and integration by part for jump processes to find the decomposition of $J^\psi(\pi)$. Let recall that for any semimartingale $S$ and $L$, we have that  
 $$
 d (S_tL_t)=S_{t^-}dL_t+L_{t^-}dS_t+d[S,L]_t.
 $$
 In our framework since a jump comes from defaults events we get 
 $$
 d[S,L]_t=\langle S^c,L^c\rangle_t+\sum_{i\in\{A,B\}} \Delta S^i_t \Delta L^i_t dH^i_t.
 $$
 Applying these results for $S=L=(X^{x,\pi}-Y)$ gives:
\begin{equation*}
\begin{split}
d{(X^{x,\pi}-Y)}^2_t&=2(X^{x,\pi}_{t^-}-Y_{t^-})\left[(\pi_t\mu_t+g^2_t)dt+\sum_{i\in\{A,B\}}(\pi_t\sigma^i_t-U^i_t)dM^i_t+(\pi_t\sigma_t-Z_t)dW_t\right]\\&+{(\sigma_t\pi_t-Z_t)}^2 dt+\sum_{i\in\{A,B\}}{(\pi_t\sigma^i_t-U^i_t)}^2dH^i_t.
\end{split}
\end{equation*}
Secondly take $S=\Theta$ and $L={(X^{x,\pi}-Y)}^2$ , let define $K:=(X^{x,\pi}-Y)$, we find:   
\begin{equation*}
\begin{split}
d\left(\Theta K^2\right)_t&=2 K_{t^-}\Theta_{t^-}\left[(\pi_t\mu_t+g^2_t)dt+\sum_{i\in\{A,B\}}(\pi_t\sigma^i_t-U^i_t)dM^i_t+(\pi_t\sigma_t-Z_t)dW_t\right]\\&+\Theta_{t^-}{(\sigma_t\pi_t-Z_t)}^2 dt+\sum_{i\in\{A,B\}}\Theta_{t^-}{(\pi_t\sigma^i_t-U^i_t)}^2dH^i_t-\Theta_{t^-}K^2_{t^-}g^1_tdt\\
&+\Theta_{t^-}K^2_{t^-}\left[\sum_{i\in\{A,B\}}\theta^i_tdM^i_t+\beta_tdW_t\right]+2 K_{t^-}\Theta_{t^-}(\pi_t\sigma_t-Z_t)\beta_tdt\\&+\sum_{i\in\{A,B\}}\left[{(\pi_t\sigma^i_t-U^i_t)}^2+2K_{t^-}{(\pi_t\sigma^i_t-U^i_t)}\right]\theta^i_t\Theta_{t^-}dH^i_t.
\end{split}
\end{equation*}
\noindent Using this decomposition, we can write explicitly the dynamics of $J^\psi(\pi)$ for any $\pi\in{\mathcal A}$, $dJ^\psi_t(\pi)=dM^\pi_t+dV^\pi_t$ where $M^\pi_t$ is the martingale part and $V^\pi_t$ the finite variation part of $J^\psi_t$: 
\begin{equation}\label{dynamicsJ}
\begin{split}
dJ^\psi_t(\pi)=dM^\pi_t+\Theta_{t^-}\left[\pi^2_ta_t+2\pi_t(b_tK_t+c_t)+2K_t(g^2_t-u_t)-K^2_tg^1_t+v_t\right]dt-g^3_tdt
\end{split}
\end{equation}
\noindent where processes are defined respectively by:
\begin{equation}\label{Valueprocess}
\begin{split}
&a_t=\sigma^2_t+\sum_{i\in\{A,B\}}{(\sigma^i_t)}^2(1+\theta^i_t)\lambda^i_t>0, \quad b_t=\mu_t+\sigma_t\beta_t+\sum_{i\in\{A,B\}}\sigma^i_t\theta^i_t\lambda^i_t,\\
&c_t=-\sigma_tZ_t-\sum_{i\in\{A,B\}}\sigma^i_tU^i_t(1+\theta^i_t)\lambda^i_t,\quad v_t=Z^2_t+\sum_{i\in\{A,B\}}{(U^i_t)}^2(1+\theta^i_t)\lambda^i_t,\\
&\quad \textrm{and} \quad u_t=\beta_tZ_t+\sum_{i\in\{A,B\}}U^i_t\theta^i_t\lambda^i_t.
\end{split}
\end{equation}
Using now Theorem \ref{dynamic}, we have that, for any $\pi\in \Ac$, the process $J^\psi(\pi)$ is a submartingale and that there exists a startegy $\pi^*\in \Ac$ such that $J^\psi(\pi^*)$ is a martingale. This martingale property implies that we should find $\pi^*$ such that the finite variation part of $J^\psi(\pi^*)$ vanishes. 
Since the coefficients $g^1,g^2$ and $g^3$ do not depend on the strategy $\pi$, using the first order condition, we obtain:
\begin{equation}\label{EqPiStar}
\pi^*_t=-{b_tK_t+c_t\over a_t},\quad t\le T \quad \textrm{where} \quad K_t=X^{x,\pi^*}_t-Y_t. 
\end{equation}
Therefore, substituting this explicit expression of the optimal strategy in $\eqref{dynamicsJ}$, we obtain:
\begin{equation*}
\begin{split}
dJ^\psi_t(\pi)&=dM^{\pi^*}_t+\Theta_{t^-}\left[-{{(b_tK_t+c_t)}^2\over a_t}+2K_t(g^2_t-u_t)-K^2_tg^1_t+v_t\right]dt-g^3_tdt\\
&=dM^{\pi^*}_t+\Theta_{t^-}\left[-K^2_t\left(g^1_t+{b_t^2\over a_t}\right)+2K_t\left(g^2_t-u_t-{b_tc_t\over a_t}\right)\right]dt+\left((v_t-{c^2_t\over a_t})\Theta_{t^-}-g^3_t\right)dt.
\end{split}
\end{equation*}
Then setting $g^1_t+{b^2_t\over a_t}=0$, $g^2_t-u_t-{b_tc_t\over a_t}=0$ and $(v_t-{c^2_t\over a_t})\Theta_{t^-}-g^3_t=0$, we find that our coefficients $g^1,g^2$ and $g^3$ are given by:
\begin{equation*}
\begin{split}
&g^1_t(\Theta_t,\theta^A_t,\theta^B_t,\beta_t)=-{{\left[ \mu_t+\sum_{i\in\{A,B\}}\theta^i_t\sigma^i_t\lambda^i_t+\sigma_t\beta_t\right]}^2\over \sigma^2_t+\sum_{i\in\{A,B\}} (1+\theta^i_t){(\sigma^i_t)}^2\lambda^i_t},\\
\\\vspace{1.5cm}
&g^2_t(Y_t,U^A_t,U_t^B,Z_t)=-{{\left[ \mu_t+\sum_{i\in\{A,B\}}\theta^i_t\sigma^i_t\lambda^i_t+\sigma_t\beta_t\right]}\left[\sigma_tZ_t+\sum_{i\in\{A,B\}}(1+\theta^i_t)\sigma^i_tU^i_t\lambda^i_t\right]\over \sigma^2_t+\sum_{i\in\{A,B\}} (1+\theta^i_t){(\sigma^i_t)}^2\lambda^i_t},\\&\hspace{2.4cm}+\sum_{i\in\{A,B\}} \theta^i_tU^i_t\lambda^i_t+\beta_tZ_t
\\\\\vspace{1.5cm}
&g^3_t(\xi_t,\epsilon^A_t,\epsilon^B_t,R_t)=\Theta_{t^-}\left[Z^2_t+\sum_{i\in \{A,B\}}{(U^i_t)}^2(1+\theta^i_t)\lambda^i_t-{{\left(Z_t\sigma_t+\sum_{i\in \{A,B\}}\sigma^i_tU^i_t(1+\theta^i_t)\lambda^i_t\right)}^2\over \sigma^2_t+\sum_{i\in\{A,B\}} (1+\theta^i_t){(\sigma^i_t)}^2\lambda^i_t}\right] .
\end{split}
\end{equation*}
Moreover the solution of the optimization problem \eqref{MeanVariance} follows the quadratic form:
\begin{equation}
V(x)=\Theta_0{(x-Y_0)}^2+\xi_0.
\end{equation}

\noindent We are now interesting in the proof of the existence of the solution of each BSDE.

\begin{Remark}\label{Remarkxi} (Existence of the third BSDE)
\begin{enumerate}
\item If we find the solution of the first BSDE $(\Theta,\theta^A,\theta^B,\beta) \in {\mathcal S}^{\infty}[0,T]\times {\mathcal S}^{\infty}[0,T]\times {\mathcal S}^{\infty}[0,T]\times \mathrm{BMO}$, with the constraint $\Theta\ge \delta>0$ and the second BSDE $(Y,U^A,U^B,Z)\in {\mathcal S}^{\infty}[0,T]\times {\mathcal S}^{\infty}[0,T]\times {\mathcal S}^{\infty}[0,T]\times \mathrm{BMO}$ then the solution of the third is given by:
$$\xi_t=\mathbb{E}\left[\int_t^T \left(\left(v_s-{c^2_s\over a_s}\right)\Theta_{s}\right) ds \Big\vert {\mathcal G}_t\right], \quad t\le T.$$
Then $|\xi|\in {\mathcal S}^{\infty}$ and from representation Theorem \ref{ThmReprez}, we deduce that the martingale part $M$ of $\xi$:
$$M_t=\int_0^t \sum_{i\in \{A,B\}}\epsilon^i_s dM^i_s+\int_0^t R_s dW_s$$
is $\mathrm{BMO}$. Moreover from Lemma \ref{jump}, $\epsilon^A$ and $\epsilon^B$ are bounded. Therefore $(\xi,\epsilon^A,\epsilon^B,R)\in {\mathcal S}^{\infty}[0,T]\times {\mathcal S}^{\infty}[0,T]\times {\mathcal S}^{\infty}[0,T]\times \mathrm{BMO}$
\item In the complete market case, we have that the \textit{tracking error} $\xi \equiv0$ since the hedging is perfect.
\end{enumerate}
\end{Remark}

Now, we give the Theorem which proves the existence of the solution of the first quadratic BSDE.

\begin{Theorem}\label{existence} There exists a vector $(\Theta,\theta^A,\theta^B,\beta)\in {\mathcal S}^{\infty}[0,T]\times {\mathcal S}^{\infty}[0,T]\times  {\mathcal S}^{\infty}[0,T]\times{\rm BMO}$ solution of the quadratic BSDE
$${d\Theta_t\over \Theta_{t^-}}=-g^1_t(\Theta_t,\theta^A_t,\theta^B_t,\beta_t)dt+\theta^A_tdM^A_t+\theta^B_tdM^B_t+\beta_tdW_t, \quad \hspace{1.4cm} \Theta_T=1.$$
Moreover there exists a non negative constant $\delta>0$ such that $\Theta_t\ge \delta$ for all $t\in [0,T]$.  Given $(\Theta,\theta^A,\theta^B,\beta)$, we can prove the existence of $(Y,U^A,U^B,Z)\in {\mathcal S}^{\infty}[0,T]\times {\mathcal S}^{\infty}[0,T] \times {\mathcal S}^{\infty}[0,T]\times \mathrm{BMO}$ solution of the second BSDE $(g^2,\psi)$ and $(\xi,\epsilon^A,\epsilon^B, R) \in {\mathcal S}^{\infty}[0,T]\times {\mathcal S}^{\infty}[0,T]\times {\mathcal S}^{\infty}[0,T]\times \mathrm{BMO}$ solution to the third BSDE $(g^3,0)$. Moreover, given this triplet solution $(\Theta,Y,\xi)$ of our system of BSDEs \eqref{TripleBSDEs}, the solution of the our optimization problem \eqref{MeanVariance} is given by:
$$V(x)=\Theta_0{(x-Y_0)}^2+\xi_0.$$
\end{Theorem}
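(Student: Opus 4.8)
I would prove this by the decomposition technique of Kharroubi and Lim \cite{KhaLim11}, which is designed exactly for the situation where the jump coefficients of the BSDE enter through the default intensities $\lambda^A,\lambda^B$. Because we now assume (Remark \ref{RemHYPMOD}) that the defaults are ordered, $\tau^A<\tau^B$, and that the $(\mathcal{H})$-hypothesis holds, the filtration $\mathbb{G}$ splits $[0,T]$ into three regimes: no default $\{t<\tau^A\}$, intermediate $\{\tau^A\le t<\tau^B\}$, and fully defaulted $\{\tau^B\le t\}$. Accordingly every $\mathbb{G}$-adapted datum of the first BSDE is represented by a triple of $\mathbb{F}$-adapted processes $(\Theta^0,\Theta^1,\Theta^2)$ indexed by the default state (with $\Theta^1,\Theta^2$ parametrised by the past default times), and the jump coefficients are recovered as the relative gaps between consecutive regimes, $\theta^A\sim\Theta^1/\Theta^0-1$ and $\theta^B\sim\Theta^2/\Theta^1-1$. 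The constraint $1+\theta^i>0$ (the analogue of $\rho^i\in\Cc$ in the Markov part) is then equivalent to strict positivity of each $\Theta^k$.

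First I would solve the system recursively, starting from the fully defaulted regime and proceeding to fewer defaults. The ordering $\tau^A<\tau^B$ is essential here: in each regime at most one default can occur next, so each slice carries a single active jump coefficient which is already fixed by the value process of the next (more defaulted) regime. On $\{\tau^B\le t\}$ both intensities vanish, the equation for $\Theta^2$ has no jump term at all, and $g^1=-(\mu+\sigma\beta)^2/\sigma^2$ is a purely Brownian driver of quadratic growth in $\beta$; Kobylanski's theorem \cite{Kob00} yields a bounded solution $\Theta^2\in\mathcal{S}^{\infty}[0,T]$ with BMO martingale part. Feeding $\Theta^2$ in fixes $\theta^B$ in the intermediate regime, so the equation for $\Theta^1$ is again a continuous quadratic BSDE in $(\Theta^1,\beta)$, solvable by the same theorem; one further step, with $\theta^A$ now determined by $\Theta^1$, produces $\Theta^0$. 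Boundedness of all model coefficients guarantees quadratic growth throughout and propagates the $\mathcal{S}^{\infty}$ and BMO estimates from one slice to the next.

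The main obstacle is the strict lower bound $\Theta_t\ge\delta>0$. Since $g^1=-b^2/a\le 0$, the process $\Theta$ has nonnegative drift and is a submartingale with $\Theta_T=1$, which already gives the upper bound $\Theta\le 1$. For the lower bound I would use that the mean-variance trade-off $\int_t^T(-g^1_s)\,ds$ is bounded, under the standing boundedness (and nondegeneracy) of the coefficients, and compare $\Theta$ from below against a deterministic subsolution, as in Lim \cite{LimA04}, to pin it uniformly above a positive constant. This positivity is precisely what legitimises the denominator $a_t=\sigma_t^2+\sum_i(1+\theta^i_t)(\sigma^i_t)^2\lambda^i_t$, keeping $1+\theta^i>0$ and closing the fixed point between the jump coefficients and the quadratic growth. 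Ensuring that the lower bound is consistent across the three regimes, where the value recovered in one slice feeds the jump size of the next, is the delicate point.

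With $(\Theta,\theta^A,\theta^B,\beta)$ in hand, the driver $g^2$ of the second BSDE is affine in $(Y,U^A,U^B,Z)$ with bounded coefficients, because $\Theta$ and the $\theta^i$ are bounded and bounded away from zero while $\lambda^i,\sigma,\sigma^i$ are bounded; thus it is a linear BSDE with jumps and bounded terminal datum $\psi$, and the same regime decomposition reduces it to linear Brownian BSDEs, giving the solution in the stated spaces. For the third BSDE I would invoke Remark \ref{Remarkxi}: since $\Theta\le 1$ is bounded, $\xi_t$ is the conditional expectation of a bounded integrable integrand, whence $\xi\in\mathcal{S}^{\infty}[0,T]$, and the representation Theorem \ref{ThmReprez} together with Lemma \ref{jump} delivers the BMO and bounded jump controls $(\epsilon^A,\epsilon^B,R)$. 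Finally the identity $V(x)=\Theta_0(x-Y_0)^2+\xi_0$ follows by construction from the quadratic ansatz \eqref{MeanVariance2} and the martingale characterisation of the optimum in Theorem \ref{dynamic}, the admissibility and optimality of the resulting $\pi^*$ being verified in Section \ref{Verification}.
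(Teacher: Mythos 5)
Your overall strategy is the one the paper uses: rewrite the first BSDE in the Kharroubi--Lim form, split it along the ordered default times into three Brownian quadratic BSDEs indexed by the default state (with the jump coefficients recovered as the gaps between consecutive regimes), solve recursively from the fully defaulted regime backwards, and obtain the uniform lower bound $\Theta\ge\delta>0$ by a comparison argument in the spirit of Lim. Two details differ. For the base case the paper does not rely on Kobylanski alone: it writes $\Theta^2$ explicitly as $Z^2_t(l_{(2)})/\E\bigl[Z^2_T(l_{(2)})/\gamma^2(l_{(2)})\bigr]$ with $dZ_t/Z_t=-(\mu^2_t/\sigma^2_t)\,dW_t$ (the complete-market formula of Proposition \ref{completeVOM}), and derives strict positivity from the reverse H\"older inequality $R_2(\P)$ satisfied by this BMO exponential; your generic lower-bound argument would also work, but the explicit formula is what actually appears. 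For the recursive step the paper compares against the auxiliary quadratic BSDE with driver $\bar h_t(\bar x_t,\bar Z_t)=-a_t\bar x_t-\bar Z_t^2/\bar x_t-b_t\bar Z_t$ and invokes Proposition 5.11 of \cite{LimA04} together with the comparison theorem of \cite{Kob00}, which is essentially what you sketch.

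The one genuine inaccuracy is your treatment of the second BSDE. The driver $g^2$ is indeed affine in $(Y,U^A,U^B,Z)$, but its coefficients are \emph{not} bounded: the coefficient multiplying $Z$ involves $\beta_t$ (through the term $\beta_tZ_t$ and through $b_t=\mu_t+\sigma_t\beta_t+\sum_i\sigma^i_t\theta^i_t\lambda^i_t$), and $\beta$ is only BMO. So ``linear BSDE with bounded coefficients'' does not apply as stated. The paper circumvents this by constructing from $(\Theta,\theta^A,\theta^B,\beta)$ the martingale measure $\bar{\P}$ of Proposition \ref{incompleteVOMM}, identifying $g^2_t=Z_t\rho_t+\sum_iU^i_t\rho^i_t\lambda^i_t$ and hence $Y_t=\E^{\bar{\P}}[\psi\,\vert\,\Gc_t]$, and then bounding $|Y|$ via the reverse H\"older property of $\bar Z$ and the boundedness of $\psi$ (Remark \ref{VM}); the jump components are bounded by Lemma \ref{jump} and the martingale part is BMO by the representation theorem. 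Your route can be repaired along the same lines (a linear BSDE whose $Z$-coefficient is BMO is well posed because the associated Dol\'eans exponential is a uniformly integrable martingale satisfying reverse H\"older), but the boundedness claim as written is a gap. The third BSDE and the identity $V(x)=\Theta_0(x-Y_0)^2+\xi_0$ are handled exactly as you say, via Remark \ref{Remarkxi} and the verification section.
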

\noindent The proof of this Theorem will be given in the sequel in section \ref{SecTHm3.5}.

\subsection{Verification Theorem}\label{Verification} 
Given the solution of the triple BSDEs in their respective spaces (Theorem \ref{existence}), we have to verify that the assertions defined in Theorem \ref{dynamic} hold true (i.e. the submartingale and martingale properties of the cost functional $J$ are satisfied and the strategy $\pi^*$ defined in \eqref{EqPiStar} is admissible). Moreover, we will prove that the wealth process associated to $\pi^*$ exists (satisfies a stochastic differential equation (SDE)).

We begin by proving the existence of a solution of the SDE for the wealth process associated to $\pi^*$.
\begin{Proposition}
Let $\pi^*$ be the strategy, given by \eqref{EqPiStar}, then there exists a solution of the following SDE:
\begin{equation}\label{EDSopti}
dX^{x,\pi^*}_t=\pi^*_t\left[\mu_t dt +\sigma^A_t dM^A_t+\sigma^B_t dM^B_t +\sigma_t dW_t\right] \quad \textrm{with}\quad X^{x,\pi^*}_t=x.
\end{equation}
Moreover, $\pi^*$ is admissible (i.e. $\pi^* \in \Ac$).
\end{Proposition}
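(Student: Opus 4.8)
The plan is to turn the closed-loop equation \eqref{EDSopti} into a linear stochastic differential equation and solve it explicitly. Since $\pi^*_t=-(b_tK_t+c_t)/a_t$ with $K_t=X^{x,\pi^*}_t-Y_t$ as in \eqref{EqPiStar}, and since $Y$ is the already-constructed solution of the second BSDE in \eqref{TripleBSDEs}, I would first derive the dynamics of the residual process $K=X^{x,\pi^*}-Y$. Substituting $\pi^*$ into \eqref{EqDynWea} and subtracting $dY_t$, the equation for $K$ becomes the affine SDE
\begin{equation*}
dK_t=(\alpha_t K_t+\alpha^0_t)\,dt+\sum_{i\in\{A,B\}}(\beta^i_t K_t+\beta^{i,0}_t)\,dM^i_t+(\gamma_t K_t+\gamma^0_t)\,dW_t,
\end{equation*}
where $\alpha_t=-b_t\mu_t/a_t$, $\beta^i_t=-b_t\sigma^i_t/a_t$, $\gamma_t=-b_t\sigma_t/a_t$, and the inhomogeneous parts $\alpha^0,\beta^{i,0},\gamma^0$ are assembled from $c_t,g^2_t,U^i_t,Z_t$ using the definitions in \eqref{Valueprocess}. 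Existence of a solution of \eqref{EDSopti} is then equivalent to existence for this linear SDE, after which I recover $X^{x,\pi^*}=K+Y$; uniqueness will come for free from linearity.

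To solve the linear equation I would use the variation-of-constants formula, introducing the Dol\'eans--Dade exponential $\mathcal{E}$ of the homogeneous martingale-plus-drift part and expressing $K$ through $\mathcal{E}$ and $\mathcal{E}^{-1}$. Two points require care. First, $\mathcal{E}$ must be genuinely invertible: because the jumps of $M^A,M^B$ have size one, this reduces to checking $1+\beta^i_{\tau^i}\neq0$ at the defaults. Here the finite-activity structure guaranteed by Remark \ref{RemHYPMOD} (ordered times $\tau^A<\tau^B$, hence at most two jumps on $[0,T]$) makes the jump product a finite explicit quantity and allows, if preferred, an equivalent pathwise construction: solve the purely continuous linear SDE on each of the random intervals $[0,\tau^A)$, $[\tau^A,\tau^B)$ and $[\tau^B,T]$ by standard Lipschitz theory, and paste with the explicit updates $K_{\tau^i}=(1+\beta^i_{\tau^i})K_{\tau^i-}+\beta^{i,0}_{\tau^i}$. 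Second, the linear coefficients are not bounded: $b_t$ and $c_t$ contain the terms $\sigma_t\beta_t$ and $\sigma_tZ_t$ with $\beta,Z\in\mathrm{BMO}$ only. Using that $a$ is bounded and, by the non-degeneracy built into its definition together with positivity of the intensities, bounded away from $0$, and that $\mu,\sigma,\sigma^i,\theta^i,\lambda^i$ are bounded, everything reduces to $\mathrm{BMO}$ integrands, so that Kazamaki's criterion (a $\mathrm{BMO}$ martingale has a uniformly integrable stochastic exponential with $L^p$ moments) and the John--Nirenberg inequality furnish the integrability required for $\mathcal{E},\mathcal{E}^{-1}$ and hence for $K$; in particular I would extract $K\in\mathcal{S}^2$, that is $\E[\sup_{t\le T}K_t^2]<\infty$.

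It then remains to prove admissibility, $\E[\int_0^T(\pi^*_t)^2\,dt]<\infty$. Since $1/a$ is bounded, $(\pi^*_t)^2\lesssim(b_tK_t+c_t)^2\lesssim b_t^2K_t^2+c_t^2$, and boundedness of $\mu,\sigma,\sigma^i,\theta^i,\lambda^i,U^i$ yields $b_t^2\lesssim 1+\beta_t^2$ and $c_t^2\lesssim 1+Z_t^2$. The contribution $\E[\int_0^T(1+Z_t^2)\,dt]$ is finite because $Z\in\mathrm{BMO}\subset\mathcal{H}^2$, and the pure-$\beta$ term is handled likewise. The decisive term is $\E[\int_0^T\beta_t^2K_t^2\,dt]$: having established $K\in\mathcal{S}^2$ and $\beta\cdot W\in\mathrm{BMO}$, I would invoke the energy inequality for $\mathrm{BMO}$ martingales to bound it by $C\,\|\beta\cdot W\|_{\mathrm{BMO}}^2\,\E[\sup_{t\le T}K_t^2]<\infty$, which gives $\pi^*\in\Ac$.

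The main obstacle is precisely that the feedback control, and hence the coefficients of the wealth SDE, are unbounded and only of $\mathrm{BMO}$ type, so \eqref{EDSopti} is a linear SDE with $\mathrm{BMO}$ coefficients whose solution $X^{x,\pi^*}$ is not bounded. Reconciling the explicit pathwise construction with the moment and integrability control through $\mathrm{BMO}$ tools, namely Kazamaki's criterion and John--Nirenberg for the well-posedness of $\mathcal{E}$ and the energy inequality for the square-integrability of $\pi^*$, is where the real work lies; the rest is the routine algebra of substituting \eqref{EqPiStar} into \eqref{EqDynWea} and reading off the coefficients.
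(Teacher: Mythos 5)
Your construction of a solution of \eqref{EDSopti} --- substituting \eqref{EqPiStar} to obtain a linear SDE and solving it by variation of constants with a Dol\'eans--Dade exponential (equivalently, pasting continuous solutions across the at most two default times) --- is essentially the paper's own first step, and you are right to flag, against the paper's too-quick wording, that the resulting coefficients are not bounded but only ``bounded plus a bounded multiple of $\beta$ or $Z$'', with $\beta, Z$ merely $\mathrm{BMO}$.

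The gap is in how you propose to obtain square-integrability. For a $\mathrm{BMO}$ martingale $N$, Kazamaki/reverse H\"older gives $\mathcal{E}(N)\in R_p(\P)$ only for \emph{some} $p>1$ depending on $\|N\|_{\mathrm{BMO}}$, not $p=2$; and $\mathcal{E}(N)^{-1}$ brings in a factor $\exp(\langle N\rangle_T)$ whose exponential moments (John--Nirenberg) are again only of some small order depending on that norm. So ``in particular I would extract $K\in\mathcal{S}^2$'' does not follow from the tools you invoke, and without $K\in\mathcal{S}^2$ your energy-inequality bound $\E[\int_0^T\beta_t^2K_t^2\,dt]\le \|\beta\cdot W\|^2_{\mathrm{BMO}}\,\E[\sup_{t\le T}K_t^2]$ --- which would otherwise be a correct and attractive alternative to the paper's argument --- is left hanging; there is also a circularity, since an $\mathcal{S}^2$ bound on $K$ from the SDE would essentially require the integrability of $\pi^*$ you are trying to prove. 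The paper sidesteps all moment estimates on stochastic exponentials by using the structure already built: $J^\psi(\pi^*)=\Theta\,(X^{x,\pi^*}-Y)^2+\xi$ is a \emph{local martingale by construction} of the BSDE system, so along a localizing sequence $\E\bigl[\Theta_{t\wedge T_i}(X^{x,\pi^*}_{t\wedge T_i}-Y_{t\wedge T_i})^2\bigr]$ is controlled by $\Theta_0(x-Y_0)^2+\E[\int_0^{t\wedge T_i}v_s\Theta_s\,ds]$, and the uniform lower bound $\Theta\ge\delta>0$ plus Fatou give $\E[(X^{x,\pi^*}_t-Y_t)^2]\le C$ uniformly in $t$, hence $X^{x,\pi^*}\in\mathcal{H}^2[0,T]$ and $X^{x,\pi^*}_T\in L^2$; admissibility then follows from It\^o applied to $(X^{x,\pi^*})^2$ together with the lower bound on the mean-variance trade-off process. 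You should replace your Kazamaki step by this value-process argument (or supply a genuine a priori $\mathcal{S}^2$ estimate by other means); as written, the central integrability claim is asserted rather than proved.
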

\begin{proof}The proof is divided in three steps. Firstly, we prove the existence of a solution of the SDE satisfied by the wealth process associated to $\pi^*$. Secondly, we prove the squared integrability of this wealth at the horizon time $T$. Finally, we prove the admissibility of the strategy $\pi^*$.
\begin{description}
\item[The existence  of the solution of the SDE for the wealth process: ]
Plotting the expression of $\pi^*$ given by \eqref{EqPiStar} in \eqref{EDSopti} gives
\begin{equation}\label{eqX0}
\hspace{-2cm}dX^{x,\pi^*}_t=(\bar b_t X^{x,\pi^*}_t+\bar c_t)dt+(\bar d^A_t X^{x,\pi^*}_t+\bar e^A_t)dM^A_t+(\bar d^B_t X^{x,\pi^*}_t+\bar e^B_t)dM^B_t+(\bar d_t X^{x,\pi^*}_t+\bar e_t)dW_t 
\end{equation}
where the bounded processes are given by
\begin{equation*}
\begin{split}
 &\bar b_t=-{b_t\over a_t}\mu_t,\quad \bar c_t=({b_t {Y_t\over a_t}+c_t})\mu_t,\quad  \bar d_t=-{b_t\over a_t}\sigma_t \\
 &\bar e_t=({b_t {Y_t\over a_t}+c_t})\sigma_t,\quad \bar d^i_t=-{\bar b_t\over a_t}\sigma^i_t, \quad \bar e^i_t=({b_t {Y_t\over a_t}+c_t})\sigma^i_t,
\end{split}
\end{equation*}
and processes $a$, $b$ $c$ are defined in \eqref{Valueprocess}. We recall, now, that the solution of the SDE:
$$d\phi_t=\phi_{t^-}\left[\bar b_t dt+\bar d^A_tdM^A_t+\bar d^B_t dM^B_t+\bar d_t dW_t\right] \quad \textrm{with}\quad \phi_0=x$$ 
is given explicitely by
$$\phi_t=x\exp\left(\int_0^t \left(\bar b_s-{1\over 2}{\bar d_s}^2-\sum_{i\in \{A,B\}}d^i_s \lambda^i_s\right)ds+\int_0^t \bar d_s dW_s\right)\prod_{i\in \{A,B\}}(1+d^i_t H^i_t).$$ 
\noindent Setting $X^{x,\pi^*}_t:= L_t\phi_t$ with 
$$dL_t:=q_tdt+l^A_t dM^A_t +l^B_t dM^B_t+l_t dW_t,\quad L_0=1,$$ 
we find by integration by part formula that $dX^{x,\pi^*}_t=\phi_{t^-} dL_t +L_{t^-}d\phi_t +d[\phi,L]_t$. Hence,
\begin{equation*}
\begin{split}
dX^{x,\pi^*}_t&
=X^{x,\pi^*}_t\left[\bar b_t dt+\bar d^A_tdM^A_t+\bar d^B_t dM^B_t+\bar d_t dW_t\right]+\phi_{t^-}\left[q_t-\sum_{i\in\{A,B\}}d^i_tl^i_t\lambda^i_t\right]dt\\
&+\sum_{i\in\{A,B\}}\phi_{t^-} l^i_t(1+d^i_t)dM^i_t+\phi_{t^-} l_tdW_t+l_t\phi_{t^-} \bar d_t dt.
\end{split}
\end{equation*}
Therefore from equation \eqref{eqX0}, we find, for $i\in \{A,B\}$, that $\bar e^i_t=\phi_{t^-}l^i_t(1+d^i_t)$, $\bar e_t=\phi_{t^-}l_t$ and $\bar c_t=\phi_{t^-}\left(q_t-\sum_{i\in\{A,B\}}d^i_tl^i_t\lambda^i_t\right)$. We deduce that the process $L$ is defined by:
\begin{equation*}
L_t=1+\int_0^t {1\over \phi_{s^-}}\left[\bar c_s+\sum_{i\in \{A,B\}}{d^i_se^i_s\over (1+d^i_s)}\lambda^i_s\right]ds+\int_0^t {\bar e_s\over \phi_{s^-}}dW_s +\int_0^t \sum_{i\in\{A,B\}} {1\over \phi_{s^-}}{e^i_s\over (1+d^i_s) }dM^i_s,
\end{equation*}
and $X^{x,\pi^*}_t=\phi_t L_t$ is a solution of the SDE \eqref{EDSopti}.
\item[Squared integrability of the strategy $\pi^*$:] Let us prove first that $X^{x,\pi^*}\in {\mathcal H}^2[0,T]$ and $ X^{x,\pi^*}_T \in \mathrm{L}^2(\Omega,{\mathcal G}_T) $. We recall that $J^\psi_t(\pi^*)=\Theta_t{(X_t^{x,\pi^*}-Y_t)}^2+\xi_t$ is a local martingale. Therefore, there exists a sequence of localizing times ${(T_i)}_{i\in \mathbb{N}}$ for $J^\psi_t$ such that for $t\le s\le T$
$$\mathbb{E}\left[ J^\psi_{t\wedge T_i}(\pi^*)\right]=\Theta_0{(x-Y_0)}^2+\xi_0.$$ 
From Remark \ref{Remarkxi}, we have:
$$\mathbb{E}\left[\xi_{t\wedge T_i}-\xi_0\right]=-\mathbb{E}\left[\int_0^{t\wedge T_i}\left(v_s-{c^2_s\over a_s}\right)\Theta_s ds\right],\quad t\le T,$$
\noindent where $v$, $c$ and $a$ are defined in Proposition \ref{TripleBSDEs}. Since $a>0$, we have:
$$\mathbb{E}\left[\Theta_{t\wedge T_i}{(X^{x,\pi^*}_{t\wedge T_i}-Y_{t\wedge T_i})}^2\right]\le \Theta_0{(x-Y_0)}^2+\mathbb{E}\left[\int_0^{t\wedge T_i}v_s \Theta_s ds \right].$$
Moreover, since there exists a constant $\delta>0$ such that $\Theta_t>\delta$ and the process $v$ is non negative, we can apply Fatou lemma and we find when i goes to infinity that
$$\delta\mathbb{E}\left[{(X^{x,\pi^*}_{t}-Y_{t})}^2\right] \le \mathbb{E}\left[\Theta_{t}{(X^{x,\pi^*}_{t}-Y_{t})}^2\right]\le \Theta_0{(x-Y_0)}^2+\mathbb{E}\left[\int_0^{t} v_s \Theta_s ds \right].$$
Therefore $Z$ is $\mathrm{BMO}$ and the process $\theta^i, U^i \in {\mathcal S}^{\infty}[0,T]$ for $i=\{A,B\}$. We conclude $v\in {\mathcal H}^2[0,T]$. Hence, we have:
\begin{equation*}
X^{x,\pi^*} -Y \in {\mathcal H}^2[0,T]\quad \textrm{and} \quad X^{x,\pi^*}_T -Y_T \in \mathrm{L}^2(\Omega,{\mathcal G}_T).
\end{equation*}  
Since $Y\in {\mathcal S}^{\infty}[0,T]$, then we get the expected results: $X^{x,\pi^*} \in {\mathcal H}^2[0,T]$ and $X^{x,\pi^*}_T \in \mathrm{L}^2(\Omega,{\mathcal G}_T)$. 
\item[Admissibility of the strategy $\pi^*$: ] Let now prove that the strategy $\pi^*\in {\mathcal H}^2[0,T]$. Applying It\^o's formula to ${(X^{x,\pi^*})}^2$, we get $d{(X^{x,\pi^*}_t)}^2=2 X^{x,\pi^*}_{t^-}dX^{x,\pi}_t +d[X^{x,\pi^*}]_t$, then there exists a sequence of localizing times ${(T_i)}_{i\in \mathbb{N}}$ such that for all $t\le s\le T$:
\begin{equation}\label{eqttt}
\hspace{-1cm}x^2+\mathbb{E}\left[\int_0^{T\wedge T^i}{|\pi^*_s|}^2[\sigma^2_s +{(\sigma^A)}^2\lambda^A_s +{(\sigma^B)}^2\lambda^B_s]ds\right]\le \mathbb{E}\left[{(X^{x,\pi^*}_{T\wedge T_i})}^2\right]-2\mathbb{E}\left[\int_0^{T^\wedge T_i}\pi^*_s\mu_s X^{x,\pi^*}_s ds \right].
\end{equation}
\noindent Setting $K^\sigma_s=\sigma^2_s +{(\sigma^A)}^2\lambda^A_s +{(\sigma^B)}^2\lambda^B_s$ ($K^\sigma$ is the so called mean variance trade-off process), since the processes $\sigma$, $\sigma^i, \lambda^i$ are bounded, there exists a constant $K$ such that $K^\sigma\ge K$. Then, we obtain
$$-2 \pi^*_s\mu_s X^{x,\pi^*}_s\le {2\over K}{|X^{x,\pi^*}_s|}^2 {|\mu_s|}^2+{K\over 2}{|\pi^*_s|}^2,\quad 0\le s\le T.$$ 
Therefore, combining this inequality with \eqref{eqttt} gives
$$x^2+\mathbb{E}\left[\int_0^{T\wedge T^i}{|\pi^*_s|}^2 K^\sigma_s ds\right]\le \mathbb{E}\left[{(X^{x,\pi^*}_{T\wedge T_i})}^2\right]+\mathbb{E}\left[\int_0^{T\wedge T_i} {2\over K}{|X^{x,\pi^*}_s|}^2 {|\mu_s|}^2 ds \right]+ {K\over 2}\mathbb{E}\left[\int_0^{T\wedge T_i}{|\pi^*_s|}^2 ds \right].$$
Applying Fatou's lemma, when i goes to infinity, we get:
$$x^2+\mathbb{E}\left[\int_0^{T}{|\pi^*_s|}^2 K^\sigma_s ds\right]\le \mathbb{E}\left[{(X^{x,\pi^*}_{T})}^2\right]+\mathbb{E}\left[\int_0^{T} {2\over K}{|X^{x,\pi^*}_s|}^2 {|\mu_s|}^2 ds \right]+ {K\over 2}\mathbb{E}\left[\int_0^{T}{|\pi^*_s|}^2 ds \right].$$
Therefore since $K^\sigma\ge K$, we finally obtain:
$${K\over 2}\mathbb{E}\left[\int_0^{T}{|\pi^*_s|}^2 ds \right]\le \mathbb{E}\left[{(X^{x,\pi^*}_{T})}^2-x^2\right]+\mathbb{E}\left[\int_0^{T} {2\over K}{|X^{x,\pi^*}_s|}^2 {|\mu_s|}^2 ds \right].$$
\noindent Since $\mu$ is bounded, $X^{x,\pi^*}\in {\mathcal H}^2[0,T]$ and $X^{x,\pi^*}_T\in \mathrm{L}^2(\Omega,{\mathcal G}_T)$, we conclude $\pi^*\in {\mathcal H}^2[0,T]$, so $\pi^*$ is admissible. Note that this condition implies that $X^{x,\pi^*}\in {\mathcal S}^2[0,T]$ since all coefficients of the asset are bounded.
\end{description}
\end{proof}
We now prove the submartingale and the martingale properties of the cost functional.
\begin{Proposition}
For any $\pi \in {\mathcal A}$, the process $J^\psi(\pi)$ is a true submartingale and a martingale for the strategy $\pi^*$ given by \eqref{EqPiStar}. Moreover the strategy $\pi^*$ is optimal for the minimization problem \eqref{MeanVariance}. 
\end{Proposition}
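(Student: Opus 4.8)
The plan is to start from the dynamics \eqref{dynamicsJ} of $J^\psi(\pi)$ and complete the square in the control. Writing $K_t=X^{x,\pi}_t-Y_t$, the $\pi$-dependent part $\pi_t^2a_t+2\pi_t(b_tK_t+c_t)$ equals $a_t(\pi_t-\pi^*_t)^2-(b_tK_t+c_t)^2/a_t$ with $\pi^*$ as in \eqref{EqPiStar}. Substituting the explicit coefficients $g^1,g^2,g^3$ -- which were chosen precisely so that $g^1_t+b_t^2/a_t=0$, $g^2_t-u_t-b_tc_t/a_t=0$ and $(v_t-c_t^2/a_t)\Theta_{t^-}-g^3_t=0$ -- every term not proportional to $(\pi_t-\pi^*_t)^2$ cancels, so the finite-variation part of $J^\psi(\pi)$ collapses to $\Theta_{t^-}a_t(\pi_t-\pi^*_t)^2\,dt$. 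Since $\Theta_{t^-}\ge\delta>0$ by Theorem \ref{existence} and $a_t>0$ by \eqref{Valueprocess}, this drift is non-negative and vanishes exactly when $\pi=\pi^*$; hence $J^\psi(\pi)$ is a local submartingale for every $\pi\in\Ac$ and $J^\psi(\pi^*)$ is a local martingale.

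Next I would upgrade these local statements to genuine ones. The point is that $J^\psi(\pi)$ is of class (D): with $\Theta,Y,\xi\in{\mathcal S}^\infty[0,T]$ and, for admissible $\pi$, $X^{x,\pi}\in{\mathcal S}^2[0,T]$ (the bound $\E[\sup_t|X^{x,\pi}_t|^2]<\infty$ noted after \eqref{EqDynWea}), one has $\sup_t|J^\psi_t(\pi)|\le C(1+\sup_t|X^{x,\pi}_t|^2)\in\mathrm{L}^1$. The increasing part $A_t=\int_0^t\Theta_{s^-}a_s(\pi_s-\pi^*_s)^2\,ds$ is integrable because $\Theta,a$ are bounded and $\pi,\pi^*\in{\mathcal H}^2[0,T]$. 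Consequently the martingale part $M^\pi=J^\psi(\pi)-J^\psi_0(\pi)-A$ is integrable and, being a local martingale dominated by integrable processes, is a true martingale. Therefore $J^\psi(\pi)$ is a true submartingale, and $J^\psi(\pi^*)$ (for which $A\equiv0$) a true martingale, which is exactly what Theorem \ref{dynamic} requires.

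For optimality I would use the terminal identity $J^\psi_T(\pi)=\Theta_T(X^{x,\pi}_T-Y_T)^2+\xi_T=(X^{x,\pi}_T-\psi)^2$, which holds since $\Theta_T=1$, $Y_T=\psi$ and $\xi_T=0$. As $X^{x,\pi}_0=x$ for every strategy, the initial value $J^\psi_0(\pi)=\Theta_0(x-Y_0)^2+\xi_0$ is independent of $\pi$. The submartingale property then yields, for all $\pi\in\Ac$,
\[
\E\big[(X^{x,\pi}_T-\psi)^2\big]=\E\big[J^\psi_T(\pi)\big]\ \ge\ J^\psi_0(\pi)=\Theta_0(x-Y_0)^2+\xi_0,
\]
while the martingale property for $\pi^*$ turns the inequality into an equality. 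Hence $\pi^*$ minimizes \eqref{MeanVariance} and $V(x)=\Theta_0(x-Y_0)^2+\xi_0$.

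The main obstacle is exactly this passage from local to true (sub)martingale: one must check that the integrability provided by the solution spaces of Theorem \ref{existence} together with the admissibility $\pi\in{\mathcal H}^2[0,T]$ both makes $J^\psi(\pi)$ of class (D) and gives $\E[A_T]<\infty$, since only then do the submartingale comparison and the optimality inequality hold with true rather than merely local conditional expectations. The remaining content is the completion-of-the-square cancellation, which is already arranged by the defining relations of $g^1,g^2,g^3$.
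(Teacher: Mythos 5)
Your proposal is correct and follows essentially the same route as the paper: the completion of the square reducing the drift to $\Theta_{t^-}a_t(\pi_t-\pi^*_t)^2\,dt$ is exactly the cancellation the paper arranges when defining $g^1,g^2,g^3$, and the passage from local to true (sub)martingale via the boundedness of $\Theta,Y,\xi$, the bound $X^{x,\pi}\in\mathcal{S}^2[0,T]$ and dominated convergence, followed by the comparison of $\mathbb{E}[J^\psi_T(\pi)]$ with the $\pi$-independent initial value $\Theta_0(x-Y_0)^2+\xi_0$, is precisely the paper's two-step argument. Your write-up is somewhat more explicit about the integrability of the increasing part and the class (D) property, but the underlying argument is the same.
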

\begin{proof} Firstly, we prove the submartingale and the martingale property of the cost functional then secondly we prove that the strategy $\pi^*$ is optimal. 
\begin{description}
\item[First step:] Let recall that for any $\pi\in {\mathcal A}$, the process $J^\psi(\pi)$ is a local submartingale and for $\pi^*$, $J^\psi(\pi^*)$ is a local martingale. Therefore, there exists a localizing increasing sequence of stopping times ${(T_i)}_{i\in \mathbb{N}}$ for $J^\psi$ such that for $t\le s\le T$:
\begin{equation}\label{subJ}
J^\psi_{t\wedge T_i}(\pi)\le \mathbb{E}\left[J_{s\wedge T_i}(\pi)\vert {\mathcal G}_t\right]\quad \hbox{ and } J^\psi_{t\wedge T_i}(\pi^*)=\mathbb{E}\left[J_{s\wedge T_i}(\pi^*)\vert {\mathcal G}_t\right]\quad \textrm{for any} \quad \pi \in {\mathcal A}.
\end{equation}
Moreover, for any $\pi\in {\mathcal A}$,  $J^\psi_t(\pi)=\Theta_t(X^{x,\pi}_t-Y_t)+\xi_t$ where $\Theta$ , $Y$ and $\xi$ are uniformly bounded and $X^{x,\pi}\in  {\mathcal S}^2[0,T]$. Hence, taking the limit in \eqref{subJ} when $i$ goes to infinity and applying dominated convergence Theorem, allow us to conclude.  
\item[Second step:] For any $\pi\in {\mathcal A}$, we have from the submartingale property of $J^\psi(\pi)$ and the martingale property of $J^\psi(\pi^*)$ that :
$$ \mathbb{E}\left[J^\psi_T(\pi)\right]\le J^\psi_0(\pi)=\Theta_0{(x-Y_0)}^2+\xi_0=\mathbb{E}\left[J^\psi_T(\pi^*)\right].$$
Finally, $\pi^*$ is the optimal strategy for the minimization problem \eqref{MeanVariance}.
\end{description}
\end{proof}
\subsection{ Characterization of the VOM using BSDEs}\label{VOM}
Theorem \ref{existence} leads us to construct the VOM in some complete and incomplete markets. We can find also the price of the defaultable contingent claim $\psi$ via the VOM. We consider three different cases: 
\begin{enumerate}[i.]
\item Complete market (where we assume $\mathbb{G}=\mathbb{F}$ and $\mathbb{G}=\mathbb{H}^A$).
\item Incomplete market (where we consider only the case $\mathbb{G}=\mathbb{F}\vee\mathbb{H}^A$).
\item Incomplete market (where we consider the case $\mathbb{G}=\mathbb{F}\vee\mathbb{H}^A\vee\mathbb{H}^B$).
\end{enumerate}
\begin{Remark}
\begin{itemize}
\item The case iii. corresponds to the more general case where the model depends on the market information (i.e. the filtration $\F$) and the defaults informations of the firms $A$ and $B$. Indeed, in this set up, the model depends on the default time of both firms. An economic interpretation of this case is a market with two firms where $A$ is the main firm and $B$ its insurance company. Then, the main firm $A$ could make default and cause a default of its insurance. Then it is what we call a counterparty risk.
\item The case ii. corresponds to a particular case where our model depends only on the default time of the firm $A$. In fact, in this set up, the model depends on the market's information and the possible default of the firm $A$. It can be view as a particular case of iii. with condition $\tau^B=\infty$ (i.e. no possible default of firm B).
\item The first case in i., if $\mathbb{G}=\mathbb{F}$,  corresponds to a model which depends only on the market information and not to the possible default of firms $A$ and $B$. In a economic point of view, it is a simple model without default. In the second case, $\mathbb{G}=\mathbb{H}^A$, the model depends only on the possible default of the firm $A$ and non more on the information given by the market.
\end{itemize}
\end{Remark}
\begin{Remark}
 We have explicit solution of the VOM with respect to the process $\Theta$ in the first two cases.
 \end{Remark}
\subsubsection{ Complete market} 
If we assume that $\mathbb{G}=\mathbb{F}$ (we do not consider the default impact of firms A and B on the asset's dynamics of the firm A) or $\mathbb{G}=\mathbb{H}^A$ (we do not consider the market noise) then our financial market is complete. Hence, the VOM is the unique risk-neutral probability and its dynamics can be found explicitly. Our goal in this part is then to find the solution of the triple BSDEs given the VOM $\mathbb{\bar P}$.
\begin{Proposition}\label{completeVOM} Let $\mathbb{\bar P}$ be the VOM (the unique risk-neutral probability) and let $\bar Z_T$ be the Radon Nikodym density of $\mathbb{\bar P}$ with respect to $\mathbb{P}$ on ${\mathcal G}_T$. We denote $ \bar Z_t=\mathbb{E}\left[\bar Z_T\vert {\mathcal G}_t\right]$, then for all $t\leq T$, we have that
$$\Theta_t={\bar Z^2_t\over \mathbb{E}\left[\bar Z^2_T\vert {\mathcal G}_t\right]}.$$
Moreover, for all $t \in [0,T]$, we have that $Y_t= \mathbb{\bar E}\left[\psi \vert {\mathcal G}_t\right]$ and $\xi_t\equiv 0$.
\end{Proposition}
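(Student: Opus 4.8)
The plan is to exploit completeness to recast the dynamic mean--variance problem \eqref{VProblem} as a constrained $\mathrm{L}^2$--projection, solve that projection explicitly by a conditional Lagrange multiplier, and then read off $\Theta$, $Y$ and $\xi$ by matching the resulting value with the quadratic form $J^\psi_t(\pi)=\Theta_t(X^{x,\pi}_t-Y_t)^2+\xi_t$ of \eqref{MeanVariance2}. Uniqueness of the triple solving the system \eqref{TripleBSDEs}, granted by Theorem~\ref{existence}, then identifies these processes with the BSDE solutions and yields the announced formulas.

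First I would record the martingale structure. Since the market is complete, $\mathbb{\bar P}$ is the unique risk--neutral measure with density process $\bar Z$, and by Girsanov $D^A$ is a $\mathbb{\bar P}$--local martingale; combined with the $\mathrm{L}^2$ integrability of admissible wealth processes, every $X^{x,\pi}$ is a true $\mathbb{\bar P}$--martingale, so that $\mathbb{\bar E}[X^{x,\pi}_T\,\vert\,\mathcal{G}_t]=X^{x,\pi}_t$. By the predictable representation property under $\mathbb{\bar P}$ (completeness), together with the $\mathrm{L}^2$--closedness of the stochastic integrals of $D^A$, the set of terminal values attainable from time $t$ by strategies in $\mathcal{A}(t,\nu)$ is exactly $\{H\in\mathrm{L}^2(\mathcal{G}_T):\mathbb{\bar E}[H\,\vert\,\mathcal{G}_t]=X^{x,\pi}_t\}$.

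Consequently the conditional value is the projection
\begin{equation*}
J^\psi_t=\essinf\Big\{\mathbb{E}\big[(H-\psi)^2\,\vert\,\mathcal{G}_t\big]:\ H\in\mathrm{L}^2(\mathcal{G}_T),\ \mathbb{E}\big[\tfrac{\bar Z_T}{\bar Z_t}H\,\vert\,\mathcal{G}_t\big]=X^{x,\pi}_t\Big\}.
\end{equation*}
Introducing a $\mathcal{G}_t$--measurable multiplier, the first--order condition gives the minimizer $H^*=\psi+\Lambda_t\,\bar Z_T/\bar Z_t$, with $\Lambda_t$ fixed by the constraint. Using $\mathbb{E}[\tfrac{\bar Z_T}{\bar Z_t}\psi\,\vert\,\mathcal{G}_t]=\mathbb{\bar E}[\psi\,\vert\,\mathcal{G}_t]$ one obtains
\begin{equation*}
\Lambda_t=\frac{\bar Z_t^2\big(X^{x,\pi}_t-\mathbb{\bar E}[\psi\,\vert\,\mathcal{G}_t]\big)}{\mathbb{E}[\bar Z_T^2\,\vert\,\mathcal{G}_t]},\qquad J^\psi_t=\Lambda_t^2\,\mathbb{E}\Big[\tfrac{\bar Z_T^2}{\bar Z_t^2}\,\Big\vert\,\mathcal{G}_t\Big]=\frac{\bar Z_t^2}{\mathbb{E}[\bar Z_T^2\,\vert\,\mathcal{G}_t]}\big(X^{x,\pi}_t-\mathbb{\bar E}[\psi\,\vert\,\mathcal{G}_t]\big)^2.
\end{equation*}

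Matching this as a quadratic in the current wealth $X^{x,\pi}_t$ (which ranges over all attainable $\mathcal{G}_t$--values) with $\Theta_t(X^{x,\pi}_t-Y_t)^2+\xi_t$ identifies $\Theta_t=\bar Z_t^2/\mathbb{E}[\bar Z_T^2\,\vert\,\mathcal{G}_t]$, $Y_t=\mathbb{\bar E}[\psi\,\vert\,\mathcal{G}_t]$ and $\xi_t\equiv0$, the last being the statement that hedging is perfect; uniqueness of the decomposition from Theorem~\ref{existence} finishes the proof. The delicate point is the identification of the attainable set with the affine subspace $\{\mathbb{\bar E}[H\,\vert\,\mathcal{G}_t]=X^{x,\pi}_t\}$: it rests on the $\mathrm{L}^2$--predictable representation theorem under $\mathbb{\bar P}$, on comparability of the $\mathbb{P}$-- and $\mathbb{\bar P}$--norms (guaranteed by the $R_2(\mathbb{P})$ condition and $\bar Z_T\in\mathrm{L}^2$), and on a measurable selection of $\Lambda_t$. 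As a self--contained alternative one may instead verify directly that the candidate $\bar Z_t^2/\mathbb{E}[\bar Z_T^2\,\vert\,\mathcal{G}_t]$ solves the first BSDE: computing its dynamics by It\^o's formula and inserting the risk--neutral constraint \eqref{constraint} for the (unique) VOM parameters reproduces the driver $g^1$ together with $\Theta_T=1$, while checking that the $\mathbb{\bar P}$--drift of $\mathbb{\bar E}[\psi\,\vert\,\mathcal{G}_t]$ matches $-g^2$ identifies $Y$.
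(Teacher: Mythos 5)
Your argument is correct in substance but takes a genuinely different route from the paper. The paper's proof is a local, It\^o-calculus verification: it writes the candidate density process $L$ (with Girsanov kernel $\rho_t=-\mu_t/\sigma_t$ in the case $\mathbb{G}=\mathbb{F}$, resp. $\rho^A_t=-\mu_t/(\lambda^A_t\sigma^A_t)$ in the case $\mathbb{G}=\mathbb{H}^A$), computes $d\bigl(L^2_t/\Theta_t\bigr)$ using the explicit driver $g^1$, observes that this choice of kernel kills the drift so that $L^2/\Theta$ is a true martingale, and concludes from $\Theta_T=1$; the identities $g^2_t=\rho_t Z_t$ (resp. its jump analogue) and $g^3_t=0$ then give $Y_t=\mathbb{\bar E}[\psi\vert\mathcal{G}_t]$ and $\xi\equiv 0$ directly. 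This is exactly the "self-contained alternative" you sketch in your last sentences. Your main argument instead computes the value function globally as a conditional $\mathrm{L}^2$-projection onto the affine set of attainable terminal wealths and reads off $(\Theta,Y,\xi)$ by matching the quadratic in $X^{x,\pi}_t$ against \eqref{MeanVariance2}; the Lagrange-multiplier computation and the resulting closed form for $J^\psi_t$ are correct. What your route buys is a transparent explanation of the formulas (the value is literally a weighted squared distance to $\mathbb{\bar E}[\psi\vert\mathcal{G}_t]$); what it costs is the extra machinery you yourself flag: identification of the attainable set via predictable representation under $\mathbb{\bar P}$, $\mathrm{L}^2$-closedness and the $R_2(\mathbb{P})$ comparison of norms, the true-martingale property of $X^{x,\pi}$ under $\mathbb{\bar P}$, plus the verification theorem to connect the computed value function back to the BSDE triple. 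Two small cautions: Theorem~\ref{existence} asserts existence, not uniqueness, of the triple solving \eqref{TripleBSDEs}, so the identification should be phrased as coefficient matching of the quadratic (which only needs $X^{x,\pi}_t$ to take at least three distinct attainable values), not as BSDE uniqueness; and in the case $\mathbb{G}=\mathbb{H}^A$ the representation is with respect to $M^A$ rather than $W$, so your projection argument should be stated for both complete-market filtrations as the paper does.
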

\begin{proof}We will consider the two cases $\mathbb{G}=\mathbb{F}$ and $\mathbb{G}=\mathbb{H}^A$.
\begin{description}
\item[First case:] Let consider the case where $\mathbb{G}$ is equal to $\mathbb{F}$ and let the process $L$ be defined by the stochastic differential equation given by $dL_t=L_{t^-}\rho_t dW_t$ where $\rho W \in {\rm BMO}$. Using It\^o's formula we obtain:
\begin{equation*}
\begin{split}
d\left({L^2_t\over \Theta_t}\right)&={L^2_t\over \Theta_t}\left[(2\rho_t-\beta_t)dW_t+(\beta^2_t+g^1_t-2\beta_t\rho_t+\rho^2_t)dt\right]\\
&={L^2_t\over \Theta_t}\left[(2\rho_t-\beta_t)dW_t+\left({(\beta_t-\rho_t)}^2-{({\mu_t\over \sigma_t}+\beta_t)}^2\right)dt\right]\\
&={L^2_t\over \Theta_t}\left[(2\rho_t-\beta_t)dW_t+\left((-\rho_t-{\mu_t\over \sigma_t})(2\beta_t+\rho_t+{\mu_t\over \sigma_t})\right)dt\right].
\end{split}
\end{equation*}
Then, if we set, for all $t \leq T$, $\rho_t:=-{\mu_t\over \sigma_t}$ and used the bound conditions of $\left({1\over \Theta},\mu,\sigma\right)$ and the BMO property of $\beta$, we obtain that the process ${L^2\over \Theta}$ is a true martingale. Therefore we get:
 $$\mathbb{E}\left({L^2_T\over \Theta_T}\Big\vert {\mathcal G}_t\right)={L^2_t\over \Theta_t},\quad t\le T.$$
 Since $\Theta_T=1$, we find the expected result. Moreover, we obtain that $L=\bar Z$ which is the Radon-Nikodym of the unique risk-neutral probability and $g^2_t=-{\mu_t\over \sigma_t}Z_t$ , $g^3_t=0$. Finally, $Y_t=\mathbb{\bar E}\left[\psi\vert {\mathcal G}_t\right]$ and $\xi_t=0, t\le T$.
\item[Second case:] Let now consider the case where $\mathbb{G}$ is equal to $\mathbb{H}$ and let the process $L$ be defined by the stochastic differential equation given by 
$$
dL_t=L_{t^-}\rho^A_t dM^A_t,
$$
where $\rho^AM^A \in {\rm BMO}$. Using It\^o's formula we find:
\begin{equation*}
\begin{split}
d\left({L^2_t\over \Theta_t}\right)&={L^2_{t^-}\over \Theta_{t^-}}\left[\left({{(1+\rho^A_t)}^2\over 1+\theta^A_t}-1\right)dM^A_t+\left({({(\theta^A_t)}^2+{(\rho^A_t)}^2-2\rho^A_t\theta^A_t)\lambda^A_t\over 1+\theta^A_t}+g^1_t\right)dt\right]\\
&={L^2_{t^-}\over \Theta_{t^-}}\left[\left({{(1+\rho^A_t)}^2\over 1+\theta^A_t}-1\right)dM^A_t+{1\over 1+\theta^A_t}\left({(\rho^A_t-\theta^A_t)}^2-{({\mu_t\over\sigma^A_t\lambda^A_t}+\theta^A_t)}^2\right)\lambda^A_tdt\right]\\
&={L^2_{t^-}\over \Theta_{t^-}}\left[\left({{(1+\rho^A_t)}^2\over 1+\theta^A_t}-1\right)dM^A_t+{1\over 1+\theta^A_t}\left((\rho^A_t+{\mu_t\over \sigma^A_t\lambda^A_t})(-2\theta^A_t+\rho^A_t-{\mu_t\over\sigma^A_t\lambda^A_t})\right)\lambda^A_tdt\right],
\end{split}
\end{equation*}
then if we set for all $t \leq T$
$$\rho^A_t:=-{\mu_t\over \lambda^A_t\sigma^A_t},
$$
and using the bound condition of $\Theta,\mu,\sigma^A,\theta^A$, the process ${L^2\over \Theta}$ is a true martingale. Hence we get:
 $$\mathbb{E}\left({L^2_T\over \Theta_T}\Big\vert {\mathcal G}_t\right)={L^2_t\over \Theta_t},\quad t\le T.$$
Since $\Theta_T=1$, we find again the expected result. Moreover $L=\bar Z$, $g^2_t=-{\mu_t\over \lambda^A_t}U^A_t$ ,  and $g^3_t=0$. Finally, $Y_t=\mathbb{\bar E}\left[\psi\vert {\mathcal G}_t\right]$ and $\xi_t=0, t\le T.$
\end{description}
\end{proof}
\begin{Remark}\label{proofexistence} We have proved that we can find the existence of solution of the triple BSDEs using only the VOM.
\end{Remark}
\subsubsection{ Incomplete market}
In the incomplete market case, the remark \ref{proofexistence} does not hold true. The VOM depends on the dynamics of $(\Theta,\theta^A,\theta^B,\beta)$. In the particular case where $\mathbb{G}=\mathbb{F}\vee\mathbb{H}^A$, we can find that the Proposition \ref{completeVOM} holds true. But in the more general case $\mathbb{G}=\mathbb{F}\vee\mathbb{H}^A\vee \mathbb{H}^B$, we can not prove the existence of the VOM but we can still characterize the process $\Theta$ with some martingale measure.
\begin{Proposition}\label{incompleteVOMM}
 Let consider the incomplete market $\mathbb{G}=\mathbb{F}\vee\mathbb{H}^A$, then the VOM $\bar \P$ defines the local martingale measure $\mathbb{Q}$ which minimizes the $L^2$-norm of $Z^{\mathbb{Q}}$, $\bar{Z}_T$ represents the Radon Nikodym density of $\bar{\P}$ with respect to $\mathbb{P}$ on ${\mathcal G}_T$ and  $ \bar{Z}_t=\mathbb{E}\left[\bar{Z}_T\vert {\mathcal G}_t\right]$. We find, for all $t\le T$, that 
$$\Theta_t={\bar Z^2_t\over \mathbb{E}\left[\bar Z^2_T\vert {\mathcal G}_t\right]}.$$
Moreover $$Y_t= \mathbb{\bar E}\left[\psi \vert {\mathcal G}_t\right].$$ 
In the more general case, where $\mathbb{G}=\mathbb{F}\vee \mathbb{H}^A\vee \mathbb{H}^B$, we can only prove that there exists a martingale measure $\mathbb{\bar P}$ such that for all  $t\le T$:
 $$\Theta_t={\bar Z^2_t\over \mathbb{E}\left[\bar Z^2_T\vert {\mathcal G}_t\right]}
\quad and \quad Y_t= \mathbb{\bar E}\left[\psi \vert {\mathcal G}_t\right].$$
\end{Proposition}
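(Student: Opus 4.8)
The plan is to run, for both filtrations, the same verification strategy used in Proposition \ref{completeVOM}, but now exploiting the incompleteness: instead of a single candidate density there is a whole family of local martingale measures, and the variance-optimal one will be singled out as the unique member for which $L^2/\Theta$ becomes a \emph{true} martingale. Concretely, for any local martingale measure $\mathbb{Q}$ with density process $L$, the representation Theorem \ref{ThmReprez} gives $dL_t = L_{t^-}\big(\rho^A_t dM^A_t + \rho^B_t dM^B_t + \rho_t dW_t\big)$, where the predictable triple is constrained by \eqref{constraint} (in the case $\mathbb{G}=\mathbb{F}\vee\mathbb{H}^A$ the $\rho^B$-term simply drops out). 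I would then compute $d(L^2_t/\Theta_t)$ by It\^o's formula for jump processes, using the dynamics of $\Theta$ from \eqref{TripleBSDEs}, exactly as in the complete-market computation.

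The analytic core is the following identity: after simplification, the finite-variation part of $L^2_t/\Theta_t$ equals
$$\frac{L^2_{t^-}}{\Theta_{t^-}}\left[g^1_t + (\rho_t - \beta_t)^2 + \sum_{i\in\{A,B\}}\lambda^i_t\frac{(\rho^i_t - \theta^i_t)^2}{1 + \theta^i_t}\right]dt.$$
I would minimize the bracket over $(\rho^A,\rho^B,\rho)$ subject to the constraint \eqref{constraint}. A Lagrange-multiplier computation yields the minimizer
$$\bar\rho_t = \beta_t - \frac{b_t}{a_t}\sigma_t, \qquad \bar\rho^i_t = \theta^i_t - \frac{b_t}{a_t}\sigma^i_t(1+\theta^i_t),$$
with $a_t,b_t$ as in \eqref{Valueprocess}, and the minimal value of the bracket equal to $b^2_t/a_t = -g^1_t$. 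Hence for every admissible density the drift is non-negative, and it vanishes precisely at $(\bar\rho^A,\bar\rho^B,\bar\rho)$.

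The proposition then follows by a standard optimality argument. For an arbitrary density $L$, non-negativity of the drift makes $L^2/\Theta$ a local submartingale; boundedness of $1/\Theta$ (recall the constraint $\Theta\ge\delta>0$), the $\mathrm{BMO}$-property of $\beta$, and the $R_2(\mathbb{P})$/$\mathrm{BMO}$ integrability from Theorem \ref{existence} upgrade this to a true submartingale, so that $\mathbb{E}[L^2_T]=\mathbb{E}[L^2_T/\Theta_T]\ge L^2_0/\Theta_0 = 1/\Theta_0$. For the distinguished density $\bar Z$ generated by $(\bar\rho^A,\bar\rho^B,\bar\rho)$ the drift vanishes, so $\bar Z^2/\Theta$ is a true martingale and $\mathbb{E}[\bar Z^2_T]=1/\Theta_0$; this identifies $\bar Z$ as the $L^2$-minimizer over local martingale measures and, using $\Theta_T=1$, gives
$$\frac{\bar Z^2_t}{\Theta_t} = \mathbb{E}\left[\bar Z^2_T \mid \mathcal{G}_t\right], \qquad \text{that is} \qquad \Theta_t = \frac{\bar Z^2_t}{\mathbb{E}\left[\bar Z^2_T \mid \mathcal{G}_t\right]}.$$
The identity $Y_t=\bar{\mathbb{E}}[\psi\mid\mathcal{G}_t]$ is obtained by a Girsanov computation: under $\bar{\mathbb{P}}$ the processes $\bar M^i_t = M^i_t - \int_0^t\bar\rho^i_s\lambda^i_s\,ds$ and $\bar W_t = W_t - \int_0^t\bar\rho_s\,ds$ are martingales, and substituting into the BSDE for $Y$ in \eqref{TripleBSDEs} the $\bar{\mathbb{P}}$-drift becomes $-g^2_t + \sum_i U^i_t\bar\rho^i_t\lambda^i_t + Z_t\bar\rho_t$, which vanishes identically once the explicit form of $g^2$ and of $(\bar\rho^A,\bar\rho^B,\bar\rho)$ are inserted; the terminal condition $Y_T=\psi$ finishes this part.

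Finally I would separate the two filtrations. When $\mathbb{G}=\mathbb{F}\vee\mathbb{H}^A$ the only default control is $\bar\rho^A$, and boundedness of the coefficients guarantees $\bar\rho^A>-1$, so $\bar Z$ is strictly positive and $\bar{\mathbb{P}}$ is a genuine equivalent, hence variance-optimal, martingale measure, recovering Proposition \ref{completeVOM}. In the general case $\mathbb{G}=\mathbb{F}\vee\mathbb{H}^A\vee\mathbb{H}^B$ the same construction still produces $\bar Z$ and both representations, but I cannot ensure that \emph{both} $\bar\rho^A$ and $\bar\rho^B$ remain in $(-1,+\infty)$; thus $\bar Z$ need not be positive and $\bar{\mathbb{P}}$ may only be a signed martingale measure, which is exactly why the statement is weakened to the mere existence of a martingale measure with the stated representation. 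I expect this positivity/sign issue for the two coupled default controls to be the main obstacle; by contrast, the completion of the square identifying $b^2/a=-g^1$ is a direct computation, and the passage from local to true (sub)martingales is routine given the $\mathcal{S}^\infty$ and $\mathrm{BMO}$ estimates of Theorem \ref{existence}.
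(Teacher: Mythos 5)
Your proposal follows the paper's strategy step for step --- It\^o's formula applied to $Z^2/\Theta$, identification of the drift $j_t=g^1_t+(\rho_t-\beta_t)^2+\sum_i\lambda^i_t{(\rho^i_t-\theta^i_t)^2}/(1+\theta^i_t)$, the choice of $(\bar\rho,\bar\rho^A,\bar\rho^B)$ that kills it, the $\mathcal{S}^\infty$/BMO upgrade from local to true (sub)martingale, and the Girsanov identification of $Y$ --- and your formulas agree with the paper's (indeed your $\bar\rho^A_t=\theta^A_t-\frac{b_t}{a_t}\sigma^A_t(1+\theta^A_t)$ is the one consistent with the constraint \eqref{constraint}; the paper's displayed $\rho^A_t-\theta^A_t=-\sigma^A_t b_t/a_t$ in the general case is missing the factor $(1+\theta^A_t)$). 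The one genuinely different step is how you certify non-negativity of the drift for competing densities. The paper completes the square by hand: for $\mathbb{G}=\mathbb{F}\vee\mathbb{H}^A$ this exhibits $j_t$ as a single non-negative square, but in the general case the decomposition of $C_t=(1+\theta^A_t)(\sigma^A_t)^2\lambda^A_t\, j_t$ leaves a negative term, and Remark \ref{VM} explicitly records that the authors cannot conclude $j\ge 0$; \emph{that} is the paper's reason for weakening the general-case statement to mere existence of a martingale measure with the stated representation. Your Lagrange-multiplier (equivalently Cauchy--Schwarz) argument --- the minimum of $x^2+\sum_i\frac{\lambda^i_t}{1+\theta^i_t}(y^i)^2$ over the affine constraint set equals $b_t^2/a_t=-g^1_t$ --- is cleaner and, provided you add the observation that $1+\theta^i_t>0$ so that the quadratic form is convex (which does hold: the splitting construction in the proof of Theorem \ref{existence} gives $1+\theta^i_t=\Theta^{k+1}_t/\Theta^{k}_t$ with both factors bounded below by positive constants), it proves $j\ge 0$ in the general case as well, hence a conclusion stronger than the one the paper states. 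By contrast, your closing diagnosis is off target: the weakening in the general case is not caused by a possible failure of $\bar\rho^i_t>-1$ --- the VOM is defined over signed martingale measures, and that positivity question is equally present, and equally left unaddressed, in the $\mathbb{F}\vee\mathbb{H}^A$ case --- but by the paper's inability to verify $j\ge0$, which your minimization argument in fact repairs.
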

\begin{proof}
\begin{description}
\item[First step:]
Consider the case  where $\mathbb{G}=\mathbb{F}\vee \mathbb{H}^A$ and $\mathbb{Q}$ be a martingale measure for the asset $D^A$. Let define $Z^{\mathbb{Q}}_T$ its Radon Nikodym density with respect to $\mathbb{P}$ on ${\mathcal G}_T$. We define the process $Z^{\mathbb {Q}}_t=\mathbb{E}\left[Z^{\mathbb{Q}}_T\vert{\mathcal G}_t\right]$. Using martingale representation Theorem \ref{ThmReprez}, there exists two $\mathbb{G}$-predictable processes $\rho^A$ and $\rho$ such that 
$$
dZ^\Q_t=Z^\Q_{t^-}\left[\rho^A_tdM^A_t+\rho_tdW_t\right].
$$ 
Using It\^o's formula, we find:
\begin{equation}\label{star1}
d\left({{(Z^{\mathbb{Q}}_t)}^2\over \Theta_t}\right)={{(Z^{\mathbb{Q}}_{t^-})}^2\over \Theta_{t^-}}\left[\left({{(1+\rho^A_t)}^2\over 1+\theta^A_t}-1\right)dM^A_t+(2\rho_t-\beta_t)dW_t +j_tdt\right],
\end{equation}
where $j_t={(\rho_t-\beta_t)}^2+{{(\rho^A_t-\theta^A_t)}^2\over 1+\theta^A_t}\lambda^A_t+g^1_t$. Since $\mathbb{Q}$  is a  martingale measure for $D^A$ we get using \eqref{constraint} that
$$
\mu^A_t+\rho^A_t\sigma^A_t\lambda^A_t+\rho_t\sigma_t=0.
$$
Hence using this equation, we can find $\rho^A$ using $\rho$ and plotting this result on the expression of $j$. We obtain
\begin{equation*}
\begin{split}
j_t={(\rho_t-\beta_t)}^2+{{(\mu_t+\sigma_t\rho_t+\sigma^A_t\theta^A_t\lambda^A_t)}^2\over (1+\theta^A_t){(\sigma^A_t)}^2\lambda^A_t}-{{(\mu_t+\beta_t\sigma_t+\theta^A_t\sigma^A_t\lambda^A_t)}^2\over (1+\theta^A_t){(\sigma^A_t)}^2\lambda^A_t+\sigma^2_t}.
\end{split}
\end{equation*}
\noindent Let now define 
$$
\bar \rho_t=\rho_t-\beta_t,  \quad \bar a_t=\sigma^2_t+(1+\theta^A_t){(\sigma^A_t)}^2\lambda^A_t \quad and \quad \bar b_t=\mu_t+\sigma_t\beta_t+\sigma^A_t\theta^A_t\lambda^A_t,
$$
then we get:  
\begin{equation*}
\begin{split}
j_t&={1\over (1+\theta^A_t){(\sigma^A_t)}^2\lambda^A_t}\left[\bar a_t \bar \rho_t+2\bar \rho_t\bar b_t\sigma_t+{\bar b^2_t\sigma^2_t\over \bar a_t}\right]={\bar a_t\over (1+\theta^A_t){(\sigma^A_t)}^2\lambda^A_t}{\left(\bar \rho_t+{\bar b_t\sigma_t\over \bar a_t}\right)}^2>0,
\end{split}
\end{equation*}
\eqref{star1}, $j\ge 0$ and the fact that the process ${{(Z^{\mathbb{Q}})}^2\over \Theta}$ is a submartingale (since $Z^{\mathbb{Q}}$ is a martingale and ${1\over \Theta}\in {\mathcal S}^{\infty}[0,T]$), we deduce $\mathbb{E}\left[{{(Z^{\mathbb{Q}}_T)}^2\over \Theta_T}\right]\ge {{(Z^{\mathbb{Q}}_0)}^2\over \Theta_0}$, since $\Theta_T=1$ and $Z^{\mathbb{Q}}_0=1$.
Finally, we get for any martingale measure for $D^A$ that $\mathbb{E}\left[{(Z_T^{\mathbb{Q}})}^2\right]\ge {1\over \Theta_0}$. Moreover, if we set $\bar \rho_t=-{\bar b_t\sigma_t\over \bar a_t}$, then $\bar Z$ is a true martingale measure since $(\Theta,\theta^A,\theta^B\beta)\in {\mathcal S}^{\infty}[0,T]\times {\mathcal S}^{\infty}[0,T]\times{\mathcal S}^{\infty}[0,T]\times {\rm BMO}$ and $\mu, \sigma^A,\sigma^B$ are bounded (the process $b$,$a$, $\rho$ and $\rho^A$ are bounded). We call $\mathbb{\bar P}$ the  martingale measure under this condition, then $\mathbb{E}\left[\bar Z^2_T\right]= {1\over \Theta_0}$. We deduce $\mathbb{\bar P}$ is the martingale measure which minimizes the $L^2$-norm of $Z$ and $\bar \Theta_t={\bar Z^2_t\over \mathbb{E}\left[\bar Z^2_T\vert {\mathcal G}_t\right]}, t\le T$. 

Using the explicit expression of $\rho$, we find:
\begin{equation*}
\rho_t=-{\sigma_t\bar b_t\over \bar a_t}+\beta_t \quad \textrm{and} \quad \rho^A_t=-{(1+\theta^A_t)\sigma^A_t\bar b_t\over \bar a_t}+\theta^A_t.
\end{equation*}
\noindent Moreover since
\begin{equation*}
\begin{split}
g^2_t&={-\bar b_t(\sigma_t Z_t+(1+\theta^A_t)U^A_t\sigma^A_t\lambda^A_t)\over \bar a_t}+\beta_tZ_t+U^A_t\lambda^A_t\\
&=Z_t\left(-{\bar b_t\sigma_t\over \bar a_t}+\beta_t\right)+U^A_t\left(-{(1+\theta^A_t)\sigma^A_t\bar b_t\over \bar a_t}+\theta^A_t\right)\lambda^A_t\\
&=Z_t\rho_t+U^A_t\rho^A_t\lambda^A_t,
\end{split}
\end{equation*}
then we conclude that $Y_t=\mathbb{\bar E}\left[\psi\vert {\mathcal G}_t\right]$. Therefore the characterization of the price of $\psi$ (using Mean-Variance approach) and the VOM in this incomplete case are well defined using the vector $(\Theta,\theta^A,\theta^B,\beta)$ associated to the first BSDE.\\
\item[Second step:]
We consider now the more general case where $\mathbb{G}=\mathbb{F}\vee \mathbb{H}^A\vee \mathbb{H}^B$. Let consider $\mathbb{Q}$ be a martingale measure for the asset $D^A$ and let define $Z^{\mathbb{Q}}_T$ its Radon Nikodym density with respect to $\mathbb{P}$ on ${\mathcal G}_T$. We can define the process $Z^{\mathbb {Q}}_t=\mathbb{E}\left[Z^{\mathbb{Q}}_T\vert{\mathcal G}_t\right]$. Using martingale theorem representation \ref{ThmReprez} there exists $\mathbb{G}$-predictable processes $\rho^A$, $\rho^B$ and $\rho$ such that 
$$
dZ^\Q_t=Z^\Q_{t^-}\left[\rho^A_tdM^A_t+\rho^B_tdM^B_t+\rho_tdW_t\right].
$$ 
Using It\^o's formula, we find:
\begin{equation*}
\begin{split}
d\left({{(Z^{\mathbb{Q}}_t)}^2\over \Theta_t}\right)&={{(Z^{\mathbb{Q}}_{t^-})}^2\over \Theta_{t^-}}\left[\sum_{i\in\{A,B\}}\left({{(1+\rho^i_t)}^2\over 1+\theta^i_t}-1\right)dM^i_t+(2\rho_t-\beta_t)dW_t +j_tdt\right],
\end{split}
\end{equation*}
\noindent where $j_t={(\rho_t-\beta_t)}^2+\sum_{i\in\{A,B\}}{{(\rho^i_t-\theta^i_t)}^2\over 1+\theta^i_t}\lambda^i_t+g^1_t$. Since $\mathbb{Q}$ is a martingale measure for $D^A$ we get by \eqref{constraint} 
$$
\mu^A_t+\sum_{i\in\{A,B\}}\rho^i_t\sigma^i_t\lambda^i_t+\rho_t\sigma_t=0.
$$
Hence using this equation, we can find $\rho^A$ using $\rho$  and $\rho^B$ and then plotting this result on the expression of $j$. Let first recall a notation:
\begin{equation*}
\begin{split}
a_t=\sigma^2_t+\sum_{i\in \{A,B\}}(1+\theta^i_t){(\sigma^i_t)}^2\lambda^i_t \quad and \quad
b_t=\mu_t+\sigma_t\beta_t+\sum_{i\in \{A,B\}}\theta^i_t\sigma^i_t\lambda^i_t,
\end{split}
\end{equation*}
so we obtain:
\begin{equation*}
\begin{split}
C_t:=&(1+\theta^A_t){(\sigma^A_t)}^2\lambda^A_t j_t\\&={(\rho_t-\beta_t)}^2[\sigma^2_t+(1+\theta^A_t){(\sigma^A_t)}^2\lambda^A_t]+{{(\rho^B_t-\theta^B_t)}^2\over 1+\theta^B_t}\lambda^B_t\left[\sum_{i\in \{A,B\}}(1+\theta^i_t){(\sigma^i_t)}^2\lambda^i_t\right]\\&+{b^2_t\over a_t}\left[\sigma^2_t+(1+\theta^B_t){(\sigma^B_t)}^2\lambda^B_t\right]+2(\rho^B_t-\theta^B_t)(\rho_t-\beta_t)\sigma_t\sigma^B_t\lambda^B_t\\&+2b_t\left[(\rho_t-\beta_t)\sigma_t+(\rho^B_t-\theta^B_t)\sigma^B_t\lambda^B_t\right].
\end{split}
\end{equation*}
Then from the two first terms, we add and remove  an additional process to find the process $a$. We get:
\begin{eqnarray*}
C_t&=&\left[{(\rho_t-\beta_t)}^2a_t+{b^2_t\over a_t}\sigma^2_t+2 b_t(\rho_t-\beta_t)\sigma_t\right]+(1+\theta^B_t)\lambda^B_t\Big[{{(\rho^B_t-\theta^B_t)}^2\over {(1+\theta^B_t)}^2}a_t
+2b_t\sigma^B_t{\rho^B_t-\theta^B_t\over 1+\theta^B_t}\\&&+{b^2_t\over a^2_t}{(\sigma^B_t)}^2\Big]+(1+\theta^B_t)\lambda^B_t\left[2(\rho_t-\beta_t){(\rho^B_t-\theta^B_t)\over (1+\theta^B_t)}\sigma_t\sigma^B_t -{(\rho_t-\beta_t)}^2{(\sigma^B_t)}^2-{{(\rho^B_t-\theta^B_t)}^2\over {(1+\theta^B_t)}^2}\sigma^2_t\right].
\end{eqnarray*}
Finally, we find a more explicit expression of $C$:
\begin{eqnarray*}
C_t&=&a_t\left[{\left((\rho_t-\beta_t)+{b_t\over a_t}\sigma_t\right)}^2+(1+\theta^B_t)\lambda^B_t{\left({(\rho^B_t-\theta^B_t)\over 1+\theta^B_t}+{b_t\sigma^B_t\over a_t}\right)}^2\right]\\
&&-(1+\theta^B_t)\lambda^B_t{(\sigma^B_t)}^2 {\left((\rho_t-\beta_t)-{\sigma_t\over \sigma^B_t}{\rho^B_t-\theta^B_t\over 1+\theta^B_t}\right)}^2.
\end{eqnarray*}
It follows that if we set $\rho_t-\beta_t:=-{b_t\over a_t}\sigma_t$ and $\rho^B_t-\theta^B_t:= -(1+\theta^B_t)\sigma^B_t{b_t\over a_t}$, then we find $j=0$ and  $\rho^A_t-\theta^A_t=-\sigma^A_t{b_t\over a_t}$. Since $({1\over \Theta},\theta^A,\theta^B,\beta)\in {\mathcal S}^{\infty}[0,T]\times {\mathcal S}^{\infty}[0,T]\times {\mathcal S}^{\infty}[0,T]\times {\rm BMO}$ and $\mu$, $\sigma^A$, $\sigma^B$ and $\sigma$ are bounded then the processes $b$, $a$, $\rho$, $\rho^A$ and $\rho^B$ are bounded too. Therefore, we deduce that there exists a martingale measure $\mathbb{\bar P}$ such that 
\begin{equation}\label{ineqVom} 
{\delta}\le \Theta_t={\bar Z^2_t\over \mathbb{E}\left[\bar Z^2_T\vert {\mathcal G}_t\right]},\quad t\le T.
\end{equation}
Moreover we find, for all $t \le T$, that 
$$
g^2_t=Z_t\rho_t+\sum_{i\in\{A,B\}}U^i_t\rho^i_t\lambda^i_t
$$ 
then 
$$Y_t=\mathbb{\bar E}\left[\psi \vert {\mathcal G}_t\right].$$
\end{description}
\end{proof}
\begin{Remark}\label{VM}(About the VOM)
\begin{itemize}
\item To identify that $\mathbb{\bar P}$ is the VOM in the general case where $\mathbb{G}=\mathbb{F}\vee \mathbb{H}^A\vee \mathbb{H}^B$, we should prove that $j\ge 0$ (as in the first case of the previous Proposition). But from the last expression of $j$, we can not prove that this condition holds true. However, we can remark that the assertion of VOM will be justify if one of the following equality is satisfied:
\begin{equation*}
 \sigma^B_t(\rho_t-\beta_t)=\sigma_t{\rho^B_t-\theta^B_t\over 1+\theta^B_t},\quad \sigma^A_t{\rho^B_t-\theta^B_t\over 1+\theta^B_t}=\sigma^B_t{\rho^A_t-\theta^A_t\over 1+\theta^A_t}\quad \textrm{or} \quad\sigma^A_t(\rho_t-\beta_t)=\sigma_t{\rho^A_t-\theta^A_t\over 1+\theta^A_t}  .
\end{equation*}
\item The generalization of the expectation under a $\sigma$-measure ( $Y_t=\mathbb{\bar E}\left[\psi \vert {\mathcal G}_t\right])$ was defined by Cerny and Kallsen in \cite{CeKa07} p 1512. Moreover, given the solution of the first BSDE: $(\Theta,\theta^A,\theta^B,\beta) \in {\mathcal S}^{\infty}[0,T]\times {\mathcal S}^{\infty}[0,T]\times {\mathcal S}^{\infty}[0,T]\times \mathrm{BMO}$ with the constraint $\Theta\ge \delta>0$, the martingale $\bar Z$ satisfies \eqref{ineqVom}. Since $\psi$ is bounded, we conclude:
$$|Y_t|\le 2\mathbb{E}\left[{\bar Z^2_T\over \bar Z^2_t}\vert {\mathcal G}_t\right]+2\mathbb{E}\left[[\psi|^2\vert {\mathcal G}_t\right]\le 2\left[{1\over \delta}+||\psi||^2_{\infty}\right].$$
Therefore $Y\in {\mathcal S}^{\infty}[0,T]$ and from representation theorem \ref{ThmReprez}, the martingale part $M$ of $Y$ given by
$$M_t=\int_0^t \sum_{i\in\{A,B\}} U^i_s dM^i_s +\int_0^t Z_s dW_s$$
 is $\mathrm{BMO}$. Moreover from Lemma \ref{jump} in Appendix, since $Y\in {\mathcal S}^{\infty}[0,T]$, we obtain that $\theta^A$ and $\theta^B$ are bounded. We conclude if the solution of the first BSDE exists $(\Theta,\theta^A,\theta^B,\beta) \in {\mathcal S}^{\infty}[0,T]\times {\mathcal S}^{\infty}[0,T]\times {\mathcal S}^{\infty}[0,T]\times \mathrm{BMO}$, with the constraint $\Theta\ge \delta>0$, that the solution of the second BSDE $(Y,U^A,U^B,Z)\in {\mathcal S}^{\infty}[0,T]\times {\mathcal S}^{\infty}[0,T]\times {\mathcal S}^{\infty}[0,T]\times \mathrm{BMO}$ exists. 
\end{itemize}
\end{Remark}

\subsection{Proof of Theorem \ref{existence}}\label{SecTHm3.5}
We prove in this part the existence of $(\Theta,\theta^A,\theta^B,\beta)$ in the space ${\mathcal S}^{\infty}[0,T]\times {\mathcal S}^{\infty}[0,T]\times {\mathcal S}^{\infty}[0,T]\times \mathrm{BMO}$ with the constraint $\Theta>\delta$. Moreover, we recall that given the solution of this first BSDE, the existence of the second and the third BSDEs is given by Remark \ref{Remarkxi} and \ref{VM}.
\par\medskip
Note that to prove the existence of $(\Theta,\theta^A,\theta^B,\beta)$, we do not need the assumption that the VOM exists and should satisfied the $\R_2(\P)$ condition (this assumption implies that the Radon-Nikodym of  the VOM $\bar \P$ with respect to $\P$ on ${\mathcal G}_T$ is non-negative). Moreover, if $(\Theta,\theta^A,\theta^B,\beta)$ is defined such that $\bar Z$ is non negative, then it implies that $\bar \P$ satisfies the $R_2(\P)$ condition.
\par\medskip In fact, in the general discontinuous filtration, it is difficult to prove that we can find $(\Theta,\theta^A,\theta^B,\beta)$ solution of the first  BSDE such that $\Theta>0$ (see \cite{LimA04} for more discussions about the difficulty). Indeed in the set up of \cite{LimA04}, the author makes the hypothesis that all coefficients of the asset are $\F$-predictable. This strong hypothesis makes the jump part of process $\Theta$ equals to zero. In our framework, this hypothesis can not be satisfied since the intensities processes  are $\G$-adapted. Hence, we deal with splitting method of BSDE defined by  \cite{KhaLim11} to prove that, even if the jump part of the process is not equal to zero, we can split the jump BSDE in continuous BSDEs such that each BSDE have a solution in a good space.  The proof is divided in two parts. Firstly, we will give the splitted BSDEs in this framework and secondly, we will solve recursively each BSDE. 
\begin{description}
\item[First step:] Let define, for all $t\in [0,T]$, $\bar g_t=\Theta_{t^-} g^1_t$, $\bar \theta^i_t=\Theta_{t^-}\theta^i_t$ for $i\in\{A,B\}$, $\bar \theta_t=\bar\theta^A_t1_{\{t< \tau^A\}}+\bar\theta^B_t1_{\{\tau^A\le t\le \tau^B\}}$ and $\bar \beta_t=\Theta_{t^-}\beta_t$. Then, we can define the BSDE $(\bar g,\Theta_T)$ which is given by:
$$d\Theta_t=-\bar f_tdt+\bar\theta^A_tdH^A_t+\bar\theta^B_tdH^B_t+\bar \beta_tdW_t \quad \textrm{with} \quad \Theta_T=1,$$
where $\bar f_t=\bar g_t+\bar\theta^A_t \lambda^A_t+\bar\theta^B_t\lambda^B_t$. We also define
$$\Delta_k=\{(l_1,\cdots l_k)\in {(\R^+)}^k: l_1\le\cdots \le l_k\}, \quad 1\le k\le 2.$$
Since we work with the same assumption (density assumption) and notation as in \cite{KhaLim11}, then we can decompose $\Theta_T$ and $\bar g$ between each default events:
\begin{eqnarray*}
\Theta_T&=&\gamma^0 1_{\{\{0\le T< \tau_A\}\}}+\gamma^1(\tau^A)1_{\{\tau^A\le T\le\tau^B\}}+\gamma^2(\tau^A,\tau^B)1_{\{\tau^B<T\}}
\end{eqnarray*}
and
\begin{eqnarray}\label{eqgderive}
\bar f_t(\Theta_t,\bar\theta_t,\bar\beta_t)&=&\bar f^0_t(\Theta_t,\bar\theta_t,\bar\beta_t) 1_{\{0\le t< \tau_A\}}+\bar f^1_t(\Theta_t,\bar\theta_t,\bar\beta_t,\tau^A)1_{\{\tau^A\le t\le \tau^B\}}\\
&+&\bar f^2_t(\Theta_t,\bar\theta_t,\bar\beta_t,(\tau^A,\tau^B))1_{\{\tau^B<t\}}
\end{eqnarray}
\noindent where $\gamma^0$ is ${\cal F}_T$-measurable, $\gamma^k$ is ${\cal F}_T\otimes{\cal B}(\Delta_k)$-measurable for $k=\{1,2\}$, $\bar g^0$ is ${\cal P}(\mathbb{F})\otimes {\cal B}(\mathbb{R})\otimes {\cal B}(\mathbb{R})$-measurable and $\bar g^k$ is ${\cal P}(\mathbb{F})\otimes {\cal B}(\mathbb{R})\otimes {\cal B}(\mathbb{R})\otimes {\cal B}(\Delta^k)$. Moreover, since $\Theta_T=1$ (bounded) (see proposition 3.1 in \cite{KhaLim11}), we have that the variables $\gamma^k(l_{(k)})=1$, for $k=\{0,1,2\}$.

Let now give the main result of splitting BSDE which is a first step to prove the existence of $(\Theta,\bar\theta^A,\bar\theta^B,\bar\beta)$. Let  $l_{(2)}=(l_1,l_2) \in \Delta_2$ and assume that the following BSDE:
\begin{equation}\label{bsde1split}
d\Theta^2_t(l_{(2)})=-\bar f^2_t\left(\Theta^2_t(l_{(2)}),0,\bar\beta^2_t(l_{(2)}),l\right)dt+\bar\beta^2_t(l_{(2)})dW_t,\quad \Theta^2_T(l_{(2)})=\gamma^2(l_{(2)})
\end{equation}
 admits a  solution $\left(\Theta^2_t((l_{(k+1)})),\bar\beta^2_t((l_{(k+1)}))\right)\in S^\infty ([l_2\wedge T,T])\times {\cal H}^2[l_2\wedge T,T]$. And that
\begin{eqnarray}\label{splibsde}
d\Theta^k_t(l_{(k)})&=&-\bar f^k_t\left(\Theta^k_t(l_{(k)}),(\Theta^{k+1}_t(l_{(k)},t)-\Theta^{k}_t(l_{(k)})),\bar \beta^k_t(l_{(k)}),l_{(k)}\right)dt+\bar \beta^k_t(l_{(k)})dW_t,\nonumber\\
 \Theta^k_T(l_{(k)})&=&\gamma^k(l_{(k)})
\end{eqnarray}  
admits ,for $k=\{0,1\}$, a solution $\left(\Theta^k(l_{(k)},\bar \beta^k(l_{(k)})\right)\in  {\cal S}^\infty ([l_k\wedge T,T])\times {\cal H}^2[l_k\wedge T,T]$, where $l_{(k)}=(l_1,\cdots l_k)$.
Then $(\Theta,\bar\theta^A,\bar \theta^B,\bar\beta)$ is given following \cite{KhaLim11} by:
\begin{equation}\label{recollement}
\begin{split}
&\Theta_t=\Theta^0_t 1_{\{t<\tau^A\}}+\Theta^1_t(\tau^A)1_{\{\tau^A\le t\le \tau^B\}}+\Theta^2_t(\tau^A,\tau^B)1_{\{\tau^B<t\}},\\&
\bar \beta_t=\bar \beta^0_t 1_{\{t<\tau^A\}}+\bar \beta^1_t(\tau^A)1_{\{\tau^A\le t\le \tau^B\}}+\bar \beta^2_t(\tau^A,\tau^B)1_{\{\tau^B<t\}},\\&
\bar \theta^B_t=\Theta^{2}_t(\tau_{A},t)-\Theta^1_t(\tau_{A}),\\&
\bar \theta^A_t=\Theta^{1}_t(t)-\Theta^0_t(t).
\end{split}
\end{equation}  
 Therefore, to prove the existence of $(\Theta,\theta^A,\theta^B,\beta)$ we have to prove the existence of solutions of the BSDEs $\eqref{bsde1split}$ and $\eqref{splibsde}$. 
\item[Second step (the recursive approach):] We prove recursively the existence of the solution of these BSDEs. Firstly, we prove the existence of the solution of $\eqref{bsde1split}$ and secondly, assuming that the solution of \eqref{splibsde} exists and satisfies the constraint for step $k+1$, we prove the same assertion for the step $k$.
\begin{enumerate}
\item Let consider the BSDE $\eqref{bsde1split}$:
\begin{equation*}
d\Theta^2_t(l_{(2)})=-\bar f^2_t(\Theta^2_t(l_{(2)}),0,\bar\beta^2_t(l_{(2)}),l_{(2)})dt+\bar\beta^2_t(l_{(2)})dW_t,\quad \Theta^2_T(l_{(2)})=\gamma^2(l_{(2)}).
\end{equation*}
Since the coefficient $\bar g$ is given by:
\begin{equation}\label{gmodif}
\begin{split}
&\bar g_t(\Theta_t,\bar\theta,\beta_t)=-{{\left[ \mu_t\Theta_t+\sum_{i\in\{A,B\}}\bar\theta^i_t\sigma^i_t\lambda^i_t+\sigma_t\bar\beta_t\right]}^2\over \Theta_t\sigma^2_t+\sum_{i\in\{A,B\}} (\Theta_t+\bar\theta^i_t){(\sigma^i_t)}^2\lambda^i_t},
\end{split}
\end{equation}
from the predictable decomposition of the composition of assets coefficient:
\begin{equation}\label{recollement}
\begin{split}
&\sigma_t=\sigma^0_t 1_{\{t<\tau^A\}}+\sigma^1_t(\tau^A)1_{\{\tau^A\le t\le \tau^B\}}+\sigma^2_t(\tau^A,\tau^B)1_{\{\tau^B<t\}},\\&
\mu_t=\mu^0_t 1_{\{t<\tau^A\}}+\mu^1_t(\tau^A)1_{\{\tau^A\le t\le \tau^B\}}+\mu^2_t(\tau^A,\tau^B)1_{\{\tau^B<t\}},\\&
\sigma^A_t=\sigma^{1,0}_t 1_{\{t<\tau^A\}}+\sigma^{1,1}_t(\tau^A)1_{\{\tau^A\le t\le \tau^B\}}+\sigma^{1,2}_t(\tau^A,\tau^B)1_{\{\tau^B<t\}},\\&
\sigma^B_t=\sigma^{2,0}_t 1_{\{t<\tau^A\}}+\sigma^{2,1}_t(\tau^A)1_{\{\tau^A\le t\le \tau^B\}}+\sigma^{2,2}_t(\tau^A,\tau^B)1_{\{\tau^B<t\}},
\end{split}
\end{equation}  
and by our model assumption (see Remark \ref{RemHYPMOD}), we get
$$\lambda^A_t=\lambda^{1,0}_t 1_{\{t<\tau^A\}} \quad \textrm{and} \quad \lambda^B_t=\lambda^{2,1}_t(\tau^A)1_{\{\tau^A\le t\le \tau^B\}}.$$
We so find that:
$$\bar f^2_t(\Theta^2_t(l_{(2)}),0,\bar \beta^2_t(l_{(2)}),l_{(2)})=\Theta^2_t(l_{(2)}){\left[{\mu^2_t(l_{(2)})\over {(\sigma^2_t(l_{(2)}))}^2}+{\bar\beta^2_t(l_{(2)})\over \Theta^2_t(l_{(2)})}\right]}^2, \quad t\in [0,T]$$
\noindent Using the result of Section \ref{VOM}, in the complete market case when $\G=\F$, we conclude:
\begin{equation}\label{Theta2}
\Theta^2_t(l_{(2)})={Z^2_t(l_{(2)})\over \mathbb{E}\left[{Z^2_T(l_{(2)})\over \gamma^2(l_{(2)})}\right]}, \quad t\le T, l_{(2)}\in \Delta_2,
\end{equation}
\noindent where the family of processes $Z(.)$ satisfies the SDE given by
$$
{dZ_t(l_{(2)})\over Z_t(l_{(2)})}=-{\mu^2_t(l_{(2)})\over \sigma^2_t(l_{(2)})}dW_t,$$
with $Z_0(l_{(2)})=1$. Since $\mu^2$ and $\sigma^2$ are bounded, the martingale $M_t(l_{(2)}):=\int_0^t {\mu^2_s(l_{(2)})\over \sigma^2_s(l_{(2)})}dW_s$ is $\rm{BMO}$. We deduce so that $Z(l_{(2)})$ satisfies the $R_2(\P)$ inequality. Moreover $\gamma^2(l_{(2)})=1$, we conclude that there exists a constant $\delta^2>0$ such that for all $t\in [0,T]$ and $l_{(2)}\in \Delta_2$,  $\Theta^2_t(l_{(2)})\ge \delta^2$. The existence of  $\bar\beta^2(l_{(2)})$ is given by the martingale part of the process given by \eqref{Theta2}. Moreover since $\Theta^2(l_{(2)})$ is bounded then the coefficient $\bar g^2$ satisfies a quadratic growth with respect to $\bar\beta^2(l_{(2)})$. Therefore since the terminal condition $\gamma^2(l_{(2)})$ is bounded, we conclude from Kobylanski \cite{Kob00}, that $\bar\beta(l_{(2)})$ is $\mathrm{BMO}$.
\item Let assume now that there exists a solution which satisfies the constraint for the step $k+1$. That means that the pair $(\Theta^{k+1}(l_{(k+1)}),\bar\beta^{k+1}_t(l_{(k+1)}))\in {\mathcal S}^{\infty}[l_{k+1},T]\times \rm{BMO}$ and that there exists  a non negative constant $\delta^{k+1}$ such that $\Theta^{k+1}(l_{(k+1)})\ge \delta^{k+1}$. Let us now proved the existence of the pair $(\Theta^k(l_{(k)}),\bar \beta^k(l_{(k)}))\in {\mathcal S}^{\infty}[l_k,T]\times \mathrm{BMO}$ at step $k$:
\begin{equation*}
\begin{split}
&d\Theta^k_t(l_{(k)})=-\bar f^k_t\left(\Theta^k_t(l_{(k)}),(\Theta^{k+1}_t(l_{(k)},t)-\Theta^{k}_t(l_{(k)})),\bar \beta^k_t(l_{(k)}),l_{(k)}\right)dt+\bar \beta^k_t(l_{(k)})dW_t, \\&\Theta^k_T(l_{(k)})=\gamma^k(l_{(k)}).
\end{split}
\end{equation*}  
From the decomposition of \eqref{gmodif}, we find:
\begin{equation*}
\begin{split}
&\bar f^k_t\left(\Theta^k_t(l_{(k)}),(\Theta^{k+1}_t(l_{(k)},t)-\Theta^{k}_t(l_{(k)})),\bar \beta^k_t(l_{(k)}),l_{(k)}\right)\\&=-{{\left[ \mu^k_t(l_{(k)})\Theta^k_t(l_{(k)})+(\Theta^{k+1}_t(l_{(k)},t)-\Theta^{k}_t(l_{(k)}))\sigma^{k+1,k}_t(l_{(k)})\lambda^{k+1,k}_t(l_{(k)})+\sigma^k_t(l_{(k)})\bar\beta^k_t(l_{(k)})\right]}^2\over \Theta^k_t(l_{(k)})
{\sigma^k_t(l_{(k)})}^2+(\Theta^{(k)}_t(l_{(k)})+\Theta^{k+1}_t(l_{(k)},t)-\Theta^{k}_t(l_{(k)})){\sigma^{k+1,k}_t(l_{(k)})}^2\lambda^{k+1,k}_t(l_{(k)})}\\
&+\left[\Theta^{k+1}_t(l_{(k)},t)-\Theta^{k}_t(l_{(k)})\right]\lambda^{k+1,k}_t(l_{(k)}).
\end{split}
\end{equation*}
Let consider the processes:
\begin{equation*}
\begin{split}
&n_t=\left[\mu^k_t(l_{(k)})-\sigma^{k+1,k}_t(l_{(k)})\lambda^{k+1,k}_t(l_{(k)})\right]\vert\Theta^k_t(l_{(k)})\vert
,\\& \kappa_t:=\sigma^{k+1,k}_t(l_{(k)})\lambda^{k+1,k}_t(l_{(k)})\Theta^{k+1}_t(l_{(k)},t)
,\\& m_t:=\sigma^k_t(l_{(k)})\bar\beta^k_t(l_{(k)}).
\end{split}
\end{equation*}
and $\bar N_t= n_t +\kappa_t+m_t$, $d_t=|\Theta^k_t(l_{(k)})|{\sigma^k_t(l_{(k)})}^2$, $p_t=\Theta^{k+1}_t(l_{(k)},t){(\sigma^{k+1,k}_t(l_{(k)})}^2\lambda^{k+1,k}_t(l_{(k)})$ and $ D_t=d_t+p_t$.
\noindent 
We define
\begin{equation}\label{Eqf}
\begin{split}
f^{k}_t:&=\bar f^{k}_t(|\Theta^k_t(l_{(k)})|,\bar \beta^k_t,l_{(k)})\\&=-{{\bar N}^2_t\over \bar D_t}+\left[\Theta^{k+1}_t(l_{(k)},t)-|\Theta^{k}_t(l_{(k)})|\right]\lambda^{k+1,k}_t(l_{(k)}),
\end{split}
\end{equation}

where $\bar N^2_t=n^2_t+m^2_t+\kappa^2_t+2n_tm_t+2\kappa_tn_t+2\kappa_tm_t$. 
Since the process $\Theta^{k+1}_t(l_{(k)},t)\ge \delta^{k+1}>0$, then there exists a non negative constant $c>0$ such that $p_t>c$. Hence, we obtain:
\begin{equation*}
\begin{split}
&-f^{k}_t:={{\bar N}^2_t\over \bar D_t}-\left[\Theta^{k+1}_t(l_{(k)},t)-|\Theta^{k}_t(l_{(k)})|\right]\lambda^{k+1,k}_t(l_{(k)})\\&\le
\left[{n^2_t\over d_t}+{2n_t \kappa_t\over p_t}+|\Theta^{k}_t(l_{(k)})|\lambda^{k+1,k}_t(l_{(k)})\right]+{m^2_t\over d_t}\\&+\left[{2n_tm_t\over d_t}+{2m_t\kappa_t\over p_t}\right]+\left[{\kappa^2_t\over p_t}-\Theta^{k+1}_t(l_{(k)},t)\lambda^{k+1,k}_t(l_{(k)})\right].
\end{split}
\end{equation*}
Therefore since all processes $\Theta^{k+1}$, $\mu^k$, $\sigma^{k+1,k}$ and $\lambda^{k+1,k}$ are bounded, there exists bounded processes $a$, $b$ and $c$ such that:
$$ -f^{k}_t\le h_t:=a_t {|\Theta^k_t}(l_{(k)})|+b_t\bar\beta^k_t(l_{(k))}+{{\bar\beta^k_t(l_{(k)})^2\over |\Theta^k_t(l_{(k)})|}}.$$
The coefficient $f^k$ has a quadratic growth and from Kobylanski \cite{Kob00} since the terminal condition $\gamma^k$ is bounded, there exists a  pair $(\Theta^k(l_{(k)}), \bar\beta(l_{(k)}))$ solution of the BSDE associated to $(f^k(l_{(k)}),\gamma^k(l_{(k)}))$. Moreover, if we consider  the BSDE $d\bar x_t=-\bar h(\bar x_t,\bar Z_t)dt+\bar Z_tdW_t$ with terminal condition $x_T=\gamma^1(l_{(1)})=1$ and where the coefficient $\bar h$ is given by :
$$\bar h_t(\bar x_t,\bar Z_t)=-a_t \bar x_t-{\bar Z^2_t\over \bar x_t} -b_t \bar Z_t,\quad t\in [0,T],$$
we obtain using Proposition 5.11 in \cite{LimA04} that the solution $(\bar x, \bar Z) \in {\mathcal S}^{\infty}[0,T]\times\mathrm{BMO}$ exists. 

Moreover, there exists a non negative constant $\delta^k$ such that $\bar x_t\ge \delta^k, a.s$. Hence we conclude using Comparison theorem of quadratic BSDE (see \cite{Kob00}) that $f^k\ge -\bar h=h$. Finally, the pair of solution $(\Theta^k(l_{(k)}),\bar \beta^k(l_{(k)}))$ associated to $(f^k(l_{(k)}),\gamma^k(l_{(k)}))$ satisfies $\Theta^k_t(l_{(k)}) \ge \delta^k>0$ a.s for all $t\in [0,T],l_{(k)}\in \Delta_k$. Therefore, from \ref{Eqf}, we conclude $f^k=\bar f^k$. It follows that there exists a solution $(\Theta^k(l_{(k)}), \bar \beta^k(l_{(k)})) \in {\mathcal S}^{\infty}[l_k,T]\times \mathrm{BMO}$ associated to $(\bar f^k(l_{(k)}),\gamma^k(l_{(k)}))$ such that $\Theta^k_t(l_{(k)}) \ge \delta^k>0$ a.s for all $t\in [0,T], l_{(k)}\in \Delta_k$.
\end{enumerate}
 \end{description}
\subsection{Special case and explicit solution of the BSDE}
We conclude by giving an explicit example of our credit risk model which allow us to find explicit solution of each BSDEs. We assume $\mathbb{G}=\mathbb{F}\vee\mathbb{H}^A$ and that the parameters of the dynamics of the asset are constant before and after the default time $\tau^A$. Moreover, we assume that the intensity process is given by $\lambda_t=\lambda(1-H^A_t)$. Using theses assumptions, we find an explicit solution of the BSDE associated to $(g^1,\Theta_T)$ using the splitting approach.
\begin{Assumption}\label{constparams} The processes  $\mu, \sigma,\sigma^A, \lambda$ in \eqref{bondA} satisfy the following sassumptions:
\begin{equation*}
\begin{split}
&\mu_t=\mu(H^A_t)=\mu^01_{\{\tau^A>t\}}+\mu^11_{\{\tau^A\le t\}},\\
&\sigma_t=\sigma(H^A_t)=\sigma^01_{\{\tau^A>t\}}+\sigma^11_{\{\tau^A\le t\}},\\
& \sigma^A_t=\sigma^A(H^A_t)=\kappa 1_{\{\tau^A>t\}},\\
& \lambda_t=\lambda(H^A_t)=\lambda 1_{\{\tau^A>t\}},
\end{split}
\end{equation*}
such that $\mu^0\kappa=(\sigma^0)^2+\kappa^2\lambda$.
 \end{Assumption}
\begin{Proposition}\label{specialcase} Under Assumption \ref{constparams}, there exists a solution of the BSDE associated to $(g^1, \Theta)$ given by:
$$\Theta_t=\exp\left[-{\mu^0\over \kappa}(T-t)\right]1_{\{\tau^A>t\}}+\exp\left[-{\left({\mu^1\over \sigma^1}\right)}^2(T-t)\right]1_{\{\tau^A\le t\}},\quad t\le T.$$
\end{Proposition}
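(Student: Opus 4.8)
The plan is to use the splitting method from the proof of Theorem~\ref{existence}, specialized to the single-default filtration $\mathbb{G}=\mathbb{F}\vee\mathbb{H}^A$ (so $\tau^B=\infty$ and only the two regimes survive: $k=0$ before $\tau^A$ and $k=1$ after $\tau^A$, the latter being terminal). In this setting the driver reduces to
$$g^1_t = -\frac{(\mu_t + \theta^A_t\sigma^A_t\lambda^A_t + \sigma_t\beta_t)^2}{\sigma^2_t + (1+\theta^A_t)(\sigma^A_t)^2\lambda^A_t},$$
and by the gluing formula \eqref{recollement} it suffices to produce the two pieces $\Theta^0,\Theta^1$ together with their Brownian coefficients and to check the lower bound. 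The verification is then a matter of exhibiting an explicit candidate, since existence in the correct spaces is already furnished by Theorem~\ref{existence}.

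First I would solve the terminal regime $k=1$ on $\{\tau^A\le t\}$. By Assumption~\ref{constparams} one has $\sigma^A=\lambda^A=0$ after the default, so the model collapses to the complete continuous case $\mathbb{G}=\mathbb{F}$ of the first part of Proposition~\ref{completeVOM}, with the \emph{constant} market price of risk $-\mu^1/\sigma^1$. Because the coefficients are constant, the associated VOM density is the exponential martingale $\bar Z=\mathcal{E}(-\tfrac{\mu^1}{\sigma^1}W)$ with constant integrand, and a direct computation of $\Theta^1_t=\bar Z^2_t/\mathbb{E}[\bar Z^2_T\vert\mathcal{G}_t]$ yields the deterministic value $\Theta^1_t=\exp[-(\mu^1/\sigma^1)^2(T-t)]$ with vanishing coefficient $\bar\beta^1\equiv 0$. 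In particular $\Theta^1$ does not depend on the parameter $l_1=\tau^A$, which is precisely why \eqref{recollement} reproduces the stated post-default term.

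Next I would treat the pre-default regime $k=0$ through the ansatz that $\Theta^0$ is deterministic, $\Theta^0_t=\exp[-\tfrac{\mu^0}{\kappa}(T-t)]$, so that $\bar\beta^0\equiv 0$, with the frozen post-default value $\Theta^1$ and the predicted jump $\bar\theta^A_t=\Theta^1_t-\Theta^0_t$ (hence $\Theta_t+\bar\theta^A_t=\Theta^1_t$ in \eqref{gmodif}). Substituting into the splitting driver $\bar f^0=\bar g^0+\bar\theta^A\lambda$ coming from \eqref{gmodif}, the BSDE \eqref{splibsde} reduces to the scalar identity $\dot\Theta^0_t=-\bar f^0_t$, i.e. I must check
$$\frac{[\mu^0\Theta^0_t + (\Theta^1_t-\Theta^0_t)\kappa\lambda]^2}{(\sigma^0)^2\Theta^0_t + \kappa^2\lambda\Theta^1_t} - (\Theta^1_t-\Theta^0_t)\lambda = \frac{\mu^0}{\kappa}\Theta^0_t.$$
Clearing the denominator and expanding both sides as quadratics in $\Theta^0_t$ and $\Theta^1_t$, this holds if and only if $(\sigma^0)^2=\kappa(\mu^0-\kappa\lambda)$, which is exactly the constraint $\mu^0\kappa=(\sigma^0)^2+\kappa^2\lambda$ imposed in Assumption~\ref{constparams}. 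This algebraic collapse is the crux of the argument and the step I expect to demand the most care, since it is what forces the otherwise mysterious rate $\mu^0/\kappa$ in the exponent.

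Finally, I would assemble $\Theta$ via \eqref{recollement} to read off the claimed formula, and observe that both exponentials are continuous and strictly positive on the compact $[0,T]$; hence $\Theta$ is bounded away from $0$ by some $\delta>0$ and lies in $\mathcal{S}^\infty[0,T]$, while $\bar\beta\equiv 0$ is trivially $\mathrm{BMO}$. Together with the recursive verification of Theorem~\ref{existence}, this identifies the explicit candidate as a solution of the BSDE $(g^1,\Theta_T)$. The genuine obstacle is the pre-default identity: one must recognize that the nonlinear driver, which is quadratic in $\beta$ but here evaluated at $\beta=0$, degenerates exactly to the linear drift $\tfrac{\mu^0}{\kappa}\Theta^0$ under the model constraint.
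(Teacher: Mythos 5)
Your proposal is correct and follows essentially the same route as the paper: split the BSDE into the pre- and post-default regimes, take $\beta^0=\beta^1=0$, solve the post-default piece as the constant-coefficient complete-market case to get $\exp[-(\mu^1/\sigma^1)^2(T-t)]$, and use the constraint $\mu^0\kappa=(\sigma^0)^2+\kappa^2\lambda$ to collapse the quadratic pre-default driver to the linear rate $\mu^0/\kappa$ (the paper simplifies $g^{1,0}$ to $-\mu^0/\kappa-\theta^A\lambda$ and cancels the $\lambda\theta^A$ term, which is the same algebra as your ODE identity). No gaps.
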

\begin{proof} Let first recall that using the splitting approach developed by \cite{KhaLim11}, we can write the BSDE before and after the default. We obtain
\begin{eqnarray*}
\Theta_t&=&\Theta^0_t 1_{\{t<\tau^A\}}+\Theta^1_t(\tau^A)1_{\{\tau^A\le t\}},\\
g^1_t&=&g^{1,0}_t 1_{\{t<\tau^A\}}+g^{1,1}_t1_{\{\tau^A\le t\}},
\end{eqnarray*}  
where $\Theta^0$ and $\Theta^1$ satisfy the following dynamics:
\begin{equation*}
\begin{split}
&-{d\Theta^0_t\over \Theta^0_t}=g^{1,0}_t(\Theta^0_t,\theta^A_t,\beta^0_t)dt-\beta^0_tdW_t+\lambda\theta^A_tdt,\quad \Theta^0_T=1,\\
& -{d\Theta^1_t(l)\over \Theta^1_t(l)}=g^{1,1}_t(\Theta^1_t(l),0,\beta^1_t(l))dt-\beta^1_t(l)dW_t,\quad \Theta^1_T(l)=1
\end{split}
\end{equation*}
with
\begin{eqnarray*}
g^{1,0}_t(\Theta^0_t,\theta^A_t,\beta^0_t)=-\frac{\left[ \mu^0+\theta^A_t\kappa\lambda+\sigma^0\beta^0_t\right]^2}{(\sigma^0)^2+(1+\theta^A_t)\kappa^2\lambda}\quad \textrm{and} \quad g^{1,1}_t(\Theta^1_t(l),0,\beta^1_t(l))=-\frac{\left[ \mu^1+\sigma^1\beta^1_t(l)\right]^2}{(\sigma^1)^2},
\end{eqnarray*}
where $l\in \Delta_1$ and $\Theta^1_t(t)-\Theta^0_t=\theta^A_t\Theta^0_t$ (see proof of Theorem \ref{existence} for more details). Using Assumption \ref{constparams}, setting $\beta^1(l)=0$, we find that $g^{1,1}_t(\Theta^1_t,0,\beta^1_t(l))=-\left({\mu^1\over \sigma^1}\right)^2$. Since $\Theta^1_T(l)=1$, then $\Theta^1(l)=\Theta^1$ and we get:
$$\Theta^1_t=\exp\left[-{\left({\mu^1\over \sigma^1}\right)}^2(T-t)\right],\quad t\le T.$$
To find the solution of the first one BSDE, we set $\beta^0=0$ and from Assumption \ref{constparams} we obtain
 $$
 \mu^0\kappa=(\sigma^0)^2+\kappa^2\lambda.
 $$
We deduce $g^{1,0}_t(\Theta^0_t,\theta^A_t,\beta^0_t)=-{\mu^0\over \kappa} -\theta^A_t\lambda.$ Therefore we find that $\Theta^0$ satisfies the dynamics:
$$-{d\Theta^0_t\over \Theta^0_t}=-{\mu^0\over \kappa}dt,\quad \Theta^0_T=1.$$
Finally, we get $\Theta^0_t=\exp\left[-{\mu^0\over \kappa}(T-t)\right]$ and we find the expected result.
\end{proof}
\section*{Appendix}
\begin{Lemma}\label{jump}  Let consider $X$ and $Y$ two $\mathbb{G}$-predictable processes such that for $i\in\{A,B\}$,  $Y_{\tau_i}=X_{\tau_i}$.
Then,  $X_t=Y_t$ on $(\tau_i \ge t)$ a.s. Moreover,  if $X_{\tau_i}\le Y_{\tau_i}$, then $X_t\le Y_t$ a.s on $(\tau_i \ge t)$.
\end{Lemma}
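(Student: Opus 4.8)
The plan is to reduce both assertions to a single statement about nonnegative processes and then exploit the martingale property of the compensated default process $M^i$ together with the strict positivity of the intensity $\lambda^i$ before default. Fix $i\in\{A,B\}$ and set $D=X-Y$, which is $\G$-predictable with $D_{\tau^i}=0$ (resp. $D_{\tau^i}\le 0$). Since a pointwise bound is insensitive to composing with a bounded increasing function vanishing only at the origin, I introduce $\phi(u)=|u|/(1+|u|)$ and consider the bounded nonnegative $\G$-predictable process $Z=\phi(D)$ for the equality statement and $Z=\phi(D^+)$ for the inequality statement. In both cases $Z_{\tau^i}=0$, and it suffices to prove that any bounded nonnegative $\G$-predictable $Z$ with $Z_{\tau^i}=0$ satisfies $Z_t=0$ for $dt\,d\P$-almost every $(t,\omega)$ with $t\le \tau^i(\omega)$; indeed $Z=0$ there forces $D=0$ (resp. $D\le 0$) on $\{t\le\tau^i\}$, which is the claim.

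For the key statement, recall from the Representation Theorem that $M^i_t=H^i_t-\int_0^t\lambda^i_s\,ds$ is a $\G$-martingale; since $\lambda^i$ is bounded, $M^i$ is square integrable and, for bounded predictable $Z$, the process $Z\cdot M^i$ is a true martingale. Because $H^i$ has a single unit jump at $\tau^i$, for the fixed horizon $T$ one has $\int_0^T Z_s\,dH^i_s=Z_{\tau^i}\mathbf 1_{\{\tau^i\le T\}}$, with $Z$ evaluated at $\tau^i$ in the predictable sense, whence the compensation identity
\[
\E\big[Z_{\tau^i}\mathbf 1_{\{\tau^i\le T\}}\big]=\E\Big[\int_0^T Z_s\,\lambda^i_s\,ds\Big].
\]
First applying this with $Z_s$ replaced by $H^i_{s-}=\mathbf 1_{\{\tau^i<s\}}$ (which is left continuous, hence predictable, and null at $\tau^i$) gives $\E[\int_0^T \mathbf 1_{\{\tau^i<s\}}\lambda^i_s\,ds]=0$, i.e. $\lambda^i_s=0$ for $dt\,d\P$-a.e. $(s,\omega)$ with $s>\tau^i$; thus the intensity is supported on $\{s\le\tau^i\}$. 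Returning to our $Z$ with $Z_{\tau^i}=0$, the left-hand side above vanishes, so
\[
\E\Big[\int_0^T Z_s\,\lambda^i_s\,\mathbf 1_{\{s\le\tau^i\}}\,ds\Big]=0.
\]
The integrand is nonnegative, so $Z_s\lambda^i_s=0$ $\,dt\,d\P$-a.e. on $\{s\le\tau^i\}$; using that $\lambda^i$ is strictly positive before default (it is a positive process) and that $\{s=\tau^i\}$ is $dt\,d\P$-null, I conclude $Z_s=0$ $\,dt\,d\P$-a.e. on $\{s\le\tau^i\}$.

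Applying this to $Z=\phi(D)$ yields $X_t=Y_t$ and to $Z=\phi(D^+)$ yields $D^+_t=0$, that is $X_t\le Y_t$, for $dt\,d\P$-almost every $(t,\omega)$ with $t\le\tau^i$, which is precisely the two assertions (understood, as needed in the applications to the boundedness of $\theta^A,\theta^B,\epsilon^A,\epsilon^B$, in the $dt\,d\P$-a.e. sense; if $X$ and $Y$ are right continuous the exceptional set can be removed to obtain the pointwise statement on $\{t\le\tau^i\}$). The main obstacle is not the algebra but securing the two positivity facts that make the argument bite: that $\lambda^i$ vanishes after $\tau^i$, which I extract from the compensation identity itself, and that $\lambda^i$ is strictly positive on $\{s\le\tau^i\}$, which is exactly the positivity of the $\P$-intensity granted by the Representation Theorem; the bounded transform $\phi$ is introduced only to guarantee the integrability that makes $Z\cdot M^i$ a genuine martingale and all the expectations finite.
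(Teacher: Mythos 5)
Your proof is correct and rests on the same mechanism as the paper's own argument: the compensation identity $\mathbb{E}\left[\int_0^T Z_s\,dH^i_s\right]=\mathbb{E}\left[\int_0^T Z_s\lambda^i_s\,ds\right]$ for a predictable process vanishing at $\tau^i$, combined with the positivity of the intensity before default. Your reduction of the inequality case to the nonnegative process $(X-Y)^+$ and the bounded transform $\phi$ handling unbounded $X,Y$ are somewhat cleaner than the paper's indicator trick $V_t=Y_t\mathbf{1}_{\{X_t\le Y_t\}}$ and its terse ``the general case follows,'' but the route is essentially identical.
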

\begin{proof}  Assume  that $X$ and $Y$ are
bounded. If $X_{\tau_i}=Y_{\tau_i}$, then $\int_0^\infty
|X_t-Y_t|dH_t^i=0$ and $$0=\mathbb{E}\left( \int_0^\infty
|X_t-Y_t|dH_t^i\right)=\mathbb{E}\left[\int_0^\infty {|X_t-Y_t|} \lambda_t ^i
dt\right].  $$ Therefore, we have $X_t=Y_t$ on $(\tau^i\ge t)$.
Moreover, if   $X_{\tau_i}\le Y_{\tau_i}$, we consider the predictable process $V$ defined as
$V_t=Y_t1_{\{X_t\le Y_t\}}$. Then $V_{\tau^i}=Y_{\tau^i}$ and by using
the first part of the proof, we obtain $V_t=Y_t$ on $(\tau^i\ge t)$.
The general case follows.
\end{proof}

\end{document}